\documentclass[11pt]{article}
\usepackage{graphicx} 
\usepackage{fullpage}
\usepackage{amsfonts}
\usepackage{amsmath}
\usepackage{xcolor}
\usepackage[colorlinks=true, citecolor=blue]{hyperref}

\usepackage{float}
\usepackage{tcolorbox}
\usepackage{mathtools}
\usepackage{lipsum}
\usepackage{blindtext}
\usepackage{titlesec}
\usepackage{amssymb}
\usepackage{amsthm}
\usepackage{rotating}
\usepackage{subcaption}
\usepackage{caption}
\usepackage{svg}
\usepackage{verbatim}
\usepackage{enumerate}
\usepackage{float}
\usepackage{natbib}

\usepackage{ifthen}
\usepackage{color}
\usepackage{color-edits}
\addauthor{cp}{red}
\addauthor{jz}{blue}

\newcommand{\footremember}[2]{%
    \footnote{#2}
    \newcounter{#1}
    \setcounter{#1}{\value{footnote}}%
}

\title{Incentivizing Desirable Effort Profiles in Strategic Classification: The Role of Causality and Uncertainty}
\author{Valia Efthymiou\footremember{alley}{Massachussetts Institute of Technology. Email: valia554@mit.edu} 
\and Chara Podimata\footremember{somelse}{Massachussetts Institute of Technology. Email: podimata@mit.edu}
\and Diptangshu Sen\footremember{trailer}{Georgia Institute of Technology. Email: dsen30@gatech.edu}
\and Juba Ziani\footremember{somethingelse}{Georgia Institute of Technology. Email: jziani3@gatech.edu}
}

\begin{document}

\maketitle

\newcommand{\cF}{\mathcal{F}}
\newcommand{\cC}{\mathcal{C}}
\newcommand{\cP}{\mathcal{P}}
\newcommand{\cG}{\mathcal{G}}
\newcommand{\cN}{\mathcal{N}}
\newcommand{\cA}{\mathcal{A}}
\newcommand{\cH}{\mathcal{H}}
\newcommand{\cZ}{\mathcal{Z}}

\newcommand{\cI}{\mathcal{I}}
\newcommand{\R}{\mathbb{R}}
\newcommand{\E}{\mathbb{E}}
\newcommand{\bP}{\mathbb{P}} 
\newcommand{\calD}{\mathcal{D}}
\newcommand{\eps}{\varepsilon}
\newcommand{\Dxe}{\Delta x_{e}}
\newcommand{\calY}{\mathcal{Y}}

\newcommand{\caus}{\mathcal{C}}
\newcommand{\prox}{\mathcal{P}}

\newtheorem{obs}{Observation}
\newtheorem*{blob}{Observation}
\newtheorem{lem}{Lemma}
\newtheorem{rem}{Remark}
\newtheorem{clm}{Claim}
\newtheorem{conj}{Conjecture}
\newtheorem{cor}{Corollary}
\newtheorem{defn}{Definition}
\newtheorem{aspt}{Assumption}
\newtheorem{prop}{Proposition}
\newtheorem{thm}{Theorem}
\newtheorem{remark}{Remark}

\definecolor{ao(english)}{rgb}{0.0, 0.5, 0.0}
\newcommand{\myworries}[1]{\textcolor{red}{#1}}
\newcommand{\ds}[1]{\textcolor{blue}{[Sen: #1]}}
\newcommand{\ve}[1]{\textcolor{ao(english)}{[Valia: #1]}}
\newcommand{\jz}[1]{\textcolor{orange}{[Juba: #1]}}

\newcommand{\des}{\mathcal{D}}
\newcommand{\und}{\mathcal{U}}
\renewcommand{\th}{\widetilde{h}}
\newcommand{\eff}{e}
\newcommand{\effopt}{e^{\star}}
\newcommand{\Cost}{\textsf{Cost}}
\newcommand\numberthis{\addtocounter{equation}{1}\tag{\theequation}}
\newcommand{\bbR}{\mathbb{R}}
\newcommand{\C}{\mathbb{C}}
\newcommand{\I}{\mathbb{I}}
\newcommand{\z}{z}

\newcommand{\bh}{\bar{h}}
\newcommand{\bw}{\bar{w}}
\newcommand{\fstar}{f^\star}
\newcommand{\pistar}{\pi^\star}

\newcommand{\xhdr}[1]{\vspace{2mm} \noindent{\bf #1}}

\begin{abstract}
We study strategic classification in binary decision-making settings where agents can modify their features in order to improve their classification outcomes. Importantly, our work considers the causal structure across different features, acknowledging that effort in a given feature may affect other features. The main goal of our work is to understand \emph{when and how much agent effort is invested towards desirable features}, and how this is influenced by the deployed classifier, the causal structure of the agent's features, their ability to modify them,  and the information available to the agent about the classifier and the feature causal graph.

In the complete information case, when agents know the classifier and the causal structure of the problem, we derive conditions ensuring that rational agents focus on features favored by the principal. We show that designing classifiers to induce desirable behavior is generally non-convex, though tractable in special cases. We also extend our analysis to settings where agents have incomplete information about the classifier or the causal graph. While optimal effort selection is again a non-convex problem under general uncertainty, we highlight special cases of partial uncertainty where this selection problem becomes tractable. Our results indicate that uncertainty drives agents to favor features with higher expected importance and lower variance, potentially misaligning with principal preferences. Finally, numerical experiments based on a cardiovascular disease risk study illustrate how to incentivize desirable modifications under uncertainty. 
\end{abstract}

\section{Introduction}\label{sec:intro}

The widespread adoption of automated decision-making systems has brought significant attention to the issue of \emph{strategic classification}---a machine learning setting where individuals modify their features to secure favorable outcomes. This phenomenon is common in many domains: students enroll in preparatory courses to enhance their chances at college admission; job seekers tailor their resumes to align them with AI-based hiring algorithms, and individuals adjust their financial behaviors to improve credit scores. Some of these modifications reflect \emph{genuine efforts} to enhance one's qualifications or financial responsibility (e.g., acquiring new skills or consistently paying off loans), while others \emph{effectively game} the system, e.g., artificially boosting credit scores by opening new credit lines or strategically targeting specific keywords in algorithmic resume screening.

The distinction between \emph{desirable} and \emph{undesirable} modifications is not always clear-cut. While gaming is typically regarded as problematic, even genuine improvements can vary in how desirable they are. For example, in healthcare, encouraging patients to adopt preventive lifestyle changes (such as improved diet and regular exercise) may be preferable to medical interventions like medication or surgery for conditions such as obesity or hyperlipidemia. This highlights the nuanced nature of strategic classification: interventions that lead to real improvements may still not align with preferred or \emph{desirable} forms of improvement, where desirability is decided by the learner.

Further, a key challenge in strategic classification is that features are often interdependent. That is, modifications to one feature can have cascading effects on others. For example, increasing the number of credit cards an individual holds will also lower their credit utilization percentage, indirectly influencing their credit-worthiness. Similarly, reducing alcohol consumption or improving dietary habits can mitigate multiple health risks, such as obesity, hyperlipidemia, and cardiovascular diseases. These dependencies are best captured using a \emph{causal graph}, a framework that has been explored in a limited amount of prior work~\cite{kleinberg2019classifiers,miller2020strategic,shavit2020incentives,bechavod2022gaming} in the specific context of strategic classification. 

Our work builds upon this causal perspective on strategic classification, investigating how agents respond to decision-making systems and, in particular, when their strategic behavior aligns with desirable modifications. We adopt a framework in which a principal (e.g., a decision-maker or machine learning classifier) deploys a model, and agents (or individuals) strategically adjust their features to maximize their probability of receiving a favorable classification outcome. 

\vspace{5pt}

\noindent {{\bf Summary of Contributions.}} Our contributions are as follows: 

\emph{Our Model:}  In Section~\ref{sec:model}, we introduce our model to study incentivizing desirable efforts in the context of causal strategic classification. We build on previous work in two different ways: i) first, we introduce incomplete information to the study of causality in strategic classification, highlighting situations where an agent may not know the classifier, the causal graph, or both; and ii) we introduce a new notion of $\beta$-desirability which quantifies the extent to which agents invest effort in features deemed desirable by the principal.

\emph{Complete Information:}  In Section~\ref{sec:complete}, assuming agents have full knowledge of the classifier and the causal structure, we characterize the optimal effort profiles under various modification cost structures. We establish theoretical conditions guaranteeing investment in desirable effort profiles by rational agents. We also demonstrate that finding classifiers that induce desirable behavior is, in general, a non-convex problem.  
However, we show that when the principal chooses only \emph{one} desirable feature to incentivize, the problem of finding good classifiers becomes convex. We also provide a simple convexification heuristic  for when the number of desirable features is more than one,  ensuring that chosen classifiers do \emph{not} incentivize more than a certain amount of undesirable feature effort.

\emph{Incomplete Information:}  We extend our analysis to the setting where agents lack information about either the classifier or the causal graph (or both) in Section~\ref{sec:incomplete}. There, incomplete information is modeled as agents having Gaussian priors about the classifier and the causal graph. First, we show that in presence of uncertainty over both the classifier and the causal graph, choosing how to invest effort optimally is a non-convex problem for the agent. However, the problem becomes tractable under \emph{partial uncertainty}, and we provide a semi-closed-form characterization of optimal effort profiles for agents in partial uncertainty settings. %

\emph{Case study:} Finally, in Section~\ref{sec:experiments}, we complement our theoretical insights in the incomplete information setting with numerical experiments, basing our experimental setup on a medical study from previous work that predicts \emph{risk of cardiovascular disease} (CVDs) in adults. In the process, we provide insights into how to incentivize changes in desirable features under uncertainty.

\vspace{5pt}
\noindent {\bf Related Work:} 
Strategic classification, a machine learning setting where agents can manipulate or modify their own features to improve their outcomes, has been widely studied in recent years under a range of assumptions. This belongs to a broad class of problems in economics called principal-agent problems where agents act strategically in their self-interests which are often misaligned with the principal's interests~\cite{grossman1992analysis,ross1973economic,laffont2009theory,sappington1991incentives}. Early works in strategic classification focused on scenarios where agents manipulate their observable features solely to ``game'' a published classifier, thereby increasing their chances of a favorable label without genuinely improving underlying attributes (e.g., \cite{hardt2016strategic, braverman_randomness, dong2017strategicclassificationrevealedpreferences,zhang2022fairness,lechner2023strategic,chen2020learning,ahmadi2021strategic,sundaram2023pac}). Over time, the literature expanded to consider settings where agents make substantive changes to their features (i.e., effectively investing in real improvements) rather than relying on superficial modifications (e.g., \cite{kleinberg2019classifiers,bechavod2022information,bechavod2022gaming,shavit2020incentives,harris2021stateful}). In many cases, actual improvement involves investing effort which is not directly observable by the principal --- this again has similarities to the notion of moral hazard in insurance markets~\cite{pauly1968economics,arrow1968economics} and other general settings~\cite{arrow1978uncertainty}. There has also been interest in fairness in strategic classification (e.g., \cite{milli2019social,hu2019disparate,estornell2023group}) but this line of work is more distantly related to ours.

A useful tool to model manipulations as opposed to effective improvement is \emph{causality}. Causal modeling has been extensively studied in decision-making and machine learning, starting with~\cite{pearl2000causality}; see~\cite{kaddour2022causalmachinelearningsurvey} for a recent survey of causal machine learning. 
In the context of strategic classification, a few recent studies have incorporated causal modeling to account for interdependencies among features~\cite{kleinberg2019classifiers,shavit2020incentives,bechavod2022gaming, blum_game_improve,miller2020strategic,horowitz2023causalstrategicclassificationtale}.~\cite{miller2020strategic} highlights that in strategic classification, learning a classifier that incentivizes gaming as opposed to improvement is as hard as learning the underlying causal graph between features.
~\cite{shavit2020incentives} and~\cite{bechavod2022gaming} both focus on special cases of linear causal graphs, unlike our work that considers general cases of linear graphs.~\cite{blum_game_improve} explore strategic classification using a different structural framework known as ``manipulation graphs,'' where each agent has a fixed set of costly effort profiles that they may choose from, defining a bipartite graph between initial agent features and induced features after exerting effort---this is, again, a special case of causal graphs.~\cite{horowitz2023causalstrategicclassificationtale}, similarly to us, distinguish among causal, non-causal (or ``correlative''), and unobserved features. However, they focus on a different objective of maximizing predictive accuracy, we are interested in incentivizing ``desirable'' modifications.

Perhaps closest to our work is the work of~\cite{kleinberg2019classifiers}. Like us, they focus on general causal graphs; however, we highlight several major differences. First, we focus on \emph{classification} settings, while~\cite{kleinberg2019classifiers} focus on regression and scoring settings. Second, we highlight differences in our agent model, where our agents invest effort to pass the classifier with reasonably high probability, while agents in~\cite{kleinberg2019classifiers} \emph{always} exert effort to improve their score. Third, we note that our cost model is strictly more general: where~\cite{kleinberg2019classifiers} focuses on linear costs, our work considers general $\ell_p$-costs. Our results show that this choice of cost is important, noting a sharp distinction in agent behavior between the cases of $\ell_1$-cost and $\ell_p$ costs for $p > 1$. Finally, unlike \cite{kleinberg2019classifiers}, our study incorporates \emph{incomplete information}, where agents may not fully understand either the causal graph or the deployed classifier.

A closely related line of work investigates strategic classification under varying models of \emph{information} available to agents. In many real-world settings, agents may have \emph{incomplete information} about the classifier---either because it is too complex, or because the learner's model is proprietary, or the causal relationships governing feature interactions~\cite{bechavod2022information,cohen2024bayesianstrategicclassification,ghalme2021strategic}. Or, agents might misperceive the classifier due to behavioral biases~\cite{ebrahimi2024double}.
However, we are not aware of any work studying uncertainty on causal graphs, and to the best of our knowledge, we are the first work in the space of strategic classification to incorporate \emph{both} causal modeling and incomplete information in strategic classification.

\section{Model}\label{sec:model}
We consider a \emph{binary classification} problem, where there is an interaction between a principal and an agent. The principal, also known as the \emph{learner}, deploys a machine learning model or classifier. Then, the model assigns a (binary) score or a decision to the agents, based on their \emph{features}. Finally, agents \emph{respond} to the deployed classifier, potentially changing their features to obtain better outcomes, at a cost. We provide a detailed description of the model below.

\subsection{Agent Model: Features and Utilities}

Formally, let $\cF \in \mathbb{R}^d$ be the set of all features and $\calY = \{-1, +1\}$ be the set of labels. Each agent $k$ is defined by a pair $(x_k,y_k)$, where $x_k \in \cF$ is the agent's feature vector and $y_k$ is the agent's true label (i.e., their actual fit or qualification for the classification task at hand).

\xhdr{Causal modeling of feature interactions.} We adopt a \emph{causal} perspective on strategic classification, where different features can impact each other---i.e., a change in a feature $i$ (e.g., alcohol consumption or diet) that has a causal relationship with feature $j$ (e.g., cholesterol) will also induce a change in feature $j$. 

The chain of causality between the different features can be captured using a weighted directed graph $\cG = (\cF, \cA, w)$. We will henceforth call graph $\cG$ the \emph{causal graph}. We slightly abuse notation here and denote the set of nodes in the graph also by $\cF$ to indicate that each feature in $\cF$ corresponds to a node on $\cG$. $\cA$ represents the set of directed edges on $\cG$, where an edge from features $i$ to $j$ indicates that $i$ is causal for $j$. Finally, $w: \cA \rightarrow \R$ captures the weights of the edges. We make no assumption on the structure of $\cG$, other than the fact that it is a directed \emph{acyclic} graph.
\footnote{This is a standard assumption in the causal strategic classification~\cite{miller2020strategic,kleinberg2019classifiers} }
We represent all necessary information about the graph using an \emph{adjacency matrix} $A \in \R^{d \times d}$:

\begin{defn}[Adjacency Matrix of a Causal Graph]
\begin{align}\label{eq:adj_matrix}
       A_{ij} = \begin{cases}
       0, \quad \quad \quad \, \text{if $a_{ij} \notin \cA$,}\\
       w(a_{ij}), \quad \text{if $a_{ij} \in \cA$.}
       \end{cases}
\end{align}
\end{defn}

Here, if there is an edge $a_{ij} \in \cG$, then feature $i \in \cF$ \emph{causally affects} feature $j \in \cF$; in other words, by changing feature $i$, an agent is making an implicit change on feature $j$. Finally, the weight of edge $a_{ij}$, given by $ w(a_{ij})$ indicates that if the value of feature $i$ improves by a unit amount, then the value of the downstream feature $j$ will improve by $w(a_{ij})$\footnote{Throughout the paper, we assume that causal relationships are linear. This is another common assumption in the literature~\cite{kleinberg2019classifiers,shavit2020incentives}}. Importantly, \textit{edge weights can be negative}---an increase in causal feature $i$ might lead to a decrease in feature $j$. 

\paragraph{Desirable vs Undesirable Features} Importantly, we assume that the set of features $\cF$ is divided by the principal into two kinds of features: \textbf{desirable} features and \textbf{undesirable} features denoted by sets $\des$ 
and $\und$ 
respectively. Roughly speaking, \emph{desirable} features are those that the principal wants to incentivize the agent to change; e.g., in the health application of the Introduction, ``alcohol consumption'' would be a \emph{desirable} feature that the principal (for example, the agent's primary care physician) would like to see changed\footnote{Here, we would like to see the value of the feature lowered. Note that whether we want to increase or decrease the value of a feature has no bearing on whether it is desirable; only the fact that this is one of the target features we aim to change is.}. Undesirable features, on the other hand, can be considered as features that 
we would like to disincentivize agents from modifying directly: e.g., directly intervening to lower an agent's cholesterol level via medication such as statins may be less desirable than promoting lifestyle changes (lower alcohol consumption, improved diet, etc.) that will also lower their cholesterol.

\begin{remark}[Relationship between causality and desirability ]
In the traditional causality literature on strategic classification, features on a causal graph are usually classified into \textit{causal} and \textit{non-causal} or \emph{proxy} features. Causal features are those which affect some specified output variable of interest while proxy features are those which do not. A natural question is whether and how desirability of a feature relates with its causality. 

Proxy features are generally seen as undesirable to modify: this is because they do not change the root cause behind an agent's label, hence do not lead to true improvements in said label---this is often referred to as ``gaming'' the classifier. However, causal features may still be undesirable, even if they do not lead to gaming. For example, in our healthcare example, both diet and cholesterol levels are causal for predicting the risk of cardio-vascular disease---in particular, diet is directly causal for cholesterol levels, and cholesterol levels are directly causal for cardio-vascular disease. Yet, it may be preferable to incentivize sustainable interventions such as a better diet (prevention), rather than resorting to short-term fixes like cholesterol-lowering medications that may have significant side effects (treatment). 
\end{remark}

\subsection{Principal - Agent Interaction}

The principal deploys a \emph{linear classifier} denoted as $h_0 \in \R^{d}$ from its normal vector. Under this classifier, an agent with feature vector $x \in \R^{d}$ is assigned a score of $s(x) = h_0^\top x$. There is a pre-determined threshold $\tau \in \R$ and the classification decision $y$ for said agent is given by: 
\begin{align*}
    y(x) = \mathbf{1}\left[s(x) \geq \tau \right]. 
\end{align*}

\xhdr{Agent's Best Response.} The agent is assumed to have Gaussian priors $\Pi_h := \mathcal{N}(\mu_h, \Sigma_h)$ (where $\mu_h \in \R^{|\cF|}, \Sigma_h \in \R^{|\cF|\times |\cF|}$) over the principal's deployed classifier $h_0$ and $\Pi_{\C} := \mathcal{N}(\mu_w, \Sigma_w)$ (where $\mu_w \in \R^{|\cA|}, \Sigma_w \in \R^{|\cA| \times |\cA|}$) over the edge weights of the causal graph $\cG$\footnote{Gaussian priors are frequently used to model incomplete information~\cite{kong2020information,elzayn2019price}}. The agent is always assumed to know the topology of the causal graph. We explore two kinds of information structures: 
\begin{enumerate}
    \item The \textit{Complete Information setting} where the agent fully knows the classifier $h_0$, i.e, $\mu_h = h_0$ and $\Sigma_h = \bf{0}$ and the weights of all edges of $\cG$, i.e., $\mu_w = w$ and $\Sigma_w = \bf{0}$. See Section~\ref{sec:complete} for the complete information setting. 
    \item The \textit{Incomplete Information setting} where i) there is uncertainty over the principal's classifier $h_0$, i.e., $\mu_h$ may differ from $h_0$ (bias) and $\Sigma_h \neq \bf{0}$ (variance), and/or ii) there is uncertainty over the edge weights of the causal graph $\cG$, i.e., $\mu_w$ may differ from $w$ (bias) and $\Sigma_w \neq \bf{0}$ (variance). See Section \ref{sec:incomplete} for the incomplete information setting. 
\end{enumerate}
If $y(x) = 0$, the agent also knows the amount $\alpha > 0$ by which she fell short of passing the classifier; e.g., in a loan approval setting, an agent may know their current credit score and the threshold credit score that the bank uses to decide who gets approved for a loan. They might, however, not fully understand how said credit score is calculated in the first place.

The agent's goal is to obtain a positive classification outcome, i.e. $y(x) = 1$ (``to pass the classifier''). Therefore, she attempts to \emph{change} her feature vector $x$ by investing some \emph{exogenous effort} $\eff \in \R^{|\cF|}$, which we call the agent's \emph{exogenous effort profile}. Importantly, this exogenous effort profile, in our model, is exerted directly on features\footnote{This is without loss of generality, by a simple revelation principle type of argument}: i.e., $e$ is a vector quantifying how much an agent changes their features \emph{directly}. However, remember that exerting exogeneous effort on a subset of the features $\cF$ can also lead to features, particularly those that no effort was exerted on, to change \emph{endogenously}, due to causality. We call this the \emph{induced} or \emph{endogenous feature change}.

Now, suppose the agent's modified feature vector after investing effort is given by $x'(\eff)$. We define the net change in features due to effort $\eff$ as:
\[
      \Delta x(\eff) = x'(\eff) - x,
\]
where $\Delta x(\eff)$ can be computed using the structure of the causal graph $\cG$. This effort comes at a cost, modeled through a \emph{cost function} 
\[
\Cost: \R^{d} \to \R_{\geq 0},
\]
where $\Cost(e)$ is the cost incurred to perform effort $\eff$. We mainly focus on (weighted) $\ell_p$-norm cost functions for all $p \geq 1$. Formally, 
\begin{align}\label{eq:cost}
             &\textsf{Cost}(\eff) = \left(\sum_{f \in \cF} c_f\left \vert \eff_f \right \vert^p \right)^{1/p},~\text{where}~c_f > 0~~\forall f \in \cF,
\end{align}
where $c_f$ represents a cost multiplier associated with investing unit exogenous effort into feature $f$.

To express $\Delta x(\eff)$ in terms of the causal graph $\cG$, we define the \emph{contribution matrix} of $\cG$, which allows us to quantify how much effort profile $\eff$ maps to a total change in features, including both exogenous effort and induced feature changes:

\begin{defn}\label{def:cont_matrix}
The contribution matrix $\C \in \R^{d \times d}$ associated with causal graph $\cG$ is: 
\begin{align*}
      &\C_{ii} = 1 \quad \forall~i \in [d], \quad \text{and}\\
      &\C_{ij} = \sum_{p \in \mathcal{P}_{ij}} \omega(p) \quad \forall~ i,j \in [d],~i \neq j,
\end{align*}
where $\mathcal{P}_{ij}$ is the set of all directed paths from node $i$ to node $j$ on $\cG$ and $\omega(p)$ is the weight of path $p \in \mathcal{P}_{ij}$ with $\omega(p) = \prod_{a \in \cA, a \subset p}w(a)$. 
\end{defn}

Note that in causal graphs, a given feature $i$ may affect another feature $j$ directly---in which case there is an edge of non-zero weight from $i$ towards $j$, but also indirectly, through other features. For example, there may be a path from feature $i$ to feature $j$ through intermediary features $i_1,\ldots, i_k$, where $i \to i_1 \to i_2 \to \ldots \to i_k \to j$\footnote{$x \to y $ indicates that $x$ is directly causal for $y$.}. Therefore, investing effort in feature $i$ will lead to modifications of not only feature $i$, but also all features $i_1, \ldots, i_k,~j$ that it is directly or indirectly causal for. The contribution matrix $\C$ quantifies the impact of any given feature $i$ on any other feature $j$, \emph{even when the features have an indirect causal relationship}.

\begin{obs}\label{obs:comoute_contri}
Given matrix $A$ as defined in Equation~\eqref{eq:adj_matrix}, the contribution matrix is given by 
\[
\C = \sum_{k=0}^{|\cF|} A^k,
\]
and therefore can be computed in polynomial time in $|\cF|$. Intuitively, $A^k$ represents the contribution of all paths of size exactly $k$; because our graph is acyclic, the longest path must have length at most $|\cF|$. 
\end{obs}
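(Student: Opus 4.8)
The plan is to prove the identity $\C = \sum_{k=0}^{|\cF|} A^k$ entrywise, by first interpreting powers of the adjacency matrix combinatorially as weighted sums over directed walks, and then invoking acyclicity to (i) identify walks with simple paths and (ii) truncate the sum at $|\cF|$. The running-time claim then follows immediately from the formula.

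First I would establish the standard fact that for every $k \geq 1$ and all $i,j \in [d]$, the entry $(A^k)_{ij}$ equals $\sum_{p} \omega(p)$, where the sum ranges over all directed walks $p = (i = v_0 \to v_1 \to \cdots \to v_k = j)$ using exactly $k$ edges of $\cG$, and $\omega(p) = \prod_{\ell=0}^{k-1} w(a_{v_\ell v_{\ell+1}})$ is the product of the traversed edge weights (using the convention from Equation~\eqref{eq:adj_matrix} that $A_{uv} = 0$ whenever $a_{uv} \notin \cA$, so that ``walks'' traversing a non-edge contribute $0$). This is a routine induction on $k$: the base case $k = 1$ is exactly the definition of $A$, and the inductive step writes $(A^{k+1})_{ij} = \sum_{m \in [d]} (A^k)_{im} A_{mj}$, which partitions the walks of length $k+1$ from $i$ to $j$ according to their penultimate vertex $m$ and multiplies the weight of the length-$k$ prefix by the weight of the final edge.

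Next I would use that $\cG$ is a DAG. Acyclicity implies that any directed walk that repeats a vertex would contain a directed cycle, hence every directed walk in $\cG$ is a simple path; thus for $i \neq j$ the walks counted by $(A^k)_{ij}$ are precisely the paths in $\mathcal{P}_{ij}$ of edge-length $k$. Since a simple path in a graph on $d = |\cF|$ vertices has at most $|\cF| - 1$ edges, $(A^k)_{ij} = 0$ for every $k > |\cF|$, so summing over $k = 0, 1, \dots, |\cF|$ counts each path of $\mathcal{P}_{ij}$ exactly once (the $k = 0$ term $A^0 = I$ contributes nothing off-diagonal), giving $\sum_{k=0}^{|\cF|} (A^k)_{ij} = \sum_{p \in \mathcal{P}_{ij}} \omega(p) = \C_{ij}$, which matches Definition~\ref{def:cont_matrix}. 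For the diagonal, a term $(A^k)_{ii}$ with $k \geq 1$ would correspond to a closed walk from $i$ to itself, i.e.\ a cycle, which is impossible; hence $(A^k)_{ii} = 0$ for all $k \geq 1$ and $\sum_{k=0}^{|\cF|}(A^k)_{ii} = (A^0)_{ii} = 1 = \C_{ii}$, completing the entrywise identity.

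Finally, the polynomial-time claim is immediate: maintaining the running sum $S_m = \sum_{k=0}^{m} A^k$ together with the current power $A^m$, and updating $A^{m+1} = A \cdot A^m$ and $S_{m+1} = S_m + A^{m+1}$, requires $|\cF|$ matrix multiplications and additions of $d \times d$ matrices, i.e.\ $O(|\cF|^4)$ arithmetic operations (improvable via repeated squaring or fast matrix multiplication, though this is not needed). I do not expect any genuine obstacle in this argument; the only points that need care are the clean bookkeeping in the walk-counting induction and the two distinct uses of acyclicity — dropping non-simple walks so that matrix powers count exactly the paths of $\mathcal{P}_{ij}$, and bounding path length by $|\cF|$ so that the finite sum suffices.
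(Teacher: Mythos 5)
Your proposal is correct and follows essentially the same route as the paper's proof: an induction showing that $(A^k)_{ij}$ sums the weights of length-$k$ directed paths from $i$ to $j$, acyclicity to bound path lengths by $|\cF|$, and a count of matrix multiplications for the polynomial-time claim. Your treatment is if anything slightly more careful (explicitly distinguishing walks from simple paths and handling the diagonal), but there is no substantive difference in approach.
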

\begin{proof}
This is a well-known result, but we provide a proof in Appendix~\ref{sec:app_sup} for completeness.
\end{proof}
\noindent 
Given the contribution matrix $\C$ and exogenous effort profile $\eff$, the net change in features $\Delta x(\eff)$ will be given by:
\begin{align}\label{eq:delta_x}
       \Delta x(\eff) = \C^\top \eff.
\end{align}
To simplify notation, we drop the dependence on $e$ and write $\Delta x$ when clear from context. We demonstrate how to construct $\C$ and how to compute $\Delta x$ given an effort profile $\eff$ through a toy example in Figure~\ref{fig:schematic_C}.
\begin{figure}[hbt!]
    \centering
    \includegraphics[width=.85\textwidth]{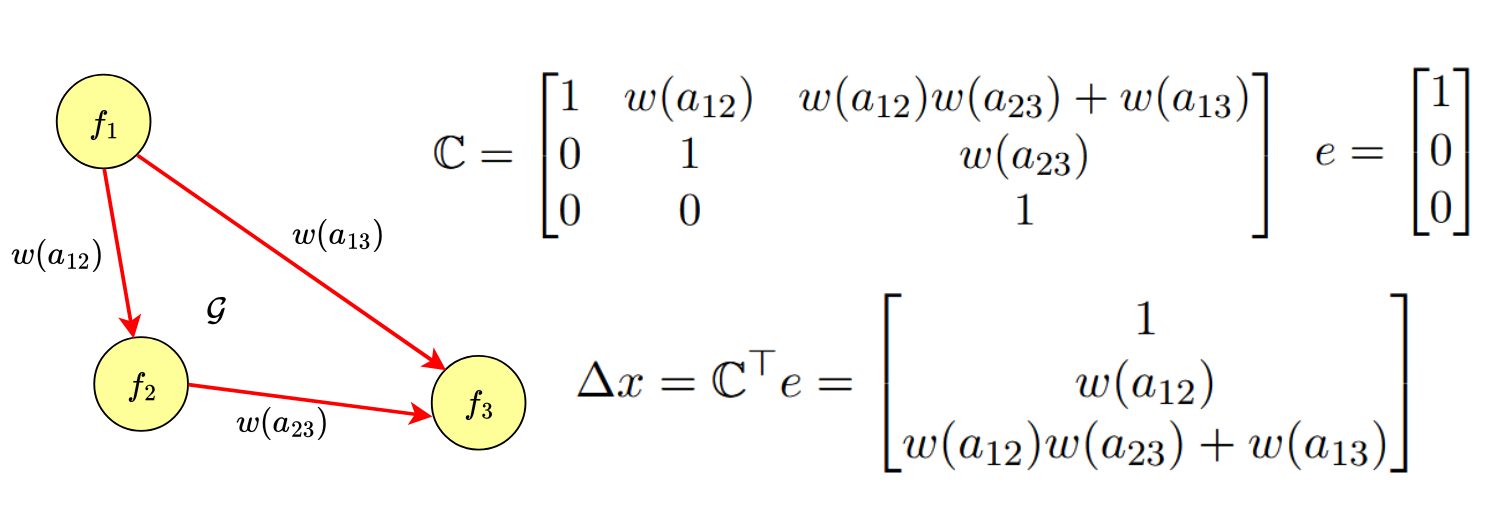}
    \caption{Consider a simple causal graph $\cG$ with $|\cF| = 3$. Feature $1$ directly affects features $2$ and $3$ and feature $2$ directly affects feature $3$. Feature $1$ also indirectly affects feature $3$ through the path $1 \to 2 \to 3$. The contribution matrix $\C$ captures both of these effects.}
    \label{fig:schematic_C}
\end{figure}

The agent chooses their optimal effort profile $\effopt$ that ensures that $y(x'(\effopt)) = +1$ with probability at least $1-\delta$, while incurring the minimum possible cost. We call the effort profile $\effopt(\Pi_h, \Pi_{\C})$ the agent's \emph{best response} to priors ($\Pi_h, \Pi_{\C}$). Formally: 
\begin{align*}
   \effopt (\Pi_h, \Pi_{\C}) = \arg \min_{\eff} \quad &\Cost(\eff) \\
    \text{s.t.} \quad &\bP_{h \sim \Pi_h, \C \sim \Pi_{\C}}\left[h^{\top}\C^{\top}\eff \geq \alpha \right] \geq 1-\delta \numberthis{\label{opt:agent_br}}.
\end{align*}
Note that the constraint ensures that an agent passes the classifier with probability at least $1-\delta$, measured with respect to their prior on the causal graph and on the classifier. In particular, if they exert effort profile $\eff$, their features change by $\C^\top \eff$, so their score changes by $h^{\top}\C^{\top}\eff$, and they have to improve their score by $\alpha$ to make a positive classification outcome. 

\subsection{Incentivizing Effort towards Desirable Features} 

We are interested in the \emph{properties} of the effort profile that the agent exerts as a result of best-responding to the principal's classifier, and in particular understanding the amount of effort they exert towards desirable features in set $\des$ and undesirable features $\und$. The goal is to promote effort towards desirable features and away from undesirable features, i.e. to understand \textit{when is it in the agent's best interest to invest more effort into desirable versus undesirable features?}. 

To do so, we define a notion of $\beta$-desirability, that measures the ratio of investment in features in $\des$ vs $\und$: 

\begin{defn}[$\beta$-desirable effort profiles]\label{defn:good}
Given $0 < \beta \leq 1$, an exogenous effort profile $\eff$ is said to be $\beta$-``desirable" if: 
\[
    \|\eff_\des\|_2 \geq \beta \|\eff\|_2, 
\]
i.e., the magnitude of effort made towards desirable features is at least a fraction $\beta$ of the magnitude of total effort. 
\end{defn}

From the principal's point of view, incentivizing $\beta$-desirable effort profiles is not straightforward since agents are strategic, and may prefer undesirable features if they are low-cost to manipulate. 

\section{The Complete Information Setting} \label{sec:complete}

In the complete information setting, the agent knows precisely the true classifier $h_0$ deployed by the principal---equivalently, her prior $\Pi_h$ satisfies $\bh = h_0$ (the mean belief matches the true classifier) and the covariance is given by $\Sigma_h = \bf{0}$ (there is no uncertainty). She also fully knows the causal graph $\cG$---i.e., $\bw = w$ and $\Sigma_w = \bf{0}$. Therefore, in the complete information case, the deterministic tuple ($h_0, \C$) is enough to characterize agent beliefs. 

When the agent has no uncertainty about either the classifier or the causal graph, the agent's optimization problem ~\eqref{opt:agent_br} can be written as:
\begin{align*}
   \effopt (h_0, \C) = \arg \min_{\eff} &\quad \Cost(\eff) \\
    \text{s.t.}~~&\quad (\C h_0)^{\top}\eff \geq \alpha  \numberthis{\label{opt:agent_fullinfo}}.
\end{align*}
In other words, the agent must find the minimum-cost effort profile that passes the (known) classifier $h_0$. We first prove a technical result that helps further simplify the complete information setting. Roughly speaking, the proposition states that for the complete information case, it suffices to only focus on non-negative efforts for all features.

\begin{prop}\label{prop:y_pos}
For the cost functions defined in \eqref{eq:cost}, we can assume that: 
$(\C h_0)_f \geq 0$ and that $(e)_f \geq 0~~\forall f \in \cF$,  without loss of generality.
\end{prop}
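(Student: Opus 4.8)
The plan is to establish the two claims separately, showing in each case that replacing an ``offending'' coordinate with a non-negative version can only help (or at least not hurt) both the cost and the feasibility constraint.

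\textbf{Step 1: sign-flipping the classifier direction.} First I would argue that we may assume $(\C h_0)_f \geq 0$ for every $f \in \cF$. The idea is to define a diagonal sign matrix $S$ with $S_{ff} = \mathrm{sign}((\C h_0)_f)$ (taking $S_{ff} = 1$ when the entry is zero) and consider the change of variables $\eff \mapsto S\eff$. Under an $\ell_p$ cost of the form \eqref{eq:cost}, $\Cost(S\eff) = \Cost(\eff)$ because the cost depends only on $|\eff_f|$, and $c_f > 0$ is unchanged. Moreover $(\C h_0)^\top (S\eff) = (S\,\C h_0)^\top \eff$, and $S\,\C h_0$ has all non-negative entries by construction, so the constraint $(\C h_0)^\top \eff \geq \alpha$ in \eqref{opt:agent_fullinfo} becomes $(S\,\C h_0)^\top \eff \geq \alpha$ with a non-negative coefficient vector. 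Since $S$ is an involution, this is a bijection between feasible effort profiles of the original problem and feasible effort profiles of the transformed problem, preserving cost; hence the optimal values coincide and we may work with the problem in which $\C h_0$ has non-negative coordinates. (Strictly, this shows the \emph{reduction} is w.l.o.g.; the direction of effort in those coordinates then just gets relabeled.)

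\textbf{Step 2: restricting to non-negative effort.} Now assume $(\C h_0)_f \geq 0$ for all $f$. Given any feasible $\eff$, consider $\eff^+$ defined coordinate-wise by $(\eff^+)_f = |\eff_f|$. Then $\Cost(\eff^+) = \Cost(\eff)$ since each term $c_f|\eff_f|^p$ is unchanged. For feasibility, $(\C h_0)^\top \eff^+ = \sum_f (\C h_0)_f |\eff_f| \geq \sum_f (\C h_0)_f \eff_f = (\C h_0)^\top \eff \geq \alpha$, where the inequality uses $(\C h_0)_f \geq 0$. So $\eff^+$ is feasible with the same cost, proving that an optimal solution exists with all non-negative coordinates. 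Combining Steps 1 and 2 gives the proposition. I would also remark that $\beta$-desirability (Definition~\ref{defn:good}) is invariant under both the sign change and the absolute-value map, since it depends only on the norms $\|\eff_\des\|_2$ and $\|\eff\|_2$; this justifies that the reduction does not distort the quantity of interest.

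\textbf{Main obstacle.} The calculations themselves are routine; the subtle point — and the thing I would be most careful about — is the precise meaning of ``without loss of generality'' for the first claim. Replacing $\C h_0$ by $S\,\C h_0$ changes the problem data, not just the solution, so one must phrase it as: solving the transformed problem and then applying $S$ recovers an optimal solution of the original problem, with the caveat that effort in coordinates where $(\C h_0)_f < 0$ is exerted in the negative direction. I would make sure the statement and its use downstream (e.g.\ when later sections assume non-negativity) are consistent on this point, and note that the argument genuinely relies on the separable structure and monotonicity of the $\ell_p$ cost in $|\eff_f|$ — it would fail for cost functions that are not coordinate-wise even or that couple the signs of different coordinates.
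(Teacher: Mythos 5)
Your proof is correct. It differs in mechanics, though not in underlying idea, from the paper's argument: the paper fixes an optimal profile $\effopt$ and proves sign alignment by an exchange argument --- if $(\C h_0)_f = 0$ then $\effopt_f = 0$, and if $(\C h_0)_f < 0$ while $\effopt_f > 0$ (or symmetrically $(\C h_0)_f>0$ while $\effopt_f<0$), zeroing out that coordinate preserves feasibility and strictly lowers the cost, a contradiction; hence $(\C h_0)_f\,\effopt_f \geq 0$ for every $f$, after which the sign flip on the coordinates with $(\C h_0)_f<0$ is invoked implicitly to justify the ``without loss of generality.'' You instead never argue about an optimum: your diagonal sign matrix $S$ gives a cost-preserving bijection between the feasible sets of the original problem and of the problem with data $S\,\C h_0 \geq 0$, and your absolute-value map $\eff \mapsto \eff^{+}$ shows that, once the data are non-negative, restricting to $\eff \geq 0$ loses nothing. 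Both arguments rest on the same two facts --- the $\ell_p$ cost in \eqref{eq:cost} is coordinate-wise even, and coordinates where effort opposes $(\C h_0)_f$ can only hurt --- but your version is more explicit about what ``w.l.o.g.'' means (an equivalence of problems, with solutions recovered via $S$), and your remark that $\beta$-desirability is invariant under $S$ and under $\eff \mapsto \eff^{+}$ usefully confirms that the reduction is harmless for the downstream results; the paper's version, in exchange, yields the slightly more informative structural fact that any optimum already satisfies $(\C h_0)_f\,\effopt_f \geq 0$ coordinate-wise.
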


The proof of the proposition can be found in Appendix~\ref{app:y_pos}.

\paragraph{Computation of the Optimal Effort Profile.} Using Proposition \ref{prop:y_pos}, we can rewrite optimization problem~(\ref{opt:agent_fullinfo}) under the complete information setting as follows: 
\begin{align}\label{opt:full_info_final}
     \effopt(h_0, \C) = \arg \min_{\eff \geq 0} \quad \textsf{Cost}(\eff)  \quad s.t. \quad (\C h_0)^{\top}\eff \geq \alpha, 
\end{align}
where $\C h_0 \geq 0$. Program~(\ref{opt:full_info_final}) is a convex optimization problem: the objective is convex for our cost functions, and all constraints are linear. As such, this program can be solved efficiently. 

\subsection{Characterization of Optimal Effort Profiles}

We now investigate the structural properties of the optimal effort profile $\effopt$ for the cost function in Eq.~\eqref{eq:cost}, individually for the cases of i) $p = 1$ and ii) $p > 1$. 
The contributions of this section are two-fold: first, we show that the \emph{structure} of the optimal effort profile largely depends on the cost function that an agent optimizes over. In particular, we prove that in the case of the $\ell_1$ cost function, agents only invest effort into \emph{one} feature when best-responding to the classifier $h_0$ (Lemma~\ref{lem:linear_onefeature}). Instead, for general $\ell_p$ costs with $p > 1$, the agents' effort profile is \emph{significantly} more diversified and covers any non-trivial dimension i.e., one with contribution $(\C h_0)_f > 0$ (Lemma~\ref{lem:l2_effort}). Second, we derive conditions under which effort profiles that put a significant amount of weight on \emph{desirable} features are incentivized, providing insights as to how to set a classifier $h_0$ to incentivize effort exertion on \emph{desirable} features (Theorems~\ref{thm:l1_good} and ~\ref{thm:l2_good}).

\subsubsection{Case 1 ($\ell_1$-norm costs)} 
For $p=1$, we have the following optimization problem for the agent: 
\begin{align}\label{opt:linear}
     \min_{\eff \geq 0} \quad c^{\top}\eff \quad \text{s.t.} \quad 
     (\C h_0)^{\top}\eff \geq \alpha. 
\end{align}

We first show that the case where $\alpha \leq 0$ is insignificant and hence we will only focus on $\alpha > 0$. Indeed, $\alpha \leq 0$ indicates that the agent has already passed the classifier (therefore has a non-positive distance from the decision boundary) and hence, they should not invest any effort into modifying features. Formally, 
\begin{prop}\label{prop:alpha_neg}
When $\alpha \leq 0$, then $\effopt = 0$ which means that the agent does not need to invest any effort to change her features. 
\end{prop}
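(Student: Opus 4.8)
The statement is essentially immediate from feasibility of the zero effort profile, so the plan is just to make the two relevant observations explicit. First I would check that $\eff = 0$ satisfies the single constraint of program~\eqref{opt:linear} (equivalently~\eqref{opt:agent_fullinfo}): since $(\C h_0)^{\top} 0 = 0$ and we are in the regime $\alpha \leq 0$, the inequality $(\C h_0)^{\top}\eff \geq \alpha$ holds with $\eff = 0$. Hence $\eff = 0$ is in the feasible set.

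Next I would argue optimality. By definition of the cost function in~\eqref{eq:cost}, $\Cost(\eff) \geq 0$ for every $\eff$, and $\Cost(0) = 0$. Therefore the zero effort profile attains the global minimum of the objective over the (nonempty) feasible set, so $\effopt = 0$ is an optimal solution. To conclude that it is \emph{the} optimal solution as the statement asserts, I would additionally note that because $c_f > 0$ for all $f \in \cF$, we have $\Cost(\eff) > 0$ whenever $\eff \neq 0$; thus $\eff = 0$ is the unique minimizer.

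There is no real obstacle here: the only thing to be slightly careful about is phrasing, namely distinguishing ``$\eff = 0$ is optimal'' from ``$\eff = 0$ is the unique optimum,'' which is handled by the strict positivity of the cost multipliers $c_f$. The same argument applies verbatim to the general $\ell_p$ program~\eqref{opt:full_info_final} and to the incomplete-information program~\eqref{opt:agent_br} (where the constraint becomes $\bP[\,\cdot\,] \geq 1-\delta$, trivially satisfied by $\eff = 0$ when $\alpha \leq 0$), so I would remark that the conclusion is not specific to the $\ell_1$ case.
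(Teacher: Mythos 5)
Your proof is correct and follows essentially the same argument as the paper: $\eff = 0$ is feasible when $\alpha \leq 0$ and attains the minimum possible cost of $0$, hence it is optimal. The added remarks on uniqueness (via $c_f > 0$) and on the argument carrying over to the general $\ell_p$ and incomplete-information programs are accurate, though not needed for the stated claim.
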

\begin{proof}
Note that the objective value is greater than or equal to zero since $c_f > 0$ for all $f \in \cF$ and $\eff \geq 0$. But, since $\alpha \leq 0$, $\eff= 0$ is feasible to the above problem and it also achieves an objective value of $0$. Therefore, $\eff = 0$ must be optimal.  
\end{proof}

\noindent 
We next focus entirely on the case where $\alpha > 0$. Our first result is characterizing the structure of the agent's best response. 

\begin{lem}\label{lem:linear_onefeature}
When $\alpha > 0$, there exists an optimal effort profile for the agent in which she needs to modify \textbf{exactly one feature} to pass the classifier $h_0$. The optimal feature to modify $f^*$ is the one which offers the best ratio of contribution to cost, i.e., 
\[
     \fstar \in \arg\max_{f \in \cF}~\frac{(\C h_0)_f}{c_f}, 
\]
and the optimal amount of effort to be invested into that feature is given by: 
\[
     \eff_{\fstar} = \frac{\alpha}{(\C h_0)_{\fstar}}.
\]
\end{lem}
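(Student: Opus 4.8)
The plan is to reduce Program~\eqref{opt:linear} to a trivial one-dimensional allocation problem by a change of variables, and then read off the optimum directly. Since $c_f > 0$ for every $f \in \cF$, the map $\eff \mapsto u$ with $u_f := c_f \eff_f$ is a bijection of the nonnegative orthant onto itself. Under this substitution the objective $c^\top \eff$ becomes $\sum_{f \in \cF} u_f$, and the constraint $(\C h_0)^\top \eff \geq \alpha$ becomes $\sum_{f \in \cF} r_f u_f \geq \alpha$, where $r_f := (\C h_0)_f / c_f$. By Proposition~\ref{prop:y_pos} we may take $(\C h_0)_f \geq 0$, so $r_f \geq 0$ for all $f$; and since $\alpha > 0$, feasibility forces $r_{\max} := \max_{f \in \cF} r_f > 0$. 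Fix any $\fstar \in \arg\max_{f \in \cF} r_f = \arg\max_{f \in \cF} (\C h_0)_f/c_f$.

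Next I would establish matching lower and upper bounds on the optimal value. For the lower bound, any feasible $u \geq 0$ satisfies $\alpha \leq \sum_f r_f u_f \leq r_{\max} \sum_f u_f$, hence $\sum_f u_f \geq \alpha / r_{\max}$; translating back, every feasible $\eff$ has cost at least $\alpha/r_{\max}$. For the upper bound, the point $u_{\fstar} = \alpha / r_{\fstar}$, $u_f = 0$ for $f \neq \fstar$, is feasible with objective value exactly $\alpha/r_{\max}$; in the original variables this is the single-feature profile $\eff_{\fstar} = u_{\fstar}/c_{\fstar} = \alpha/(c_{\fstar} r_{\fstar}) = \alpha/(\C h_0)_{\fstar}$ and $\eff_f = 0$ for $f \neq \fstar$. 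This profile therefore attains the optimum, which is precisely the claimed characterization.

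There is no substantive obstacle here; the only points requiring care are (i) justifying that $c_f > 0$ makes $\eff \mapsto (c_f \eff_f)_f$ a genuine bijection of $\R_{\geq 0}^{|\cF|}$ onto itself, so the reformulation is exact, and (ii) noting that the degenerate case $\C h_0 = 0$ cannot occur when the agent is actually able to pass the classifier, so $r_{\max} > 0$ is legitimate. I would also add one sentence clarifying that the lemma asserts only the \emph{existence} of a one-feature optimal profile: when the $\arg\max$ is not unique, other optima exist (e.g.\ splitting effort across the tied maximizers), but selecting any single maximizer suffices. An essentially equivalent alternative would be to argue via vertices of the LP feasible region—each vertex of $\{\eff \geq 0 : (\C h_0)^\top \eff \geq \alpha\}$ is supported on at most one coordinate—but I find the change-of-variables argument cleaner and more self-contained.
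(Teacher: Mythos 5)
Your proof is correct, and it takes a genuinely different route from the paper's. The paper argues via polyhedral theory and LP duality: it first shows the feasible region $\{\eff \geq 0 : (\C h_0)^\top \eff \geq \alpha\}$ contains no line and hence has a vertex, uses feasibility of the dual to rule out unboundedness so that a vertex is optimal, observes that every vertex is supported on a single coordinate, and then invokes complementary slackness on the dual constraint to identify the supported feature as a maximizer of $(\C h_0)_f/c_f$. You instead rescale by $u_f = c_f \eff_f$ and close the argument with the elementary sandwich $\alpha \leq \sum_f r_f u_f \leq r_{\max}\sum_f u_f$, matched by the explicit single-feature point; this is shorter, self-contained, and avoids any appeal to LP structure or duality (your parenthetical vertex-based alternative is essentially the paper's argument). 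What the paper's machinery buys in exchange is slightly more structural information: the vertex argument shows that \emph{some optimal basic solution} is always one-sparse regardless of how one selects it, and complementary slackness directly certifies that the feature supported by any such corner optimum must lie in the arg\,max, whereas your argument establishes the (equally sufficient for the lemma as stated) existence claim by exhibiting the optimum. Your two points of care are well placed: $r_{\max} > 0$ is exactly the feasibility condition under $\alpha > 0$, a hypothesis the paper leaves implicit as well.
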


The proof of the lemma can be found in Appendix~\ref{app:linear_onefeature}.

\paragraph{Conditions for $\beta$-desirability.} The previous lemma provides key insights into the optimal effort profile of an agent. It shows that the agent has to invest effort into a single feature which offers her the best ``bang-per-buck". Let $\mathcal{I}^*$ be the set of all such features, i.e.:
\[
    \cI^* = \left\{\fstar:~\fstar \in \arg\max_{f \in \cF}\frac{(\C h_0)_f}{c_f} \right\}. 
\]
Therefore, by Definition~\ref{defn:good}, if $\cI^* \cap \und = \emptyset$, then the agent's best response is guaranteed to be a $\beta$-desirable effort profile. We now formalize this idea and present the main result of this section:
\begin{thm}\label{thm:l1_good}
If there exists a desirable feature $\fstar$ ($\fstar \in \des$) such that: 
\[
      \max_{f \in \mathcal{\und}} \frac{(\C h_0)_f}{c_f} < \frac{(\C h_0)_{\fstar}}{c_{\fstar}},
\]
then the agent's best response is always a $\beta$-desirable effort profile for any $\beta \in (0, 1]$. 
\end{thm}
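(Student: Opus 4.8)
The plan is to combine the single–feature characterization of the agent's best response from Lemma~\ref{lem:linear_onefeature} with the hypothesis in order to conclude that, in an optimal effort profile, \emph{all} effort lands on desirable features. The first step is purely definitional: recalling $\cI^* = \{f : f \in \arg\max_{g \in \cF}(\C h_0)_g/c_g\}$, the hypothesis gives, for every $f \in \und$,
\[
\frac{(\C h_0)_f}{c_f} \;\le\; \max_{g \in \und}\frac{(\C h_0)_g}{c_g} \;<\; \frac{(\C h_0)_{\fstar}}{c_{\fstar}} \;\le\; \max_{g \in \cF}\frac{(\C h_0)_g}{c_g},
\]
so no $f \in \und$ attains the global maximum, hence $\cI^* \cap \und = \emptyset$. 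Since $\cF = \des \sqcup \und$, this yields $\cI^* \subseteq \des$.

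From here the weak form of the claim is immediate: Lemma~\ref{lem:linear_onefeature} exhibits an optimal profile supported on a single feature $\fstar \in \cI^* \subseteq \des$, and such a profile satisfies $\|\eff_\des\|_2 = \|\eff\|_2 \ge \beta\|\eff\|_2$ for every $\beta \in (0,1]$, i.e.\ it is $\beta$–desirable. To obtain the stated ``always'' (that \emph{every} best response is $\beta$–desirable), I would add a short linear–programming argument. Write $r := \max_{g \in \cF}(\C h_0)_g/c_g > 0$. For any feasible $\eff \ge 0$ of program~\eqref{opt:linear},
\[
\alpha \;\le\; (\C h_0)^\top \eff \;=\; \sum_{f \in \cF}\frac{(\C h_0)_f}{c_f}\, c_f \eff_f \;\le\; r \sum_{f \in \cF} c_f \eff_f \;=\; r\cdot \Cost(\eff),
\]
so $\Cost(\eff) \ge \alpha/r$; and by Lemma~\ref{lem:linear_onefeature} the optimal value is exactly $\alpha/r$, so at any optimum both inequalities above are equalities. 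Since $c_f > 0$ and $\eff_f \ge 0$, equality in the second inequality forces $\eff_f = 0$ whenever $(\C h_0)_f/c_f < r$, i.e.\ $\mathrm{supp}(\eff) \subseteq \cI^* \subseteq \des$. Consequently every optimal $\eff$ has $\eff_f = 0$ for all $f \in \und$, so $\|\eff_\des\|_2 = \|\eff\|_2$ and $\eff$ is $\beta$–desirable for all $\beta \in (0,1]$, which is the desired conclusion.

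The one genuinely nontrivial point — and the step I would be most careful with — is the passage from ``there exists an optimal single–feature profile'' (which is all Lemma~\ref{lem:linear_onefeature} formally asserts) to ``every optimal profile is supported on $\cI^*$'': a linear program can have many optima, and one has to rule out an optimal solution that hides a little effort on a coordinate with strictly smaller contribution–to–cost ratio. The displayed inequality chain handles this precisely because both inequalities must be tight at an optimum; everything else is bookkeeping with the partition $\cF = \des \sqcup \und$ and Definition~\ref{defn:good}.
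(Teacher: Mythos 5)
Your proposal is correct and follows the paper's route: the hypothesis forces $\cI^* \cap \und = \emptyset$, and then Lemma~\ref{lem:linear_onefeature} together with Definition~\ref{defn:good} gives $\beta$-desirability, which is exactly the paper's (one-line) argument. Your added tightness chain $\alpha \le (\C h_0)^\top \eff \le r\,\Cost(\eff)$ is a nice strengthening the paper omits, since it shows \emph{every} optimal profile (not just the single-feature one Lemma~\ref{lem:linear_onefeature} exhibits) is supported on $\cI^* \subseteq \des$, closing the non-uniqueness gap in the word ``always.''
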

\begin{proof}
The proof follows directly from Lemma~\ref{lem:linear_onefeature} and Definition~\ref{defn:good}.
\end{proof}

\subsubsection{Case 2 ($\ell_p$-norm costs for $p > 1$)}

For $p > 1$, the agent is solving the following optimization problem when best-responding:  
\begin{align}\label{opt:q_norm}
    \min_{\eff \geq 0} \quad \left(\sum_{f \in \cF}c_f (\eff_f)^p\right)^{1/p} \quad 
    \text{s.t.} \quad (\C h_0)^{\top}\eff \geq \alpha.
\end{align}
\begin{lem}\label{lem:l2_effort}
When the cost function is the weighted $\ell_p$-norm of the effort for $p > 1$, the optimal effort profile for the agent $\effopt$ satisfies: 
\[
        \effopt_f \propto \left( \frac{(\C h_0)_f}{c_f} \right)^{1/(p-1)} \quad \forall~f \in \cF.
\]
\end{lem}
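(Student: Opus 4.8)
The plan is to solve the convex program~\eqref{opt:q_norm} via Lagrangian duality / KKT conditions. Since the feasible region is nonempty (because $\C h_0 \geq 0$ and not identically zero, so scaling any effort on a feature with positive contribution satisfies the constraint) and the objective is strictly convex on the relevant cone for $p>1$, the optimum is characterized by the KKT conditions. First I would note that at the optimum the constraint $(\C h_0)^\top \eff \geq \alpha$ must bind: if it were slack, we could scale $\eff$ down by a factor slightly less than $1$ and strictly decrease the cost, contradicting optimality (here we use $\alpha > 0$, which we may assume by Proposition~\ref{prop:alpha_neg}, and Proposition~\ref{prop:y_pos} to restrict to $\eff \geq 0$).

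Next I would form the Lagrangian. It is cleaner to optimize the $p$-th power of the cost, $\sum_f c_f \eff_f^p$, since raising to the power $1/p$ is monotone and doesn't change the argmin. With multiplier $\lambda \geq 0$ for the (binding) linear constraint, stationarity in $\eff_f$ gives
\[
     p\, c_f\, \eff_f^{p-1} = \lambda\, (\C h_0)_f \quad \text{for each } f \text{ with } (\C h_0)_f > 0,
\]
and $\eff_f = 0$ whenever $(\C h_0)_f = 0$ (contributing effort there only raises cost without helping the constraint; complementary slackness / non-negativity handles this). Solving for $\eff_f$ yields
\[
     \eff_f = \left( \frac{\lambda}{p} \cdot \frac{(\C h_0)_f}{c_f} \right)^{1/(p-1)} \propto \left( \frac{(\C h_0)_f}{c_f} \right)^{1/(p-1)},
\]
where the proportionality constant $(\lambda/p)^{1/(p-1)}$ is the same across all features and is pinned down by substituting back into the binding constraint $(\C h_0)^\top \eff = \alpha$. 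This is exactly the claimed form.

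The main thing to be careful about — rather than a deep obstacle — is the treatment of coordinates with $(\C h_0)_f = 0$ and the justification that KKT is both necessary and sufficient here: necessity needs a constraint qualification (Slater's condition, which holds since a strictly feasible point exists), and sufficiency needs convexity of the (powered) objective together with linearity of the constraint, plus an argument that the objective $\sum_f c_f \eff_f^p$ is differentiable and convex on $\eff \geq 0$ for $p > 1$ (it is, with $\nabla_f = p c_f \eff_f^{p-1}$, continuous at $0$). One should also note the edge case where the unique maximizer of $(\C h_0)_f/c_f$ is not needed — unlike the $\ell_1$ case, strict convexity forces a unique optimum that spreads mass over \emph{every} feature with positive contribution, which is the qualitative contrast with Lemma~\ref{lem:linear_onefeature} that the surrounding text emphasizes. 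I would close by remarking that the normalizing constant can be written explicitly, giving a genuine closed form, though the proportionality statement is all the lemma asserts.
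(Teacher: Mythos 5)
Your proposal is correct and follows essentially the same route as the paper's proof: form the Lagrangian, argue the constraint $(\C h_0)^\top \eff \geq \alpha$ binds at the optimum, and read off the proportionality from the stationarity condition, with the multiplier pinned down by the binding constraint. Your use of the $p$-th power of the cost is only a cosmetic simplification of the same KKT argument (the paper differentiates the un-powered norm directly), so there is nothing substantive to add.
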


The proof can be found in Appendix~\ref{app:l2_effort}.

\paragraph{Conditions for $\beta$-desirability} Since we know the structure of the agent's optimal effort profile, we can identify conditions under which the best response is $\beta$-desirable. 
\begin{thm}\label{thm:l2_good}
For a $\ell_p$-norm cost function with $p > 1$, the agent's best response is always a $\beta$-desirable effort profile if:  
\[
    \left[ \sum_{f \in \des} \left( \frac{(\C h_0)_f}{c_f} \right)^{2/(p-1)} \right]^{1/2} \geq \frac{\beta}{\sqrt{1-\beta^2}}  \left[ \sum_{f \in \und} \left( \frac{(\C h_0)_f}{c_f} \right)^{2/(p-1)} \right]^{1/2}
\]
\end{thm}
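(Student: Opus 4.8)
The plan is to start from the explicit form of the optimal effort profile given by Lemma~\ref{lem:l2_effort}, namely $\effopt_f = \lambda \left( (\C h_0)_f / c_f \right)^{1/(p-1)}$ for some normalizing constant $\lambda > 0$, and simply plug this into the definition of $\beta$-desirability (Definition~\ref{defn:good}). Since the constant $\lambda$ appears in every coordinate of $\effopt$, it cancels out of the inequality $\|\effopt_\des\|_2 \geq \beta \|\effopt\|_2$, so without loss of generality I would work with the unnormalized vector $v$ with $v_f = \left( (\C h_0)_f / c_f \right)^{1/(p-1)}$ (this is legitimate because by Proposition~\ref{prop:y_pos} all entries $(\C h_0)_f$ are nonnegative, so $v_f \geq 0$ and the powers are well-defined).

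\medskip

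\noindent Next I would expand the squared norms. Writing $S_\des = \sum_{f \in \des} \left( (\C h_0)_f / c_f \right)^{2/(p-1)}$ and $S_\und = \sum_{f \in \und} \left( (\C h_0)_f / c_f \right)^{2/(p-1)}$, we have $\|v_\des\|_2^2 = S_\des$ and $\|v\|_2^2 = S_\des + S_\und$ (using that $\cF = \des \cup \und$ is a partition). The $\beta$-desirability condition $\|v_\des\|_2 \geq \beta \|v\|_2$ is equivalent, after squaring both (nonnegative) sides, to $S_\des \geq \beta^2 (S_\des + S_\und)$, i.e. $(1-\beta^2) S_\des \geq \beta^2 S_\und$. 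For $\beta \in (0,1)$ we have $1 - \beta^2 > 0$, so this rearranges to $S_\des \geq \frac{\beta^2}{1-\beta^2} S_\und$, and taking square roots gives exactly $\sqrt{S_\des} \geq \frac{\beta}{\sqrt{1-\beta^2}}\sqrt{S_\und}$, which is the claimed inequality. The edge case $\beta = 1$ forces $S_\und = 0$, and one should note the RHS of the theorem's inequality is then infinite (or the statement is read as requiring the undesirable contributions to vanish), consistent with this; I would mention this briefly.

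\medskip

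\noindent There is essentially no hard step here — the proof is a direct substitution and algebraic rearrangement once Lemma~\ref{lem:l2_effort} is in hand. The only point requiring a little care is making sure the exponent bookkeeping is right: raising $v_f$ to the power $2$ turns the exponent $1/(p-1)$ into $2/(p-1)$, matching the statement; and the assumption $p > 1$ guarantees $1/(p-1)$ is a well-defined positive real. I would also remark that this condition is vacuous-to-check when $\und = \emptyset$ (the agent has no undesirable features to invest in, so any $\beta$ works), which is consistent with the inequality since the right-hand side is then zero.
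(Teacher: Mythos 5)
Your proof is correct and follows exactly the route the paper intends: the theorem is a direct consequence of Lemma~\ref{lem:l2_effort} plugged into Definition~\ref{defn:good}, with the proportionality constant cancelling and the inequality squared and rearranged (which is also why the paper states it without a separate appendix proof, analogously to Theorem~\ref{thm:l1_good}). Your exponent bookkeeping and handling of the edge cases ($\beta=1$, $\und=\emptyset$) are consistent with the paper's usage of the condition, e.g.\ in the proof of Lemma~\ref{lem:comp_des_nonconvex}.
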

\noindent
When $c = \bf{1}$ and $p = 2$, the condition reduces to: 
\[
       \lVert(\C h_0)_{\des}\rVert_2 \geq \frac{\beta}{\sqrt{1-\beta^2}}\lVert(\C h_0)_{\und}\rVert_2,
\]
In other words, in the complete information setting and when agents have $\ell_2$ cost functions, if the magnitude of the net contribution per unit cost along the desirable features relative to the undesirable features is high enough, then it is always in the agent's best interest to invest more in desirable features. 

\subsection{Desirable Classifiers and Where to Find Them}

So far, we have provided conditions which help the principal answer the following problem: ``does a given classifier $h_0$ induce strategic agents to invest only in $\beta$-desirable effort profiles?'' This means that given a classifier $h_0$, we can check that the classifier incentivizes a good profile. However, this does not answer the question of developing algorithms for \emph{finding} such a profile. We start this section with a negative result: the set of $\beta$-desirable classifiers, is, in general, non-convex (Lemma~\ref{lem:comp_des_nonconvex})---in particular, there are typically no well-known methods for finding a good effort profile in high dimensions. However, we investigate a simple condition under which the problem becomes convex (Proposition~\ref{prop:convex_p13}), and we also provide a heuristic (using convexification) that ensures that not too much effort is spent on undesirable features (Proposition~\ref{prop:convex_relax}). The benefit of having a convex design space of desirable classifiers is that i) we can find a desirable classifier using standard optimization techniques, and ii) it allows the principal to choose a classifier that \emph{simultaneously} minimizes (convex) accuracy losses and induces desirable effort profiles.

\paragraph{The Space of Desirable Classifiers is Non-Convex.}
Our first main result shows that finding desirable classifiers is a non-convex problem, hence effectively computationally \textit{hard} in general. 

\begin{lem}\label{lem:comp_des_nonconvex}
There exists an instance of the problem $(\C,h_0)$ and $\beta > 0$, where the space of $\beta$-desirable classifiers $\cH$ is non-convex.
\end{lem}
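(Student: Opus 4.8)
I would prove non-convexity by exhibiting an explicit small instance in the $\ell_1$-cost regime ($p=1$), where Lemma~\ref{lem:linear_onefeature} gives an exact and easy-to-check characterization of $\beta$-desirability. Recall that under $\ell_1$ costs the agent with $\alpha>0$ puts \emph{all} of her effort on a single feature $\fstar\in\arg\max_{f}(\C h_0)_f/c_f$. When this maximizer is unique, the best response is the unique vector $\effopt=\frac{\alpha}{(\C h_0)_{\fstar}}\mathbf{1}_{\{\fstar\}}$, so $h_0\in\cH$ (for \emph{every} $\beta\in(0,1]$) precisely when $\fstar\in\des$, and $h_0\notin\cH$ (for every $\beta\in(0,1]$) when $\fstar\in\und$. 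Hence $\cH$ is, up to boundary ties, a union of polyhedral cones $\{h_0:(\C h_0)_{f}/c_f>(\C h_0)_g/c_g\ \forall g\}$ ranging over $f\in\des$, and a union of cones is generically non-convex; the proof reduces to pinning down a concrete witness.

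\textbf{The witness.} Take $d=3$, a trivial causal graph so that $\C=I_3$, unit costs $c=(1,1,1)$, threshold gap $\alpha=1$, and $\des=\{1,2\}$, $\und=\{3\}$. Consider the two classifiers $h^{(1)}=(3,1,2.9)$ and $h^{(2)}=(1,3,2.9)$. Both satisfy $\C h^{(i)}=h^{(i)}\ge 0$ componentwise, so Proposition~\ref{prop:y_pos} and Lemma~\ref{lem:linear_onefeature} apply directly. For $h^{(1)}$ the (unique) maximizer of $(\C h^{(1)})_f/c_f$ is $f=1\in\des$, so the unique best response is $\effopt=(1/3,0,0)$, which is $\beta$-desirable for every $\beta\in(0,1]$ since $\|\effopt_\des\|_2=\|\effopt\|_2$; symmetrically $h^{(2)}$ has best response $(0,1/3,0)$ and is $\beta$-desirable. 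Thus $h^{(1)},h^{(2)}\in\cH$.

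\textbf{Failure at the midpoint.} Now $\bar h=\tfrac12\big(h^{(1)}+h^{(2)}\big)=(2,2,2.9)$ has the \emph{unique} maximizer $f=3\in\und$, so its unique best response is $(0,0,1/2.9)$, giving $\|\bar{\effopt}_\des\|_2=0<\beta\|\bar{\effopt}\|_2$ for every $\beta\in(0,1]$. Hence $\bar h\notin\cH$. Since $h^{(1)},h^{(2)}\in\cH$ but their convex combination $\bar h\notin\cH$, the set $\cH$ is non-convex, which proves the lemma (for any $\beta\in(0,1]$, in particular some $\beta>0$).

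\textbf{Main obstacle / subtleties.} There is no substantive obstacle; the only care needed is (i) choosing the numeric entries so the argmax is a \emph{unique} feature at $h^{(1)}$, $h^{(2)}$, and at their midpoint, which removes any ambiguity from tie-breaking in the agent's best response (when the argmax is a singleton $\{\fstar\}$ the LP in~\eqref{opt:linear} has $\effopt$ as its unique optimum, since any optimal $e\ge0$ must have $c_fe_f=0$ for $f\neq\fstar$), and (ii) keeping the three classifiers inside the regime $\C h_0\ge 0$ where Lemma~\ref{lem:linear_onefeature} is stated, which holds here as all entries are nonnegative and $\C=I_3$. One could additionally remark that the same three-classifier construction yields non-convexity for $\ell_p$ costs with $p>1$ by invoking Lemma~\ref{lem:l2_effort} in place of Lemma~\ref{lem:linear_onefeature} and comparing $\|\effopt_\des\|_2/\|\effopt\|_2$ across $h^{(1)},h^{(2)},\bar h$, at the cost of a slightly longer computation.
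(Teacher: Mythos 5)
Your proof is correct and follows essentially the same strategy as the paper's: an explicit $\ell_1$-cost counterexample with $\des=\{1,2\}$, $\und=\{3\}$ (the paper uses $\und=\{3,4\}$) in which two classifiers each make a distinct desirable feature the unique bang-per-buck maximizer while their midpoint makes an undesirable feature the maximizer, so the midpoint's best response is not $\beta$-desirable. The only differences are cosmetic: by taking $\C=I_3$ you argue directly in classifier space via Lemma~\ref{lem:linear_onefeature} (with a careful uniqueness check on the argmax), whereas the paper exhibits its counterexamples in the transformed space $\cZ=\{\C h_0: h_0\in\cH\}$ and invokes a full-row-rank equivalence between convexity of $\cH$ and of $\cZ$, and it additionally gives a second example covering $\ell_p$ costs with $p>1$, which you only sketch in a remark.
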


The proof of the lemma is provided in Appendix~\ref{app:comp_des_nonconvex}. In fact, whenever there is more than one desirable feature, i.e., $|\des| > 1$, $\cH$ can be shown to be a non-convex set.

Since the space of desirable classifiers is, in the worst case, non-convex, finding the best classifier in this set (that maximizes some classification accuracy metric) is equivalent to solving a non-convex optimization problem to global optimality, which is typically an NP-hard problem. 

\paragraph{Special Case: When the Learner Focuses on Incentivizing a Single Desirable Feature.} We now highlight a special case where the space of $\beta$-desirable classifiers is \textit{convex}, for any $\beta > 0$ for a specific range of $p$ values. Namely, we focus on the special case of the principal having a \emph{single} desirable feature that they wish to target. Note that this assumption is actually aligned with what we expect to see happening in real life: indeed, a principal may define for themselves which feature they want to incentivize, and focus on one feature where they would really like to see improvements, especially if this is a feature that has historically not been properly leveraged. Not only that, but by targeting a single feature, they may lower the agents' cognitive load for best-responding, which is always desirable in practice. 

\begin{prop}\label{prop:convex_p13}
Suppose that there is only a single desirable feature, i.e., that $|\des| = 1$. Then for any $\beta > 0$, the space of $\beta$-desirable classifiers $\cH$ is convex for any $\ell_p$-norm cost function with $p \in [1, 3]$.
\end{prop}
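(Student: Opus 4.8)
The plan is to characterize the set $\cH$ of $\beta$-desirable classifiers explicitly using the structural results already established, and then show this set is convex when $|\des|=1$ and $p\in[1,3]$. Recall from Lemma~\ref{lem:l2_effort} (and the $p=1$ case from Lemma~\ref{lem:linear_onefeature}, which is the limiting case) that the optimal effort profile satisfies $\effopt_f \propto \left((\C h_0)_f / c_f\right)^{1/(p-1)}$, so the $\beta$-desirability condition $\|\effopt_\des\|_2 \geq \beta\|\effopt\|_2$ becomes a condition purely on the vector with entries $((\C h_0)_f/c_f)^{1/(p-1)}$. With $\des = \{g\}$ a single feature, writing $v_f := (\C h_0)_f / c_f \geq 0$ (nonnegativity by Proposition~\ref{prop:y_pos}), the condition reads
\begin{align*}
    v_g^{2/(p-1)} \;\geq\; \beta^2 \sum_{f \in \cF} v_f^{2/(p-1)},
\end{align*}
equivalently $(1-\beta^2)\, v_g^{q} \geq \beta^2 \sum_{f \neq g} v_f^{q}$ where $q := 2/(p-1)$. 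So $\cH = \{ h_0 : (1-\beta^2)\,\big((\C h_0)_g/c_g\big)^q \geq \beta^2 \sum_{f\neq g}\big((\C h_0)_f/c_f\big)^q,\ \C h_0 \geq 0\}$, and the job is to show this region is convex in $h_0$.

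The key steps: (i) First handle the $p=1$ case separately using Theorem~\ref{thm:l1_good} / Lemma~\ref{lem:linear_onefeature} — there $\beta$-desirability (for any $\beta\le 1$) is equivalent to $g$ being the unique argmax of $(\C h_0)_f/c_f$ (or tied only with other desirable features, but here there is only one), i.e.\ a finite intersection of linear inequalities $(\C h_0)_g/c_g \geq (\C h_0)_f/c_f$, which is a polyhedron, hence convex. (ii) For $p\in(1,3]$ we have $q = 2/(p-1) \geq 1$. The map $h_0 \mapsto (\C h_0)_f$ is linear, and on the region $\C h_0 \geq 0$ each $(\C h_0)_f \geq 0$; since $t\mapsto t^q$ is convex and nondecreasing on $\R_{\geq 0}$ for $q\geq 1$, each term $\big((\C h_0)_f/c_f\big)^q$ is a convex function of $h_0$ on this region. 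Thus the right-hand side $\beta^2\sum_{f\neq g}\big((\C h_0)_f/c_f\big)^q$ is convex in $h_0$. The subtlety is the left-hand side: $(1-\beta^2)\,\big((\C h_0)_g/c_g\big)^q$ is also convex, so naively the constraint is ``convex $\geq$ convex,'' which is not obviously a convex set. (iii) To get around this, the plan is to take $q$-th roots: the constraint is equivalent (on $\C h_0 \geq 0$, where all quantities are nonnegative) to
\begin{align*}
    (1-\beta^2)^{1/q}\,\frac{(\C h_0)_g}{c_g} \;\geq\; \beta^{2/q}\,\Big(\sum_{f\neq g}\big((\C h_0)_f/c_f\big)^q\Big)^{1/q} = \beta^{2/q}\,\big\| \big((\C h_0)_f/c_f\big)_{f\neq g}\big\|_q.
\end{align*}
The left side is linear in $h_0$; the right side is a constant times the $\ell_q$-norm (with $q\geq 1$, a genuine norm) of a vector that depends linearly on $h_0$, hence a convex function of $h_0$. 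So the constraint is ``linear $\geq$ convex,'' which defines a convex set; intersecting with the polyhedron $\{\C h_0 \geq 0\}$ preserves convexity. This is exactly where $p\le 3$ (i.e.\ $q\ge 1$) is used: for $p>3$ we'd have $q<1$, $\|\cdot\|_q$ is not convex, and the argument breaks.

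I expect the main obstacle to be handling the equivalence of the raw $\beta$-desirability condition with the ``$\ell_q$-norm'' form cleanly across the edge cases — in particular verifying that taking $q$-th roots is a valid equivalence (both sides nonnegative on the relevant region, monotonicity of $t\mapsto t^{1/q}$), handling the degenerate situation where $(\C h_0)_g = 0$ (then the constraint forces all $(\C h_0)_f = 0$ for $f\neq g$, still a convex — indeed linear — condition), and pasting the $p=1$ polyhedral argument together with the $p\in(1,3]$ norm argument into one statement. A secondary point to be careful about: Lemma~\ref{lem:l2_effort} characterizes $\effopt$ only up to the proportionality constant determined by the constraint $(\C h_0)^\top \effopt = \alpha$, but since $\beta$-desirability is scale-invariant in $\effopt$, the proportionality constant is irrelevant and the condition depends only on the direction of $\effopt$, which is what makes the whole reduction go through.
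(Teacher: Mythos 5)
Your proposal is correct and follows essentially the same route as the paper: specialize the $\beta$-desirability characterizations to $|\des|=1$, handle $p=1$ as an intersection of half-spaces, and for $p\in(1,3]$ rewrite the condition as a linear function of $h_0$ dominating a positive multiple of the $\ell_{2/(p-1)}$-norm of a linear map of $h_0$, which is convex exactly because $2/(p-1)\geq 1$. The paper reaches the same ``linear $\geq K\,\|B h_0\|_{2/(p-1)}$'' form by raising the condition of Theorem~\ref{thm:l2_good} to the power $p-1$ rather than taking $q$-th roots, an immaterial bookkeeping difference.
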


The proof can be found in Appendix~\ref{app:convex_p13}.

\paragraph{Minimizing Undesirable Features.}  When $|\des| > 1$, we know that in general, the set $\cH$ of $\beta$-desirable classifiers is not convex, and difficult to optimize over. However, we propose a convexification heuristic (parameterized by $\gamma$) where the principal just tries to design a classifier such that ``the total contribution of undesirable features is no more than $\gamma$'':

\begin{prop}\label{prop:convex_relax}
Let $\cH_{w(\und) \leq \gamma} = \{h_0:~ \| (\C h_0)_{\und}\|_{2/(p-1)} \leq \gamma\}$. Then for any $\gamma > 0$, $\cH_{w(\und) \leq \gamma}$ is convex for any $\ell_p$-norm cost function with $p \in [1, 3]$.
\end{prop}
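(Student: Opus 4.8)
The plan is to recognize that $\cH_{w(\und) \leq \gamma}$ is precisely a sublevel set of a convex function — a norm composed with a linear map — so convexity is immediate once we verify the exponent $2/(p-1)$ is admissible.

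First I would observe that the map $\phi: h_0 \mapsto (\C h_0)_{\und}$ is linear from $\R^d$ to $\R^{|\und|}$: it is the linear map $h_0 \mapsto \C h_0$ followed by the coordinate projection onto the indices in $\und$ (and, if one wishes to keep the cost multipliers, a further diagonal rescaling by the $1/c_f$, which is still linear). Next, I would set $q := 2/(p-1)$ and check that $\|\cdot\|_q$ is a genuine norm on $\R^{|\und|}$. For $p \in (1,3]$ we have $q = 2/(p-1) \in [1,\infty)$, so $\|\cdot\|_q$ is the usual $\ell_q$ norm and is convex by Minkowski's inequality; for the boundary case $p = 1$ one reads $2/(p-1)$ as $+\infty$ and uses $\|\cdot\|_\infty$, which is also convex. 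The hypothesis $p \le 3$ is used exactly here, to guarantee $q \ge 1$.

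Finally I would combine the two facts: $g(h_0) := \|\phi(h_0)\|_q = \|(\C h_0)_{\und}\|_{2/(p-1)}$ is the composition of the convex function $\|\cdot\|_q$ with the linear map $\phi$, hence convex in $h_0$; and $\cH_{w(\und) \leq \gamma} = \{h_0 : g(h_0) \le \gamma\}$ is then a sublevel set of a convex function, hence convex. There is no substantive obstacle in this argument; the only points requiring a line of care are confirming that $q = 2/(p-1) \ge 1$ over the claimed range of $p$ (indeed this is precisely where the argument would break for $p > 3$, since then $\|\cdot\|_q$ is merely a quasi-norm and fails to be convex) and fixing the $p = 1$ convention for the exponent.
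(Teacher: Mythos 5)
Your proof is correct and follows essentially the same route as the paper: the paper omits the proof, noting it is nearly identical to that of Proposition~\ref{prop:convex_p13}, which likewise rests on the fact that $h_0 \mapsto \|(\C h_0)_{\und}\|_{2/(p-1)}$ is a norm composed with a linear map (convex precisely because $2/(p-1)\geq 1$ when $p\in(1,3]$), so the set is a convex sublevel set. Your explicit handling of the $p=1$ case via the $\|\cdot\|_\infty$ convention is a reasonable reading of the same boundary case the paper treats separately.
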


The proof is nearly identical to that of Proposition~\ref{prop:convex_p13} and is omitted to avoid repetition. This result helps the principal guarantee that they can bound the effort exerted on undesirable features, even if they are not able to guarantee a certain target level of $\beta$-desirable effort.

\section{Incomplete Information Setting}\label{sec:incomplete}
In this section, we switch our attention to the \emph{incomplete} information setting. Our first set of results provide a characterization of when the optimal effort profile for agents can be efficiently computed. We highlight that \emph{under partial uncertainty} and for general weighted $\ell_p$ costs, the optimization program of the agent is convex and can be solved efficiently (Lemma~\ref{prop:partial_incomp_convex}). We also show a technical result in Lemma~\ref{lem:delta_charac}, capturing when the above convex program is feasible and when it is not. We conclude by providing a negative result in the setting where there is uncertainty over both the classifier and the causal graph (Proposition~\ref{prop:full_uncert_nonconvex})---in this case, we show that the agent's optimization program is non-convex and hence, \emph{hard to solve} in the worst case. 

We then aim to characterize what the optimal effort profiles look like. In the $\ell_1$ cost case, we show a sharp contrast with the complete information case: namely, an agent may be willing to expend effort across several features, as opposed to a single one in the complete information case (Lemma~\ref{lem:l1_incomp}). In the $\ell_2$ case, we provide a semi-closed form characterization of the agent's optimal effort profile (Theorem~\ref{thm:incomp_l2}). We also highlight how, under some partial information models, it is possible to provide a more interpretable characterization of how the effort per feature depends on $\E[\C h]$ and $\text{Var}(\C h)$ (Corollary~\ref{cor:prop_effort}): the higher the expected importance $(\C h)_i$ of a feature $i$, the more effort is spent towards it, and the higher the variance of $\C h$ towards feature $i$, the less effort is spent towards it. 

Recall that for the incomplete information setting, the agent's optimization problem is: 
\begin{align}\label{opt:agent_br_incomp}
    \effopt(\Pi_h, \Pi_{\C}) &= ~arg\min_{\eff} \quad \textsf{Cost}(\eff) \quad \text{s.t.}\\
    &\bP_{h \sim \Pi_h, \C \sim \Pi_{\C}}\left[ (\C h)^{\top}\eff \geq \alpha \right] \geq 1-\delta, \quad \delta \in (0,1). \nonumber
\end{align}

Below, we briefly remind the reader of the models of information used in this paper.

\paragraph{Models of Information.} We remind the reader here that there can be two different sources of uncertainty: i) the principal's classifier; and, ii)  the edge weights of the causal graph $\cG$ (the graph topology is assumed to be common knowledge). In particular, we have three following combinations of where uncertainty lies in our problem:
\begin{enumerate}
    \item Uncertainty only exists in the principal's classifier, the causal graph is fully known;
    \item Uncertainty only exists in the edge weights of the causal graph, the classifier is fully known; 
    \item Uncertainty exists over both the classifier and the causal graph.
\end{enumerate}
We will henceforth refer to models $1$ and $2$ as models of \textit{partially incomplete information}. 

\subsection{Optimal Effort Computation}

We start by analyzing the computation of an agent's optimal effort in Models $1$ and $2$:

\paragraph{Models $1$ and $2$:} Since uncertainty manifests in the form of Gaussian priors for the agent (as per assumption), it is easy to see that for information models $1$ and $2$ above, \textit{the overall uncertainty is also Gaussian}: i.e., $\C h$ is a Gaussian random variable. In that case, we can rewrite the agent's optimization problem as follows: 
\begin{align}\label{opt:incomp_gaussian}
    \effopt(\Pi_h, \Pi_{\C}) = ~arg\min \quad &\textsf{Cost}(\eff) \\
    \text{s.t.}~~& \bP_{(\C h) \sim \mathcal{N}(\mu_{\C h}, \Sigma_{\C h})} \left[ (\C h)^{\top}\eff \geq \alpha \right] \geq 1-\delta. \nonumber
\end{align}
Our first result is that above problem is a convex optimization problem under Models 1 and 2. 
\begin{lem}\label{lem:partial_incomp_convex}
Under partially incomplete information (models $(1)$ and $(2)$) and cost functions given by Eq.~\eqref{eq:cost}, the agent's optimization problem to find the optimal effort profile $\effopt$, given by \eqref{opt:incomp_gaussian}, is a convex program for any $\delta \leq \frac{1}{2}$. 
\end{lem}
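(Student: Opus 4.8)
The plan is to reduce the chance constraint in \eqref{opt:incomp_gaussian} to a single \emph{second-order cone} (SOC) constraint, after which convexity of the whole program follows immediately from convexity of $\Cost$. The first step is the standard Gaussian chance-constraint manipulation: fixing an effort profile $\eff$, and using that under Models $1$ and $2$ the vector $\C h$ is Gaussian with $\C h \sim \mathcal{N}(\mu_{\C h},\Sigma_{\C h})$, the scalar $(\C h)^\top\eff$ is univariate Gaussian with mean $\mu_{\C h}^\top\eff$ and variance $\eff^\top\Sigma_{\C h}\eff = \|\Sigma_{\C h}^{1/2}\eff\|_2^2$. Writing $\Phi$ for the standard normal CDF, I would rewrite $\bP[(\C h)^\top\eff \ge \alpha] \ge 1-\delta$ as
\[
\mu_{\C h}^\top\eff - \alpha \;\ge\; \Phi^{-1}(1-\delta)\,\bigl\|\Sigma_{\C h}^{1/2}\eff\bigr\|_2,
\]
taking care of the degenerate case $\eff^\top\Sigma_{\C h}\eff = 0$ (in particular $\eff=0$), where $(\C h)^\top\eff$ is deterministic and, since $\delta<1$, the constraint just becomes $\mu_{\C h}^\top\eff \ge \alpha$ — which is the displayed inequality with a zero right-hand side.

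Next I would invoke the hypothesis: $\delta \le \tfrac12$ is equivalent to $1-\delta \ge \tfrac12$, hence $\Phi^{-1}(1-\delta) \ge 0$. The displayed inequality is then of the form ``affine function of $\eff$ $\ge$ (nonnegative scalar) $\times$ (Euclidean norm of a linear image of $\eff$)'', i.e.\ exactly a second-order cone constraint, and therefore carves out a convex subset of $\R^{|\cF|}$. The objective $\Cost(\eff)=\bigl(\sum_{f\in\cF}c_f|\eff_f|^p\bigr)^{1/p}$ is a weighted $\ell_p$ norm ($p\ge1$) composed with a linear map, hence convex; and any linear restrictions one keeps (such as $\eff\ge0$, if imposed) only intersect with further convex sets. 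Minimizing a convex function over the resulting convex feasible region is a convex program, which is the claim.

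The main obstacle is not any single computation but getting two equivalences exactly right. First, the SOC reformulation must be shown to be \emph{equivalent to}, not merely a relaxation of, the original chance constraint even when $\Sigma_{\C h}$ is singular in the direction $\eff$ — this is the degenerate-variance bookkeeping noted above. Second, and more importantly, one must pin down precisely why $\delta \le 1/2$ is needed: it is exactly the condition that makes the coefficient $\Phi^{-1}(1-\delta)$ in front of the norm nonnegative, so that the feasible set is an SOC region rather than its (generally nonconvex) complement; if $\delta > 1/2$ the argument breaks and convexity genuinely can fail. I would state this explicitly, since it is the sole role the hypothesis plays and it pinpoints where convexity would be lost.
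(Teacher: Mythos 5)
Your proposal is correct and follows essentially the same route as the paper: standardize the Gaussian chance constraint into $\alpha - \mu_{\C h}^{\top}\eff - \Phi^{-1}(\delta)\,\|\Sigma_{\C h}^{1/2}\eff\|_2 \leq 0$, observe that $\delta \leq \tfrac12$ makes the norm coefficient nonnegative so the constraint is a second-order-cone (hence convex) constraint, and combine with convexity of the weighted $\ell_p$ cost. The only difference is cosmetic: you explicitly handle the degenerate case $\eff^{\top}\Sigma_{\C h}\eff = 0$, which the paper's standardization step glosses over, and you phrase the conclusion via SOC membership rather than via convexity of the constraint function; both yield the same result.
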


The proof can be found in Appendix~\ref{app:partial_incomp_convex}.

The last result shows that under limited uncertainty, the agent can still efficiently solve for an effort profile that helps her to pass the classifier with high probability. 

\begin{remark}
We note however that this optimization problem \eqref{opt:incomp_gaussian} is not always feasible:  
To see this, take a look at the worst case when an agent has no information about the problem: i.e., $\Sigma_{\C h} \to +\infty$. Then an agent who is acting blind manipulates in a direction that lowers their true score $h_0^\top x$ with probability exactly $1/2$. I.e., with probability at least $1/2$, they never pass the classifier, and a probability of $1-\delta$ with $\delta$ small cannot be guaranteed. As $\delta \to 0$, the uncertainty intuitively makes it impossible for the agent to guarantee that they will pass the classifier, making the problem also infeasible. 
\end{remark}

To further investigate this, we provide a complete characterization of when optimization problem \eqref{opt:incomp_gaussian} is feasible as a function of $\delta$. The following result highlights the trade-off between the degree of uncertainty in the model and the highest coverage probability ($1-\delta$) that can be achieved.
\begin{lem}\label{lem:delta_charac}
Suppose that $\Sigma_{\C h}$ is positive definite. Then the optimization problem \eqref{opt:incomp_gaussian} is:
\begin{enumerate}
    \item feasible when $\delta > \Phi^{-1}\left( - \| \Sigma_{\C h}^{-1/2} \mu_{\C h} \|_2 \right)$, and
    \item infeasible otherwise, 
\end{enumerate}
where $\Phi^{-1}(\cdot)$ indicates the inverse of the standard normal CDF. 
\end{lem}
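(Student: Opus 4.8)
The plan is to reduce the probabilistic constraint in \eqref{opt:incomp_gaussian} to an equivalent deterministic second-order cone constraint, and then ask when that constraint admits any feasible $\eff$. Concretely, since $(\C h)^\top \eff \sim \mathcal{N}(\mu_{\C h}^\top \eff, \eff^\top \Sigma_{\C h} \eff)$, the chance constraint $\bP[(\C h)^\top \eff \geq \alpha] \geq 1-\delta$ is equivalent (as in the proof of Lemma~\ref{lem:partial_incomp_convex}) to
\[
\mu_{\C h}^\top \eff - \Phi^{-1}(1-\delta)\, \| \Sigma_{\C h}^{1/2} \eff \|_2 \geq \alpha.
\]
Using $\Phi^{-1}(1-\delta) = -\Phi^{-1}(\delta)$, feasibility of \eqref{opt:incomp_gaussian} is therefore equivalent to the existence of some $\eff$ with $\mu_{\C h}^\top \eff + \Phi^{-1}(\delta)\, \| \Sigma_{\C h}^{1/2} \eff \|_2 \geq \alpha$. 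Since the left-hand side is positively homogeneous in $\eff$ and $\alpha > 0$, this holds for \emph{some} $\eff$ if and only if $\sup_{\eff \neq 0} \left( \mu_{\C h}^\top \eff + \Phi^{-1}(\delta)\, \| \Sigma_{\C h}^{1/2} \eff \|_2 \right) / \| \Sigma_{\C h}^{1/2} \eff \|_2 > 0$, i.e. if and only if the supremum of $\mu_{\C h}^\top \eff + \Phi^{-1}(\delta)\,\| \Sigma_{\C h}^{1/2} \eff \|_2$ over the set $\{\eff : \| \Sigma_{\C h}^{1/2} \eff \|_2 = 1\}$ is strictly positive.

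Next I would evaluate that supremum explicitly. Substituting $u = \Sigma_{\C h}^{1/2} \eff$ (a valid change of variables since $\Sigma_{\C h}$ is positive definite, hence $\Sigma_{\C h}^{1/2}$ is invertible), the objective on $\|u\|_2 = 1$ becomes $\mu_{\C h}^\top \Sigma_{\C h}^{-1/2} u + \Phi^{-1}(\delta)$, whose maximum over the unit sphere is $\| \Sigma_{\C h}^{-1/2} \mu_{\C h} \|_2 + \Phi^{-1}(\delta)$ by Cauchy--Schwarz (attained at $u = \Sigma_{\C h}^{-1/2}\mu_{\C h} / \|\Sigma_{\C h}^{-1/2}\mu_{\C h}\|_2$). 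Hence the chance constraint is feasible if and only if $\| \Sigma_{\C h}^{-1/2} \mu_{\C h} \|_2 + \Phi^{-1}(\delta) > 0$, which rearranges to $\Phi^{-1}(\delta) > -\| \Sigma_{\C h}^{-1/2} \mu_{\C h} \|_2$, i.e. $\delta > \Phi\left( -\| \Sigma_{\C h}^{-1/2} \mu_{\C h} \|_2 \right)$ by monotonicity of $\Phi$. Writing $\Phi$ as $(\Phi^{-1})^{-1}$ to match the statement's notation gives exactly the stated threshold $\delta > \Phi^{-1}(-\| \Sigma_{\C h}^{-1/2} \mu_{\C h} \|_2)$ (with the caveat that the paper's $\Phi^{-1}$ in the theorem statement should read $\Phi$; I would flag this or adopt the paper's convention consistently), and the complementary case $\delta \leq \Phi(-\|\cdot\|_2)$ yields infeasibility.

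The one subtlety to handle carefully is the boundary and the strictness of inequalities: I need to check whether the supremum above is actually attained (it is, by compactness of the sphere and continuity) and whether feasibility requires strict or non-strict satisfaction. Because $\alpha > 0$ strictly and the objective is continuous and homogeneous of degree one, the threshold is genuinely strict on the feasible side — if the max equals zero exactly, no nonzero $\eff$ achieves a positive value, so the constraint $\geq \alpha > 0$ fails, putting $\delta = \Phi(-\|\cdot\|_2)$ on the infeasible side as claimed. I would also double-check the edge case where $\mu_{\C h} = 0$: then the threshold is $\Phi(0) = 1/2$, recovering the intuition from the preceding remark that a fully uninformed agent can guarantee passing only with probability $1/2$, so feasibility requires $\delta > 1/2$. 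I expect the main (minor) obstacle to be bookkeeping around the direction of the inequality when converting between $\Phi^{-1}(1-\delta)$ and $-\Phi^{-1}(\delta)$ and reconciling the $\Phi$ versus $\Phi^{-1}$ notation in the theorem statement; the optimization itself is a one-line Cauchy--Schwarz argument once the chance constraint is put in deterministic form.
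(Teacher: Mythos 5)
Your argument is correct, and it reaches the same threshold as the paper, but via a cleaner route. The paper also starts from the deterministic reformulation $\alpha - \mu_{\C h}^{\top}\eff - p_{\delta}\|\Sigma_{\C h}^{1/2}\eff\|_2 \leq 0$, but then studies the unconstrained convex minimization of $g(\eff) = \alpha - \mu_{\C h}^{\top}\eff - p_{\delta}\|\Sigma_{\C h}^{1/2}\eff\|_2$ and splits into cases according to whether $\nabla g$ vanishes somewhere (which forces $-p_{\delta} = \|\Sigma_{\C h}^{-1/2}\mu_{\C h}\|_2$ and gives $\min g = \alpha > 0$, hence infeasibility at the threshold), whether the minimizer sits at the nondifferentiable point $\eff = 0$ (infeasible, shown via Cauchy--Schwarz with the direction $d^{*} = \Sigma_{\C h}^{-1}\mu_{\C h}$), or whether $g$ is unbounded below (feasible). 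You instead exploit positive homogeneity of $\eff \mapsto \mu_{\C h}^{\top}\eff + \Phi^{-1}(\delta)\|\Sigma_{\C h}^{1/2}\eff\|_2$ together with $\alpha > 0$ to reduce feasibility to the sign of a maximum over the compact set $\{\|\Sigma_{\C h}^{1/2}\eff\|_2 = 1\}$, and the substitution $u = \Sigma_{\C h}^{1/2}\eff$ plus Cauchy--Schwarz evaluates that maximum as $\|\Sigma_{\C h}^{-1/2}\mu_{\C h}\|_2 + \Phi^{-1}(\delta)$ in one step. The core tool (Cauchy--Schwarz in the $\Sigma_{\C h}^{-1/2}$-transformed geometry) is identical, but your version yields the full if-and-only-if dichotomy, including the boundary case, without the three-way case analysis, and it handles the strictness at $\delta = \Phi(-\|\Sigma_{\C h}^{-1/2}\mu_{\C h}\|_2)$ more transparently; the paper's framing, on the other hand, reuses the convexity machinery it has already set up for the optimization problem. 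You are also right that the threshold in the statement (and in the paper's proof) should read $\Phi(-\|\Sigma_{\C h}^{-1/2}\mu_{\C h}\|_2)$ rather than $\Phi^{-1}(-\|\Sigma_{\C h}^{-1/2}\mu_{\C h}\|_2)$, since $p_{\delta} = \Phi^{-1}(\delta)$; flagging that notational slip is appropriate. One small point to make explicit if you write this up: the chance-constraint-to-SOC conversion divides by $\sqrt{\eff^{\top}\Sigma_{\C h}\eff}$, so it applies only to $\eff \neq 0$; since $\eff = 0$ is trivially infeasible ($\bP[0 \geq \alpha] = 0$), this does not affect the equivalence, but it deserves a sentence.
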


The proof is given in Appendix~\ref{app:delta_charac}. Here it is also worth pointing out the main difference with the complete information setting --- \emph{with complete information, an agent can always pass the classifier by choosing effort correctly, unlike the incomplete information setting where a positive outcome is not guaranteed.}

\paragraph{Model $3$: }We now show that under model $(3)$ when there is uncertainty over both the classifier and the causal graph, solving for the optimal effort profile is a non-convex problem in general, and classical optimization algorithms cannot be directly used here.

\begin{prop}\label{prop:full_uncert_nonconvex}
Under incomplete information model $(3)$ and cost functions given by Eq.~\eqref{eq:cost}, the agent's optimization problem, given by \eqref{opt:agent_br_incomp}, is a non-convex program.
\end{prop}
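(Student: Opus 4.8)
The objective $\Cost(\cdot)$ in~\eqref{opt:agent_br_incomp} is a weighted $\ell_p$ norm, hence convex for every $p\ge 1$, so the program can fail to be convex only through its feasible set
\[
\mathcal S \;=\; \Bigl\{\, e \;:\; \bP_{h\sim\Pi_h,\ \C\sim\Pi_{\C}}\bigl[(\C h)^\top e \ge \alpha\bigr]\ \ge\ 1-\delta \,\Bigr\}.
\]
The plan is to exhibit one Model~3 instance for which $\mathcal S$ is non-convex. The reason this is possible --- and the contrast with Lemma~\ref{lem:partial_incomp_convex} --- is that once \emph{both} $h$ and $\C$ are random, $(\C h)^\top e=h^\top\C^\top e$ is a \emph{bilinear} form in the (jointly Gaussian) pair $(h,\C)$; it is therefore not Gaussian, the chance constraint no longer reduces to a second-order-cone constraint, and $e\mapsto\bP[(\C h)^\top e\ge\alpha]$ need not be quasi-concave.

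Concretely, I would take $\cF=\{1,2\}$ with a single causal edge $1\to2$ whose weight $w$ is uncertain, $w\sim\mathcal N(0,\sigma_w^2)$ with $\sigma_w>0$, together with a classifier prior under which $h_1\equiv0$ is known and $h_2\sim\mathcal N(0,\sigma_2^2)$ with $\sigma_2>0$ is uncertain; then $\Sigma_h\neq\mathbf 0$ and $\Sigma_w\neq\mathbf 0$, so this is a Model~3 instance, and $\C=\bigl(\begin{smallmatrix}1&w\\0&1\end{smallmatrix}\bigr)$ gives $(\C h)^\top e=h_2(we_1+e_2)$. I then restrict attention to effort profiles of the form $e=(0,t)$, $t\in\R$: for these, $w$ drops out entirely, $(\C h)^\top e=h_2 t$, and the feasibility probability is, exactly,
\[
\psi(t)\ :=\ \bP\bigl[h_2 t\ge\alpha\bigr]\ =\ \Phi\!\Bigl(-\tfrac{\alpha}{\sigma_2\,|t|}\Bigr)\ \ (t\neq0),\qquad \psi(0)=0,
\]
using $\alpha>0$ and $\E[h_2]=0$. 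This $\psi$ is even, vanishes at $t=0$, is strictly increasing in $|t|$, and tends to $\tfrac12$ as $|t|\to\infty$ --- a ``valley'' shape --- so it is \emph{not} quasi-concave.

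Choosing any $\delta\in(\tfrac12,1)$, we have $1-\delta\in(0,\tfrac12)$; set $r:=\alpha/\bigl(\sigma_2\,\lvert\Phi^{-1}(1-\delta)\rvert\bigr)>0$. Then $\{\,t:\psi(t)\ge1-\delta\,\}=(-\infty,-r]\cup[r,\infty)$, a union of two disjoint rays, so $e^{(1)}=(0,r)$ and $e^{(2)}=(0,-r)$ both belong to $\mathcal S$ while their midpoint $(0,0)$ does not (since $\psi(0)=0<1-\delta$). Hence $\mathcal S$ is non-convex --- in fact disconnected --- and~\eqref{opt:agent_br_incomp} is a non-convex program.

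The one delicate point is the confidence level: the valley of $\psi$ only climbs to $\tfrac12$ (more generally to $\min\{\Phi(\pm\mu_2/\sigma_2)\}\le\tfrac12$ when $\E[h_2]=\mu_2$), so the clean argument above runs for $\delta>\tfrac12$. Obtaining non-convexity for all $\delta\in(0,1)$ --- including the range $\delta\le\tfrac12$ where Models~1 and~2 are convex --- would require exploiting the non-Gaussianity of $(\C h)^\top e$ more aggressively, e.g.\ by placing two uncertain edges in series so that $\C$ itself carries a product $w_1w_2$ of independent Gaussians, and then verifying non-quasi-concavity of $e\mapsto\bP[(\C h)^\top e\ge\alpha]$ by a direct tail estimate rather than a closed form; that estimate is the main obstacle, whereas restricting to a one-dimensional slice, reading off the super-level set, and producing the bad midpoint are routine.
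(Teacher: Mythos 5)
Your proposal is correct for the statement as written, and at the top level it takes the same route as the paper: exhibit a concrete instance in which the chance-constraint feasible set of \eqref{opt:agent_br_incomp} is non-convex. The execution differs, though, in ways worth noting. The paper uses a one-feature instance with $\omega,h$ independent standard normals, so the effective coefficient is the genuinely non-Gaussian product $\omega h$, and it verifies non-concavity of $\eff\mapsto\bP[\omega h\,\eff\ge\alpha]$ \emph{numerically} (Monte-Carlo plot), since that product has no closed-form tail. You instead build a two-feature instance and restrict to the slice $\eff=(0,t)$, on which the bilinear term collapses to a single Gaussian, giving a fully analytic superlevel set $(-\infty,-r]\cup[r,\infty)$ and an explicit violated midpoint; this buys rigor and dispenses with simulation. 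The cost is that on your slice the causal-graph uncertainty is inert: the non-convexity you exhibit is exactly the generic non-convexity of a Gaussian chance constraint at confidence $1-\delta<\tfrac12$, which equally afflicts Models 1 and 2 when $\delta>\tfrac12$, so your instance proves the literal claim but does not isolate anything attributable to \emph{joint} uncertainty, and in particular does not sharpen the contrast with Lemma~\ref{lem:partial_incomp_convex} (whose convexity guarantee only covers $\delta\le\tfrac12$) — a limitation you flag yourself. It is fair to add that the paper's own counterexample has the same limitation: with zero-mean priors, $\bP[\omega h\,\eff\ge\alpha]<\tfrac12$ for every $\eff$, so its feasible region is empty (hence trivially convex) whenever $\delta\le\tfrac12$, and the non-(quasi-)concavity visible in its plot only produces non-convex feasible sets at levels below $\tfrac12$. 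So your argument meets the same standard as the paper's while replacing a numerical plot with a closed-form computation; establishing non-convexity of Model 3 at some $\delta\le\tfrac12$ (e.g.\ via nonzero means and the non-Gaussian tails of the product $\omega h$), which is what would truly separate it from the partially-incomplete models, remains open under both arguments.
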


The proof can be found in Appendix~\ref{app:full_uncert_nonconvex}. 
Since solving non-convex optimization problems to global optimality can be NP-hard in the worst case, the above result shows that Model $3$ is likely not computationally tractable. 

\subsection{Characterization of Optimal Effort Profiles}\label{subsec:incomp_effort}

Under partially incomplete information (uncertainty over either the principal's classifier or the edge weights of the causal graph, but not both) with Gaussian priors, we have shown that the agent's optimization problem reduces to the following convex program: 
\begin{align}\label{opt:agent_br_convex}
     \effopt(\Pi_h, \Pi_{\C}) = arg\min_{\eff} \quad &\textsf{Cost}(\eff) \\
     \text{s.t.} \quad &\alpha - \mu_{\C h}^{\top}\eff - p_{\delta}\cdot ||\Sigma_{\C h}^{1/2}\eff||_2 \leq 0, \nonumber
\end{align}
where $(\C h) \sim \mathcal{N}(\mu_{\C h}, \Sigma_{\C h})$, $\delta \leq \frac{1}{2}$ and $p_{\delta} = \Phi^{-1}(\delta)$. Our goal is to gain insights into the agent's optimal effort profile for the cost function class outlined in Eq.~\eqref{eq:cost}. While Program~\eqref{opt:agent_br_convex} can be complex and highly non-convex in the general case, we highlight that we can still obtain structural results for $\ell_1$-costs and a semi-closed form characterization for $\ell_2$-costs.

\subsubsection{Case $1$ (Weighted $\ell_1$-norm costs)}
We have the following optimization problem for the agent: 
\begin{align*}
    \min_{\eff} \quad &\sum_{f \in \cF} c_f |\eff_f| \quad \text{s.t.}\\
    & \alpha - \mu_{\C h}^{\top}\eff - p_{\delta}\cdot ||\Sigma_{\C h}^{1/2}\eff||_2 \leq 0.
\end{align*}

Our first result shows that there is a sharp contrast in the structure of the optimal effort profile for $\ell_1$ costs between the complete information setting and partially incomplete information setting.  
\begin{lem}\label{lem:l1_incomp}
Under partially incomplete information, the optimal effort profile $\effopt$ for an agent with weighted $\ell_1$-norm costs requires investment of effort into more than one feature in the worst case.  
\end{lem}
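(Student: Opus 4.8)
The plan is to prove the lemma by an explicit construction: exhibit one instance (in Model~$1$, where uncertainty is only over the classifier) with $|\cF|=2$ in which \emph{every} cost-minimizing effort profile has two nonzero coordinates. The conceptual point driving the construction is the contrast with the complete-information program~\eqref{opt:linear}: there the constraint $(\C h_0)^\top\eff\geq\alpha$ is linear, so its feasible region is a halfspace and an $\ell_1$-minimizer can be taken at a coordinate vertex (Lemma~\ref{lem:linear_onefeature}). Under partial uncertainty the chance constraint in~\eqref{opt:agent_br_convex} instead contains the concave term $p_\delta\,\|\Sigma_{\C h}^{1/2}\eff\|_2$ with $p_\delta=\Phi^{-1}(\delta)<0$ (taking $\delta<\tfrac12$), which curves the feasible boundary and penalizes pouring all effort into a single, high-variance direction; spreading effort across features reduces that penalty, and I will tune the instance so that this makes a two-feature profile strictly cheaper than any one-feature profile.

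Concretely, I take the causal graph to be two isolated nodes, so $\C=I$, and place all uncertainty in the classifier via the prior $h\sim\mathcal N(\mu,\sigma^2 I)$ with $\mu=(1,1)^\top$; then $\C h=h\sim\mathcal N(\mu,\sigma^2 I)$, i.e.\ $\mu_{\C h}=(1,1)^\top$ and $\Sigma_{\C h}=\sigma^2 I$ (positive definite, so Lemma~\ref{lem:partial_incomp_convex} applies). Let $c=\mathbf 1$, let $\alpha>0$, and choose $\sigma$ small enough that $q:=-p_\delta\sigma\in(0,1)$. The agent's program~\eqref{opt:agent_br_convex} then reads: minimize $\|\eff\|_1$ subject to $\eff_1+\eff_2-q\,\|\eff\|_2\geq\alpha$.

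The remaining computation is short. First, the program is feasible and has strictly positive optimal value, since $\eff=0$ violates the constraint ($\alpha>0$) while $\eff=(t,t)$ for large $t$ satisfies it. Next, any single-coordinate profile, WLOG $\eff=(s,0)$: its constraint value is $s(1-q)$ for $s\geq 0$ and is negative for $s<0$, so feasibility forces $s\geq\alpha/(1-q)$ and hence cost at least $\alpha/(1-q)$. On the other hand, the diversified profile $\eff=(t,t)$ with $t=\alpha/(2-\sqrt 2\,q)$ has constraint value $t(2-\sqrt2\,q)=\alpha$, hence is feasible, with cost $2t=2\alpha/(2-\sqrt2\,q)$. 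Finally, since $q>0$ and $2>\sqrt2$ we have $2(1-q)<2-\sqrt2\,q$, so $2\alpha/(2-\sqrt2\,q)<\alpha/(1-q)$: the diversified profile strictly beats every single-coordinate profile. Therefore no optimal effort profile is supported on a single feature, which is the claim.

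I do not expect a real obstacle here; the two points to state carefully rather than belabor are (i) that the induced pair $(\mu_{\C h},\Sigma_{\C h})$ is genuinely realizable — handled by the two-isolated-nodes, classifier-only-uncertainty choice, which makes $\C h$ literally equal to the classifier prior — and (ii) bookkeeping on the sign of $p_\delta$ and on the range $q<1$ that keeps the relevant profiles feasible and costs finite. If a stronger statement is wanted, I will add a one-line strengthening: the instance is symmetric under swapping the two coordinates and~\eqref{opt:agent_br_convex} is convex, so averaging any optimum with its mirror produces a symmetric optimum $(u,u)$, which by the binding constraint is $u=\alpha/(2-\sqrt2\,q)>0$; thus the optimum itself — not merely some good feasible profile — spreads effort across both features.
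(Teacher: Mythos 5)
Your proposal is correct and follows essentially the same route as the paper's proof: a symmetric two-feature instance with isotropic covariance $\Sigma_{\C h}=\sigma^2 I$, where the cost of the best single-feature profile, $\alpha/(\bar\mu+p_\delta\sigma)$, is compared against the symmetric two-feature profile of cost $\alpha/(\bar\mu+p_\delta\sigma/\sqrt{2})$, which is strictly cheaper because $p_\delta<0$ --- your computation is the same after normalizing $\bar\mu=c=1$ and writing $q=-p_\delta\sigma$. The only additions beyond the paper are cosmetic but welcome: you verify realizability of $(\mu_{\C h},\Sigma_{\C h})$ via the two-isolated-node graph with classifier-only uncertainty, and you bound the cost of \emph{every} single-coordinate profile rather than only the binding corner points.
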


The proof can be found in Appendix~\ref{app:l1_incomp}.

\subsubsection{Case $2$ ($\ell_2$-norm costs)}
We have the following optimization problem for the agent: 
\begin{align*}
    \min_{\eff} \quad &||\eff||_2 \quad \text{s.t.}\\
    & \alpha - \mu_{\C h}^{\top}\eff - p_{\delta}\cdot ||\Sigma_{\C h}^{1/2}\eff||_2 \leq 0.
\end{align*}
\begin{thm}\label{thm:incomp_l2}
The optimal effort profile $\effopt$ for an agent with $\ell_2$-norm cost function under partially incomplete information, is of the following form: 
\[
       \effopt = \lambda^* \left(k_1 I + k_2 \Sigma_{\C h} \right)^{-1}\mu_{\C h},
\]
where $k_1, k_2, \lambda^* > 0$. 
\end{thm}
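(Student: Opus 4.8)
The plan is to characterize the unique optimum of the agent's convex program~\eqref{opt:agent_br_convex} through its Karush--Kuhn--Tucker (KKT) conditions. First I would dispose of the trivial case $\alpha \le 0$, where $\effopt = 0$ is feasible and cost-optimal (as in Proposition~\ref{prop:alpha_neg}), and assume $\alpha > 0$. Since minimizing $\|\eff\|_2$ over the feasible set is equivalent to minimizing the smooth, strictly convex objective $\tfrac12\|\eff\|_2^2$, the program has a \emph{unique} minimizer $\effopt$ (the feasible set is closed and the objective is coercive), and by Lemma~\ref{lem:partial_incomp_convex} the problem is convex. I would write the constraint function $g(\eff) = \alpha - \mu_{\C h}^\top \eff - p_\delta\|\Sigma_{\C h}^{1/2}\eff\|_2$, which is convex because $p_\delta = \Phi^{-1}(\delta) \le 0$ for $\delta \le \tfrac12$; I may also assume $\Sigma_{\C h} \succ 0$ as in Lemma~\ref{lem:delta_charac} (the merely-PSD case is analogous via subgradients). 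Slater's condition holds whenever $\delta$ strictly exceeds the feasibility threshold of Lemma~\ref{lem:delta_charac}, so the KKT conditions are both necessary and sufficient.

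Next I would form the Lagrangian $L(\eff,\lambda) = \tfrac12\|\eff\|_2^2 + \lambda\, g(\eff)$ with $\lambda \ge 0$ and argue that the constraint is \emph{active} at $\effopt$: if $g(\effopt) < 0$, then since $\alpha > 0$ we would have $\mu_{\C h}^\top\effopt + p_\delta\|\Sigma_{\C h}^{1/2}\effopt\|_2 > \alpha > 0$, so scaling $\effopt$ down by a factor slightly below $1$ remains feasible while strictly decreasing the cost, contradicting optimality. Hence $\effopt \ne 0$, so $t := \|\Sigma_{\C h}^{1/2}\effopt\|_2 > 0$, the map $\eff \mapsto \|\Sigma_{\C h}^{1/2}\eff\|_2$ is differentiable at $\effopt$, and $\nabla g(\effopt) = -\mu_{\C h} - \tfrac{p_\delta}{t}\Sigma_{\C h}\effopt$. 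Stationarity then reads
\[
\effopt - \lambda^*\mu_{\C h} - \tfrac{\lambda^* p_\delta}{t}\,\Sigma_{\C h}\effopt = 0,
\]
and complementary slackness forces $\lambda^* > 0$, since $\lambda^* = 0$ in the stationarity equation would give $\effopt = 0$, which is infeasible when $\alpha > 0$.

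Finally I would rearrange the stationarity identity into
\[
\Bigl(I - \tfrac{\lambda^* p_\delta}{t}\,\Sigma_{\C h}\Bigr)\effopt = \lambda^*\mu_{\C h},
\]
set $k_1 = 1$ and $k_2 = -\lambda^* p_\delta / t$, and observe that for $\delta < \tfrac12$ we have $p_\delta < 0$, so $k_2 = \lambda^*(-p_\delta)/t > 0$ (the boundary case $\delta = \tfrac12$ gives $p_\delta = 0$, $k_2 = 0$, and $\effopt = \lambda^*\mu_{\C h}$, the degenerate sub-case). Since $\Sigma_{\C h}\succeq 0$ and $k_1, k_2 > 0$, the matrix $k_1 I + k_2\Sigma_{\C h} \succeq k_1 I \succ 0$ is invertible, whence $\effopt = \lambda^*(k_1 I + k_2\Sigma_{\C h})^{-1}\mu_{\C h}$ with $k_1, k_2, \lambda^* > 0$, which is exactly the claimed form (the representation has the harmless scaling freedom $(\lambda^*,k_1,k_2)\mapsto(c\lambda^*, ck_1, ck_2)$, and the KKT derivation fixes $k_1 = 1$).

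I expect the main obstacle to be the genuinely \emph{semi-closed-form} nature of the statement: the constants $\lambda^*$ and $t$ — hence $k_2$ — are themselves functions of $\effopt$, pinned down only implicitly by the two scalar equations $t = \|\Sigma_{\C h}^{1/2}\effopt\|_2$ and $\mu_{\C h}^\top\effopt + p_\delta t = \alpha$ (the active constraint) after substituting the stationarity formula for $\effopt$. Turning this implicit system into a bona fide solution with $\lambda^*, t > 0$, together with the care needed for the constraint qualification and for the differentiability of $\|\Sigma_{\C h}^{1/2}\cdot\|_2$ at the optimum, is where the real work lies; the algebraic rearrangement itself is routine.
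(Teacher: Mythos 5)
Your proposal is correct and follows essentially the same route as the paper: write the KKT conditions for the convex program~\eqref{opt:agent_br_convex}, argue the constraint is active and $\lambda^\ast>0$, and rearrange stationarity into $\effopt=\lambda^\ast(k_1 I+k_2\Sigma_{\C h})^{-1}\mu_{\C h}$ with the constants pinned down only implicitly. The only differences are cosmetic (you minimize $\tfrac12\|\eff\|_2^2$ and normalize $k_1=1$, whereas the paper keeps $\|\eff\|_2$ and gets $k_1=1/\|\effopt\|_2$), and your extra care about Slater's condition, activeness, and differentiability at $\effopt\neq 0$ is if anything more rigorous than the paper's argument.
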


The full proof can be found in Appendix~\ref{app:incomp_l2}.

\xhdr{An interpretable special case.}
We study a special case of our problem where we can provide an intuitive explanation of how agents decide to exert effort. In particular, we assume the following. 
\begin{aspt}
$\Sigma_{\C h}$ is a diagonal matrix. We denote $(\Sigma_{\C h})_f$ the $f$-th diagonal entry, which corresponds to the uncertainty with respect to the total contribution by feature $f$.
\end{aspt}

We first note that $\Sigma_{\C h}$ being diagonal arises in very natural settings. One such setting is when i) the uncertainty is on the causal graph $\cG$ and ii) the causal graph is bipartite: i.e., features are either \emph{causal} (they affect other features, but cannot be affected themselves) or \emph{proxy} (they are affected by causal features, but cannot affect any other feature). In this case, causal features only have outgoing edges, while proxy features only have incoming edges. Formally: 

\begin{prop}\label{prop:bipartite}
Suppose $\cG$ is a bipartite graph; further, suppose the agent only has uncertainty over the weights of the graph (model $2$), i.e. $\Sigma_h = \bf{0}$. Then, $\Sigma_{\C h}$ is a diagonal matrix.
\end{prop}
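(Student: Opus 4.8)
The plan is to use the bipartite structure of $\cG$ to collapse the contribution matrix, and then read the covariance structure of $\C h_0$ directly off the rows of the adjacency matrix $A$. First I would note that in a bipartite causal graph every directed path has length at most one: an internal node of a path of length two or more would need both an incoming and an outgoing edge, but by hypothesis every feature is \emph{either} causal (only outgoing edges) \emph{or} proxy (only incoming edges), so no such node exists. Equivalently, $A^2 = \mathbf{0}$, so by Observation~\ref{obs:comoute_contri} we get $\C = \sum_{k=0}^{|\cF|} A^k = I + A$ (equivalently $\C_{ij} = A_{ij}$ for $i \neq i$, sorry $i \neq j$, and $\C_{ii} = 1$).

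Next, since we are in Model~2, the classifier is known and $\Sigma_h = \mathbf{0}$, so $h = h_0$ is deterministic and $\C h = (I+A)h_0 = h_0 + A h_0$. Adding the constant vector $h_0$ leaves second moments unchanged, so $\Sigma_{\C h} = \mathrm{Cov}(A h_0)$, where $(A h_0)_i = \sum_{j:\, a_{ij} \in \cA} w(a_{ij})\,(h_0)_j$ is a random variable because the edge weights are drawn from $\Pi_{\C}$. The $i$-th row of $A$ is identically zero whenever $i$ is a proxy feature (proxies have no outgoing edges), hence $(A h_0)_i$ is deterministically $0$ for such $i$, and the corresponding row and column of $\Sigma_{\C h}$ vanish; this disposes of every off-diagonal entry touching a proxy feature.

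It therefore remains to show that $\mathrm{Cov}\big((A h_0)_i,\,(A h_0)_{i'}\big) = 0$ for two distinct causal features $i \neq i'$, which is the crux of the argument. Here $(A h_0)_i$ is a linear combination of the weights of edges leaving $i$ and $(A h_0)_{i'}$ a linear combination of the weights of edges leaving $i'$; since $i \neq i'$, these two edge sets are disjoint. As the weights of distinct edges are uncorrelated under the prior $\Pi_{\C}$ (i.e.\ $\Sigma_w$ diagonal), every cross term $\mathrm{Cov}\big(w(a_{ij}),w(a_{i'j'})\big)$ vanishes, so the covariance of the two linear combinations is zero. Combined with the previous step, all off-diagonal entries of $\Sigma_{\C h}$ are zero, so $\Sigma_{\C h}$ is diagonal, with $(\Sigma_{\C h})_f = \sum_{j:\, a_{fj}\in\cA}(h_0)_j^2\,\mathrm{Var}(w(a_{fj}))$ for causal $f$ and $(\Sigma_{\C h})_f = 0$ for proxy $f$.

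The main obstacle is exactly this last step: the conclusion relies on edges with distinct sources having uncorrelated weights, not merely on bipartiteness. If $\Pi_{\C}$ correlated an edge out of one causal node with an edge out of another (and the corresponding proxy coordinates of $h_0$ were nonzero), the associated off-diagonal entry of $\Sigma_{\C h}$ would survive; so the clean statement implicitly presumes an independent (or at least source-wise uncorrelated) edge-weight prior — the natural modeling assumption, and the one used in the case study.
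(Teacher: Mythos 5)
Your proof is correct and follows essentially the same route as the paper's: reduce $\Sigma_{\C h}$ to the covariance of $\C h_0$, dispose of proxy features via their zero rows, and kill the causal--causal covariances because the two features' outgoing edge sets are disjoint and their weights are uncorrelated --- an assumption the paper's own proof also invokes (``those edges are disjoint and therefore, independent''), so your closing caveat about $\Sigma_w$ matches the paper's implicit modeling choice rather than exposing a gap unique to your argument. If anything, your version via $\C = I + A$ is slightly more careful: the paper asserts $(\C h_0)_f = 0$ for a proxy feature $f$, whereas by $\C_{ff}=1$ it actually equals the constant $h_{0,f}$, which you correctly treat as merely deterministic (hence still contributing zero covariance).
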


The proof can be found in Appendix~\ref{app:bipartite}.

We now highlight our result relating the effort spent on feature $f$ to the total expected contribution of that feature, $(\mu_{\C h})_f$, and the variance of the contribution of said feature, $(\Sigma_{\C h})_f$:

\begin{cor}\label{cor:prop_effort}
If $\Sigma_{\C h}$ is a diagonal matrix with entries $(\Sigma_{\C h})_f$ corresponding to feature $f$, then the optimal effort profile $\effopt$ has the following form: 
\[
          \effopt_f =  \frac{\lambda^*(\mu_{\C h})_f}{k_1 + k_2 \cdot (\Sigma_{\C h})_f} \quad \forall~f \in \cF.
\]
\end{cor}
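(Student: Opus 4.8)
The plan is to specialize Theorem~\ref{thm:incomp_l2} to the case where $\Sigma_{\C h}$ is diagonal, and simply read off the $f$-th coordinate of the closed-form expression $\effopt = \lambda^*(k_1 I + k_2 \Sigma_{\C h})^{-1}\mu_{\C h}$. The key observation is that when $\Sigma_{\C h}$ is diagonal with entries $(\Sigma_{\C h})_f$, the matrix $k_1 I + k_2 \Sigma_{\C h}$ is also diagonal, with $f$-th diagonal entry $k_1 + k_2(\Sigma_{\C h})_f$. Since $k_1, k_2 > 0$ and $(\Sigma_{\C h})_f \geq 0$ (it is a variance), each such diagonal entry is strictly positive, so the matrix is invertible and its inverse is diagonal with entries $1/(k_1 + k_2(\Sigma_{\C h})_f)$. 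Multiplying this diagonal inverse by the vector $\mu_{\C h}$ acts coordinatewise, giving $\effopt_f = \lambda^*(\mu_{\C h})_f/(k_1 + k_2(\Sigma_{\C h})_f)$, which is exactly the claimed form.

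Concretely, the steps I would carry out are: (i) invoke Theorem~\ref{thm:incomp_l2} to write $\effopt = \lambda^*(k_1 I + k_2 \Sigma_{\C h})^{-1}\mu_{\C h}$ with $k_1, k_2, \lambda^* > 0$; (ii) observe that under the diagonality assumption, $M := k_1 I + k_2 \Sigma_{\C h}$ is diagonal with $M_{ff} = k_1 + k_2(\Sigma_{\C h})_f > 0$; (iii) conclude $M^{-1}$ is diagonal with $(M^{-1})_{ff} = 1/(k_1 + k_2(\Sigma_{\C h})_f)$; and (iv) take the $f$-th component of $M^{-1}\mu_{\C h}$ to obtain the stated formula. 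This is essentially a one-line corollary once the theorem is in hand.

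I do not anticipate a substantive obstacle here — the corollary is purely a matter of evaluating a diagonal matrix inverse, so there is no real ``hard part.'' The only minor point worth stating carefully is why $k_1 + k_2(\Sigma_{\C h})_f$ is guaranteed to be nonzero (and hence the inverse is well-defined coordinatewise): this follows because $k_1, k_2$ are strictly positive constants coming from Theorem~\ref{thm:incomp_l2} and $(\Sigma_{\C h})_f$, being a diagonal entry of a covariance matrix, is nonnegative. One could also remark that this makes the monotonicity interpretation transparent: $\effopt_f$ is increasing in $(\mu_{\C h})_f$ and decreasing in $(\Sigma_{\C h})_f$, matching the narrative in the text preceding the corollary.
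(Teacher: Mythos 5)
Your proposal is correct and matches the paper's (implicit) argument: the corollary is obtained exactly by specializing Theorem~\ref{thm:incomp_l2} to a diagonal $\Sigma_{\C h}$, noting that $k_1 I + k_2\Sigma_{\C h}$ is then diagonal with positive entries, and reading off the $f$-th coordinate of $\lambda^*(k_1 I + k_2\Sigma_{\C h})^{-1}\mu_{\C h}$. Your added remark on why each diagonal entry is nonzero is a fine touch but does not change the argument.
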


\noindent 
This result shows that in the optimal effort profile, the agent invests more effort into features that have a higher expected contribution $(\mu_{Ch})_f$. Further, the denominator highlights that agent may shy away from features they have a lot of uncertainty about: a high value of $(\Sigma_{\C h})_f$ leads to a lower effort invested in that feature. 

\subsection{$\beta$-desirability under Incomplete Information} 
We conclude this section with a discussion on how to induce $\beta$-desirable effort profiles under incomplete information. As we see throughout Section~\ref{subsec:incomp_effort}, it may be difficult to characterize the agent's optimal effort in closed form under incomplete information, except for some limited cases. Here, we focus on providing broad insights here and build on this discussion through numerical experiments in Section~\ref{sec:experiments}.  

\paragraph{The Interpretable Special Case: Diagonal Covariance $\Sigma_{\C h}$:}
In this special setting where we have an interpretable form of the agent's optimal effort profile, we can identify conditions that guarantee investment in $\beta$-desirable effort profiles by rational agents. We present the following result:
\begin{cor}\label{corr:beta_model2}
Suppose that the covariance matrix of feature importance, given by $\Sigma_{\C h}$, is a diagonal matrix. In that setting, if all features have the same overall level of uncertainty and the mean feature importance vector $\mu_{\C h}$ satisfies (which follows from Corollary~\ref{cor:prop_effort}): 
\[
      \| \left(\mu_{\C h}\right)_{\des} \|_2 \geq  \frac{\beta}{\sqrt{1-\beta^2}} \| \left(\mu_{\C h}\right)_{\und} \|_2,
\]
then the best response of a rational agent with $\ell_2$-norm cost is to invest in a $\beta$-desirable effort profile. 
\end{cor}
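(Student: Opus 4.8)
\textbf{Proof proposal for Corollary~\ref{corr:beta_model2}.}
The plan is to reduce the claim to the proportionality characterization of the optimal effort already established in Corollary~\ref{cor:prop_effort}, and then verify the $\beta$-desirability inequality of Definition~\ref{defn:good} by a direct computation with $\ell_2$-norms. First I would invoke Corollary~\ref{cor:prop_effort}: since $\Sigma_{\C h}$ is diagonal, the optimal $\ell_2$-cost effort profile under partially incomplete information satisfies
\[
\effopt_f = \frac{\lambda^* (\mu_{\C h})_f}{k_1 + k_2 \cdot (\Sigma_{\C h})_f} \qquad \forall f \in \cF,
\]
with $k_1, k_2, \lambda^* > 0$ (these positivity facts come from Theorem~\ref{thm:incomp_l2}, whose hypotheses hold here because $\delta \leq \tfrac12$ and, implicitly, the program is feasible in the sense of Lemma~\ref{lem:delta_charac}).

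The key observation is that the assumption ``all features have the same overall level of uncertainty'' means $(\Sigma_{\C h})_f = \sigma^2$ for a common constant $\sigma^2 \geq 0$ and every $f$. Hence the denominator $k_1 + k_2\sigma^2$ is independent of $f$, and writing $\kappa := \lambda^* / (k_1 + k_2 \sigma^2) > 0$ we get $\effopt = \kappa\,\mu_{\C h}$. In particular, restricting to the desirable coordinates, $\effopt_\des = \kappa\,(\mu_{\C h})_\des$, so $\|\effopt_\des\|_2 = \kappa\,\|(\mu_{\C h})_\des\|_2$ and $\|\effopt\|_2 = \kappa\,\|\mu_{\C h}\|_2$; the positive scalar $\kappa$ will cancel out of the desirability test.

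It then remains to check $\|\effopt_\des\|_2 \geq \beta\,\|\effopt\|_2$. Since $\des$ and $\und$ partition $\cF$ (as set up in Section~\ref{sec:model}), the squared Euclidean norm splits as $\|\mu_{\C h}\|_2^2 = \|(\mu_{\C h})_\des\|_2^2 + \|(\mu_{\C h})_\und\|_2^2$. Substituting and squaring, $\|\effopt_\des\|_2 \geq \beta\|\effopt\|_2$ is equivalent to $(1-\beta^2)\,\|(\mu_{\C h})_\des\|_2^2 \geq \beta^2\,\|(\mu_{\C h})_\und\|_2^2$, i.e.\ to $\|(\mu_{\C h})_\des\|_2 \geq \tfrac{\beta}{\sqrt{1-\beta^2}}\,\|(\mu_{\C h})_\und\|_2$ (using $0 < \beta \le 1$ so the square root is well defined; the edge case $\beta = 1$ forces $\|(\mu_{\C h})_\und\|_2 = 0$ and can be treated as a limit). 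This is precisely the stated hypothesis, so by Definition~\ref{defn:good} the best response $\effopt$ is $\beta$-desirable, completing the proof.

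There is no real obstacle here beyond bookkeeping: the only points to be careful about are (i) confirming that the proportionality constant $\kappa$ is strictly positive so that it cancels cleanly, and (ii) making sure the hypotheses needed to apply Corollary~\ref{cor:prop_effort} / Theorem~\ref{thm:incomp_l2} (namely $\delta \le 1/2$ and feasibility of the agent's program) are in force, which they are in the partially-incomplete-information regime under discussion.
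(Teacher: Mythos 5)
Your proposal is correct and matches the paper's intended argument: the paper gives no separate proof but states the corollary "follows from Corollary~\ref{cor:prop_effort}," i.e., with equal per-feature uncertainty the optimal effort is a positive scalar multiple of $\mu_{\C h}$, and the $\beta$-desirability test then reduces, via the orthogonal split $\|\mu_{\C h}\|_2^2 = \|(\mu_{\C h})_{\des}\|_2^2 + \|(\mu_{\C h})_{\und}\|_2^2$, to exactly the stated inequality — the same bookkeeping you carry out.
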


The above result should be intuitive---when agents face the same degree of uncertainty about the importance of all features, they choose which features to invest effort in based on the mean importance of the features. Therefore, it makes sense that the higher the total net importance (measured by the $\ell_2$-norm) of the set of desirable features, higher the incentive for agents to invest in desirable effort profiles. Finally, we relate $\beta$-desirability to the uncertainty on given features: 

\begin{cor}\label{cor:prop_effort_variance}
$\effopt_f =  \frac{\lambda^*(\mu_{\C h})_f}{k_1 + k_2 \cdot (\Sigma_{\C h})_f}$
is decreasing in $(\Sigma_{\C h})_f$.
\end{cor}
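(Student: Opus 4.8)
The plan is to treat $\effopt_f$ as a function of the single real variable $v := (\Sigma_{\C h})_f$ and show it is monotonically decreasing on the relevant domain $v > 0$. Write $g(v) = \frac{\lambda^* (\mu_{\C h})_f}{k_1 + k_2 v}$. By Theorem~\ref{thm:incomp_l2} the constants satisfy $k_1, k_2, \lambda^* > 0$, and by Corollary~\ref{cor:prop_effort} (together with Proposition~\ref{prop:y_pos}, which lets us restrict attention to non-negative contributions and hence non-negative effort) we may take $(\mu_{\C h})_f \geq 0$; the case $(\mu_{\C h})_f = 0$ gives $g \equiv 0$, which is (weakly) decreasing, so the substantive case is $(\mu_{\C h})_f > 0$.

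The key step is the elementary computation
\[
      g'(v) = -\frac{\lambda^* (\mu_{\C h})_f \, k_2}{(k_1 + k_2 v)^2}.
\]
Since $\lambda^* > 0$, $k_2 > 0$, $(\mu_{\C h})_f > 0$, and the denominator $(k_1 + k_2 v)^2 > 0$ for all $v \geq 0$ (using $k_1 > 0$), we get $g'(v) < 0$ on the entire domain, so $g$ is strictly decreasing in $v$. Equivalently, one can avoid calculus entirely: for $0 \le v_1 < v_2$ we have $k_1 + k_2 v_1 < k_1 + k_2 v_2$, both quantities positive, so their reciprocals reverse the inequality, and multiplying through by the nonnegative constant $\lambda^*(\mu_{\C h})_f$ preserves it, giving $g(v_1) \ge g(v_2)$ with strict inequality when $(\mu_{\C h})_f > 0$.

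There is essentially no obstacle here — the statement is an immediate consequence of the closed form in Corollary~\ref{cor:prop_effort} and the sign constraints $k_1, k_2, \lambda^* > 0$. The only point requiring a word of care is justifying the sign of $(\mu_{\C h})_f$: if one does not invoke Proposition~\ref{prop:y_pos} / a non-negativity normalization, then for $(\mu_{\C h})_f < 0$ the function $g$ would be \emph{increasing} in $v$, so the statement should be read under the standing convention (used throughout Section~\ref{subsec:incomp_effort}) that feature contributions are oriented to be non-negative. I would state this convention explicitly and then give the one-line derivative argument.
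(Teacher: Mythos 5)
Your proof is correct and matches the paper's (implicit) reasoning: the paper gives no separate proof for this corollary, treating it as an immediate consequence of the closed form in Corollary~\ref{cor:prop_effort} together with $k_1, k_2, \lambda^* > 0$, which is exactly your one-line monotonicity argument. Your remark that the claim requires $(\mu_{\C h})_f \geq 0$ is a fair point of care that the paper leaves as a standing convention in its discussion.
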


In the general case where different features have different levels of uncertainty, if desirable features have a high degree of uncertainty, this pushes agents away from $\beta$-desirable effort profiles. On the other hand, more uncertainty on undesirable features is good for $\beta$-desirability. This is because having a higher degree of uncertainty (higher variance $(\Sigma_{\C h})_f$) about the importance of a feature actively discourages agents from investing effort into said feature.

\paragraph{What does it mean for $\Sigma_{\C h}$ to not be diagonal?} We provide a short discussion of when non-diagonal covariance matrix arise. In particular, non-diagonal covariance matrices $\Sigma_{\C h}$ arise:
\begin{enumerate}
    \item always under Model $1$ (i.e. where the causal graph is fully known, but there is uncertainty over the classifier), and 
    \item under Model $2$ when the causal graph $\cG$ is not bipartite.
\end{enumerate}
We explore the non-diagonal $\Sigma_{\C h}$ case in greater detail in Section~\ref{sec:experiments}, with experiments on real data that consider more general cases where $\Sigma_{\C h}$ may not be diagonal, and in particular covering Model $1$, when the classifier is not fully known to an agent. Our experiments suggest that many of the same insights about $\beta$-desirability hold (even \emph{without} the assumption that $\Sigma_{\C h}$ is diagonal).

\section{Numerical Experiments}\label{sec:experiments}

Our experimental study focuses on a setting where the learner is trying to reduce a population's risk of cardiovascular disease. To do so, we identify relevant features and build a causal graph based on the recent medical study of~\cite{causal_cvd}. Their study aims to identify the causal links between features such as smoking, diet, or obesity, and whether a patient may develop a cardiovascular disease (CVD). The study is based on an expert survey where several experts were asked to quantify whether specific links between two features as well as links between features and the outcome variable (CVD) are \emph{causal} or \emph{merely correlated}. 

\subsection{Experimental Setup}
\paragraph{Identifying Relevant Features} We identify a subset of 8 features used in~\cite{causal_cvd} that we focus on in our experimental study. In particular, we did not include features that cannot be changed such as age or ethnicity, and only include the features that can be modified by an individual. The 8 features we identified are: alcohol consumption, diet, physical activity, smoking, diabetes mellitus (DM), hyperlypidia (HPL), hypertension (HPT), and obesity. We normalized features to be between $[0,1]$\footnote{For simplicity and wlog, we assume that $0$ is the least ``healthy'' value of the feature, and $1$ is the ``healthiest'' value of the feature. For example, for smoking, $1$ maps to not smoking; for activity, $1$ maps to high amount of weekly physical activity.}.

Among these features, and as noted in our introduction, a principal (i.e., a doctor) would like to incentivize people to focus on preventative, lifestyle interventions over medical treatment interventions. Hence, we separate them to desirable and undesirable to modify as follows: 
\begin{itemize}
\item \emph{Desirable:} Alcohol, Diet, Activity, Smoking. Note that desirable means here that these features are desirable to \emph{modify}, not that, for example, alcohol consumption is desirable. 
\item \emph{Undesirable:} DM, HPL, HPT, Obesity. Note that these features are not ``undesirable'' per se, but rather less desirable than lifestyle interventions.
\end{itemize}

\paragraph{Building the Causal Graph:} The study of~\cite{causal_cvd} asked the experts to report the likelihood of causation on a Likert scale from $1$ to $7$, which is then transformed into  ``fuzzy score'' via the Fuzzy Delphi Method~\cite{linstone1975delphi}. We denote this score $s$. A fuzzy score of $0.5$ and above indicates that experts at least moderately agree with a relationship being causal, with an increasing score $s$ indicating stronger agreement. A fuzzy score of $0.5$ or below indicates that the experts at best disagree with the feature being causal, with the strength of the disagreement increasing as the score goes down. 

We follow the expert agreement of~\cite{causal_cvd} to build our causal links. Specifically, we identify a link as causal if and only if $s > 0.5$. Further, since a score of $0.5$ denotes that experts are at the boundary of agreeing vs disagreeing on causality, we renormalize our scores to be between $0$ and $1$: to do so, we apply a linear transformation that maps $s = 0.5$ to a causal weight of $0$, and $s = 1$ to a causal weight of $1$. We obtain the following graph (Figure~\ref{fig:causal_graph}): 
\begin{figure}[hbt!]
    \centering
    \includegraphics[width=.5\textwidth]{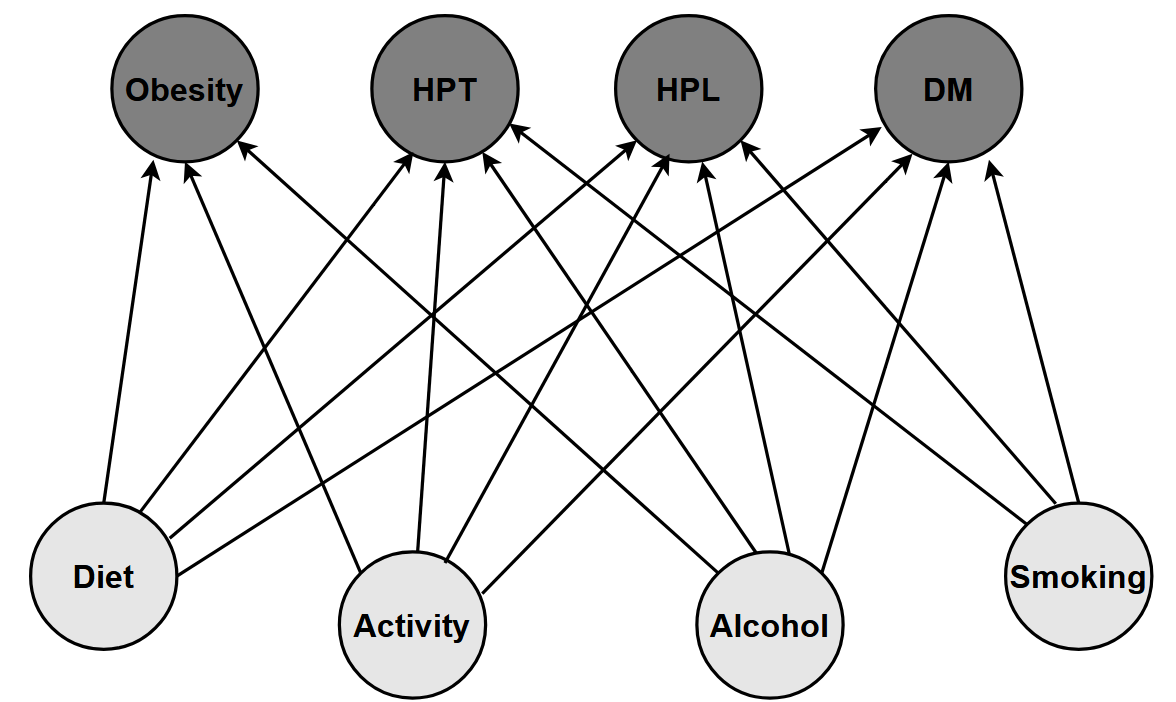}
    \caption{Causal graph of features which affect the output of interest ``Risk of Cardio-vascular disease (CVD)". There are $8$ features, all of which are \textit{causal}. The features at the bottom form the set of desirable features $\des$ and those on the top form the set of undesirable features $\und$. The causal links are indicated on the graph. This causal graph has a special structure: it is \textit{bipartite}. The edge weights are recorded in Table~\ref{table:weights} in Appendix~\ref{sec:app_addn_exp}.}
    \label{fig:causal_graph}
\end{figure}

\emph{Generating Prior Beliefs} We note that our desirable features are generally harder to observe than our undesirable features. First, DM, HPL, HPT, and Obesity are easy-to-quantify features that are also verifiable by a doctor (e.g., though blood work). On the other hand, lifestyle habits are not only hard to observe, but also often mis-reported to clinicians (i.e., under-reporting alcohol consumption, or lying about smoking to avoid insurance upcharges). Hence, we generate all our priors of $h$ to have both a mean and variance of $0$ for all desirable features (i.e., it is fully known that desirable features are not observed by a clinician, and so not used in the clinician's classifier for high risk of CVD), as they are effectively \emph{unobservable}. We consider four mean beliefs $\mu_h$ on the vector $h$, that we denote as follows: 
\begin{itemize}
\item \emph{DM}: There is a weight of $1$ on the ``DM'' feature, and $0$ on all others.
\item \emph{HPL}: There is a weight of $1$ on the ``HPL'' feature, and $0$ on all others.
\item \emph{HPT}: There is a weight of $1$ on the ``HPT'' feature, and $0$ on all others.
\item \emph{Obesity}: There is a weight of $1$ on the ``Obesity'' feature, and $0$ on all others.
\end{itemize}
We note that all other beliefs that only use undesirable, observable features are a linear combination of the four beliefs above, so our insights extend to general classifiers. Also, we demonstrate later that while the classifier does not put any weight on unobserved/desirable features, agents may still exert effort on them because they affect the observed/undesirable features used in the classifier. 

The variance of each of the desirable features is taken to be $0$ (there is complete information that no weight is put on these features in the principal's classifier). Further, we assume that all undesirable features have the same variance, which is parametrized by $\sigma > 0$ --- thus $\sigma$ is a measure of the level of incomplete information. Finally, the covariance matrix of the classifier ($\Sigma_h$) is taken to be diagonal for simplicity of exposition and interpretation, i.e., individuals' beliefs do not encode correlations between features in the deployed classifier.

\subsection{Experimental Results}
Note that we are operating under the regime where the agent has uncertainty only over the classifier $h$ and not over the causal graph (Model $1$), hence the contribution matrix $\C$ is fully known to agents. Agents are assumed to have unweighted $\ell_2$-norm costs.   

For each of the four classifiers described above, we document the \textit{mean contribution} of each feature, given by $\mu_{\C h} = \C \mu_h$ and the $\ell_2$ norm of the mean contribution over the set of desirable and undesirable features, given by $\ell_2(\des)$ and $\ell_2(\und)$ respectively. All values are recorded in Table~\ref{table:mu}. Also, note that due to our choice of the variance structure on $h$ (given by matrix $\Sigma_h = Diag(0,0,0,0,\sigma^2, \sigma^2, \sigma^2, \sigma^2)$ ), all four classifiers have the same covariance over the contribution vector $\C h$ given by $Cov(\C h) = \C \Sigma_h \C^{\top}$.\\

\begin{table}[t!]
\centering
\begin{tabular}{|c|| c|c|c|c || c|c|c|c||c|c|}
\hline 
Classifier & Alcohol & Diet & Activity & Smoking & DM & HPL & HPT & Obesity & $\ell_2(\des)$ & $\ell_2(\und)$
\\
\hline 
\hline 
DM & 0.1 & 0.84 & 0.82 & 0.52 & 1 & 0 & 0 & 0 & 1.28 & 1 
\\
\hline 
HPL & 0.14 & 0.84 & 0.82 & 0.34 & 0 & 1 & 0 & 0 & 1.23 & 1 
\\
\hline 
HPT & 0.62 & 0.84 & 0.82 & 0.86 & 0 & 0 & 1 & 0 & 1.58 & 1 
\\
\hline 
Obesity & 0.64 & 0.86 & 0.82 & 0 & 0 & 0 & 0 & 1 & 1.35 & 1 
\\
\hline
\end{tabular}
\caption{Mean contribution vector $\mu_{\C h}$ for the 4 classifiers: DM, HPL, HPT, Obesity}\label{table:mu}
\end{table}

\paragraph{Desirable features can be incentivized even if they are never observed.} In our four classifiers of choice, observe that no weight has been put on any of the features in set $\des$ because they represent features which cannot be directly observed. However, Figures~\ref{fig:beta_vs_sigma} and \ref{fig:l2_unknown_classifier}  demonstrate that agents still choose to invest significant effort into desirable features. This is a direct result of the \emph{causal relationship between features}. Observe that in the causal graph, the desirable features influence multiple undesirable features simultaneously. This means that an agent obtains a larger improvement in the undesirable features (which actually affect the agent's classification), not by modifying them directly, but by investing effort into desirable features.

\paragraph{Effect of total contribution on desirable vs undesirable features:} From table ~\ref{table:mu}, it is clear that all four classifiers have a higher $\ell_2$-norm on mean contribution over the set of desirable features compared to undesirable features. Naturally, we expect all 4 classifiers to achieve $\beta > 0.5$ under most circumstances (except at high levels of uncertainty), i.e., more than half of the total magnitude of effort should be invested into desirable features (as per Corollary~\ref{cor:prop_effort}). Indeed, this is in line with what we observe in Figure~\ref{fig:beta_vs_sigma}. We also note that, as expected, HPT is better than Obesity, which is better than DM, which is better than HPL for incentivizing desirable features: this is consistent with the ordering over $\ell_2(\des)$ across these four classifiers, again following the insights of Corollary~\ref{cor:prop_effort}.     

\paragraph{Effect of uncertainty level $\sigma$.} In Figure~\ref{fig:beta_vs_sigma}, we plot how $\beta$ varies as a function of the uncertainty parameter $\sigma$ for different classifiers. Higher $\sigma$ indicates a higher degree of uncertainty about the classifier. As $\sigma$ increases, all four classifiers degrade in terms of desirability ($\beta$). This is intuitive: at higher levels of uncertainty, the contribution of desirable features sees higher variance, as they affect not only themselves but also other features. Undesirable features then become safer to modify.

\begin{figure}[!ht]
  \centering
  \raisebox{20pt}{\parbox[b]{.11\textwidth}{}}%
  \subfloat[][$\delta = 0.1$]{\includegraphics[width=.33\textwidth]{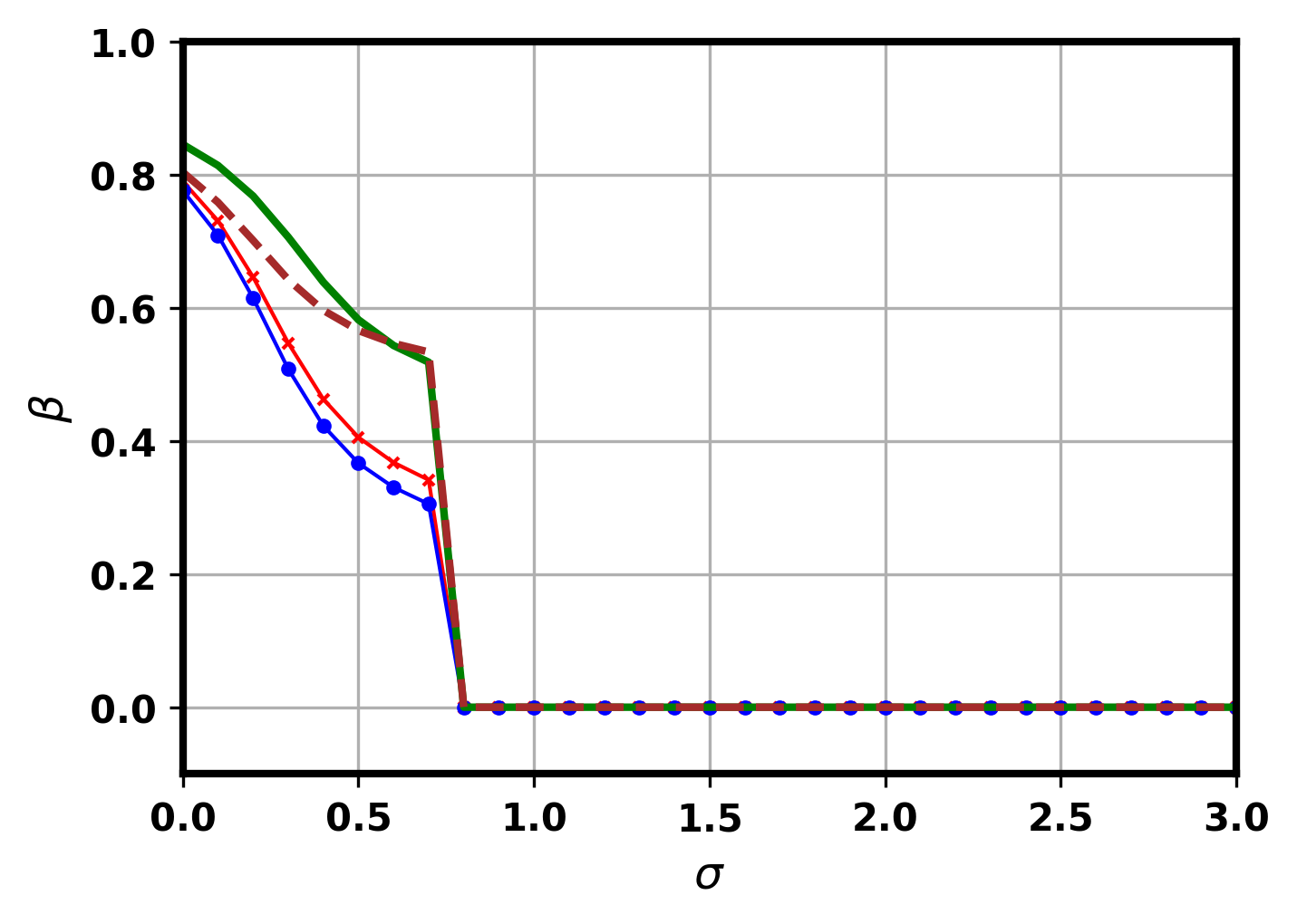}}\hfill
  \subfloat[][$\delta = 0.3$]{\includegraphics[width=.33\textwidth]{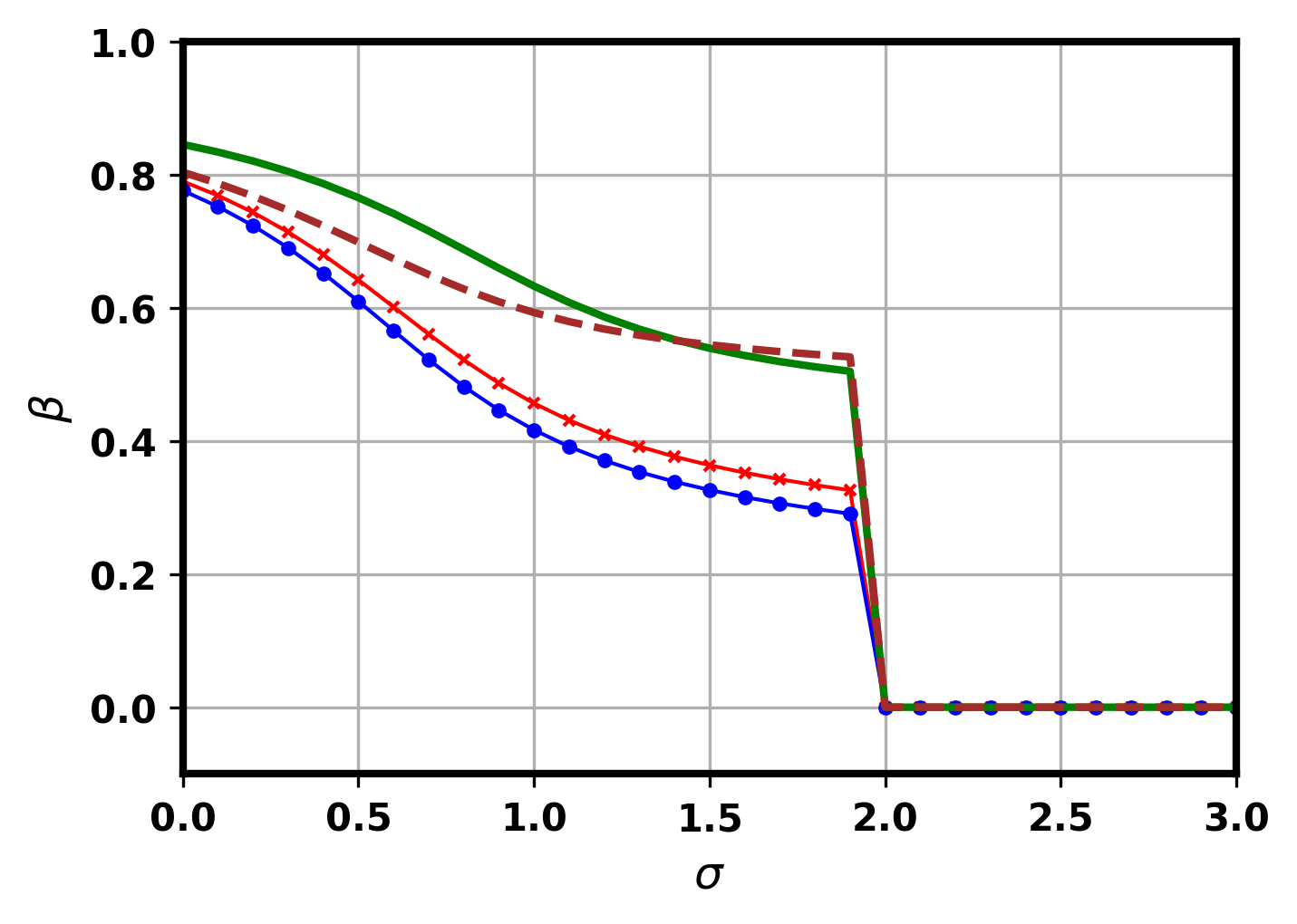}}\hfill
  \subfloat[][$\delta = 0.5$]{\includegraphics[width=.33\textwidth]{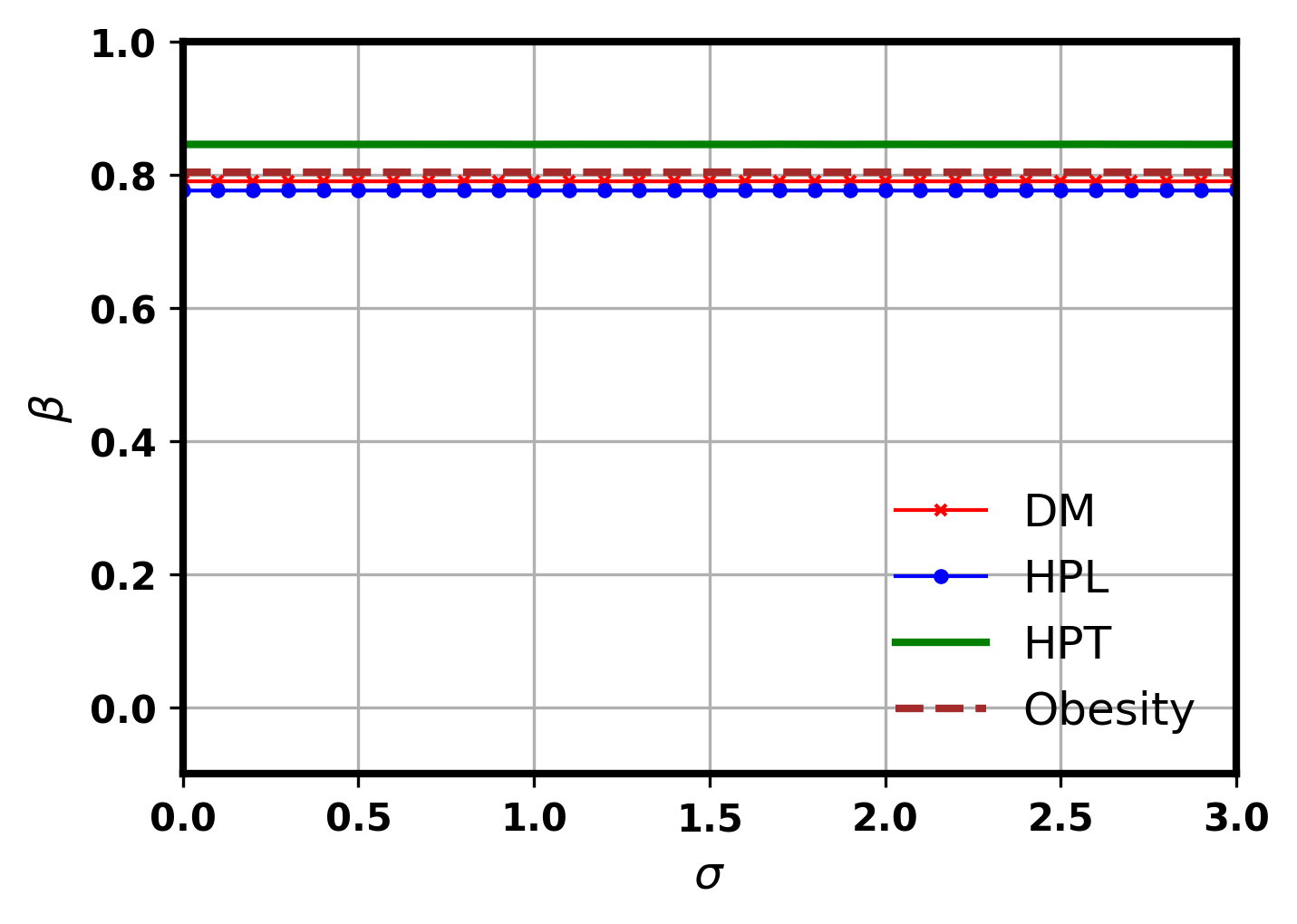}}\par
  \caption{Plot of how $\beta$ varies with $\sigma$ at different levels of $\delta$ and for different classifiers. 
  } 
  \label{fig:beta_vs_sigma}
\end{figure}

\paragraph{Effect of the failure probability $\delta$.} At a fixed level of uncertainty $\sigma$, as the failure probability $\delta$ increases, $\beta$ improves across all four classifiers (Figure~\ref{fig:l2_unknown_classifier}). This is again expected---higher $\delta$ means that the agent is less stringent on the coverage probability requirement and therefore has a much larger space of feasible effort profiles to choose from. Since all features have equal costs, her best response is to invest more in desirable features because they have a higher net contribution which means that she can now ``pass" the classifier while incurring a lower cost.

\paragraph{Trade-offs between $\sigma$ and $\delta$.} The failure probability $\delta$ is closely related to the level of uncertainty $\sigma$. At a fixed level of uncertainty $\sigma$, there is a limit on how low a failure probability $\delta$ can be achieved (Figure~\ref{fig:l2_unknown_classifier}). Similarly, in order to achieve a given failure rate $\delta$, there is a maximum amount of uncertainty $\sigma$ that can be tolerated (Figure~\ref{fig:beta_vs_sigma}); beyond that the problem becomes infeasible. This closely tracks our theoretical findings in Section~\ref{sec:incomplete} (Lemma~\ref{lem:delta_charac}). As $\delta$ increases (the agent imposes a weaker requirement on the coverage probability), a higher degree of uncertainty can be tolerated. Finally at $\delta = 0.5$, the level of uncertainty becomes irrelevant --- this is because at $\delta = 0.5$, the agent wants to pass the classifier with a probability of $\frac{1}{2}$ and therefore can afford to make decisions just on the basis on the mean belief classifier $\mu_h$. 

\begin{figure}[!ht]
  \centering
  \raisebox{20pt}{\parbox[b]{.11\textwidth}{}}%
  \subfloat[][$(\alpha = 1, \sigma = 3)$]{\includegraphics[width=.3\textwidth]{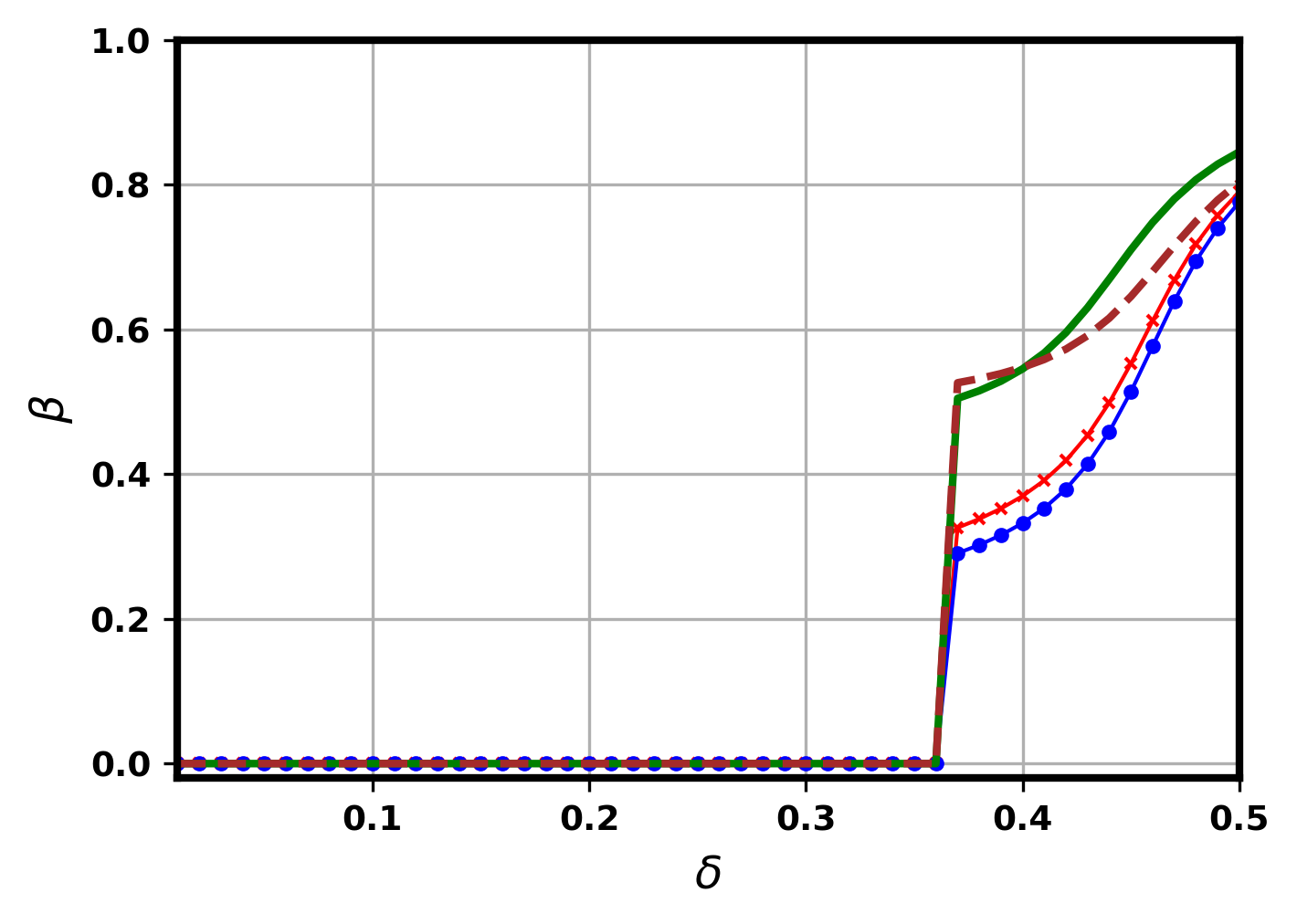}}\hfill
  \subfloat[][$(\alpha = 1, \sigma = 1)$]{\includegraphics[width=.3\textwidth]{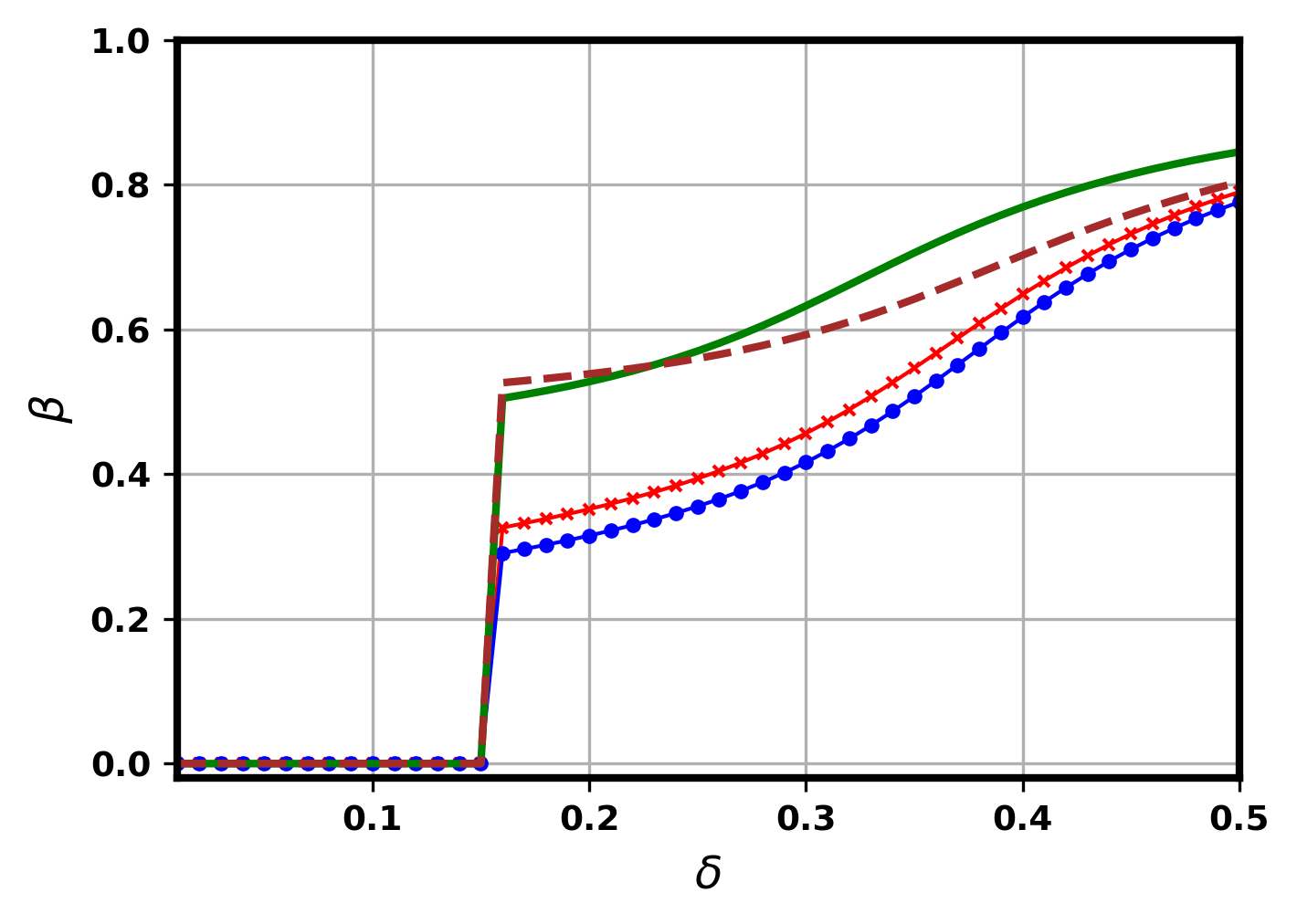}}\hfill
  \subfloat[][$(\alpha = 1, \sigma = 0.1)$]{\includegraphics[width=.3\textwidth]{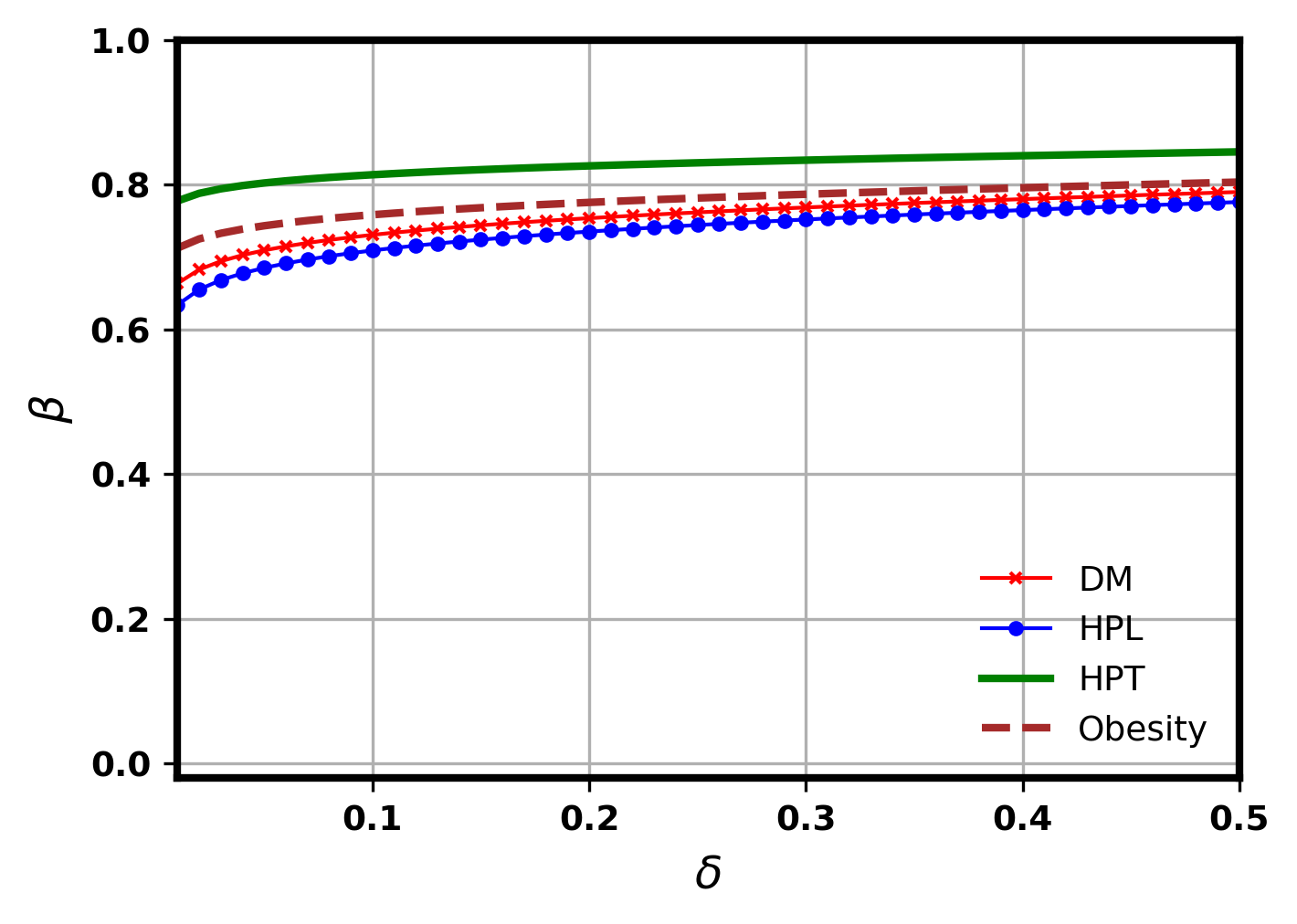}}\par
  \raisebox{20pt}{\parbox[b]{.11\textwidth}{}}%
  \subfloat[][$(\alpha = 10, \sigma = 3)$]{\includegraphics[width=.3\textwidth]{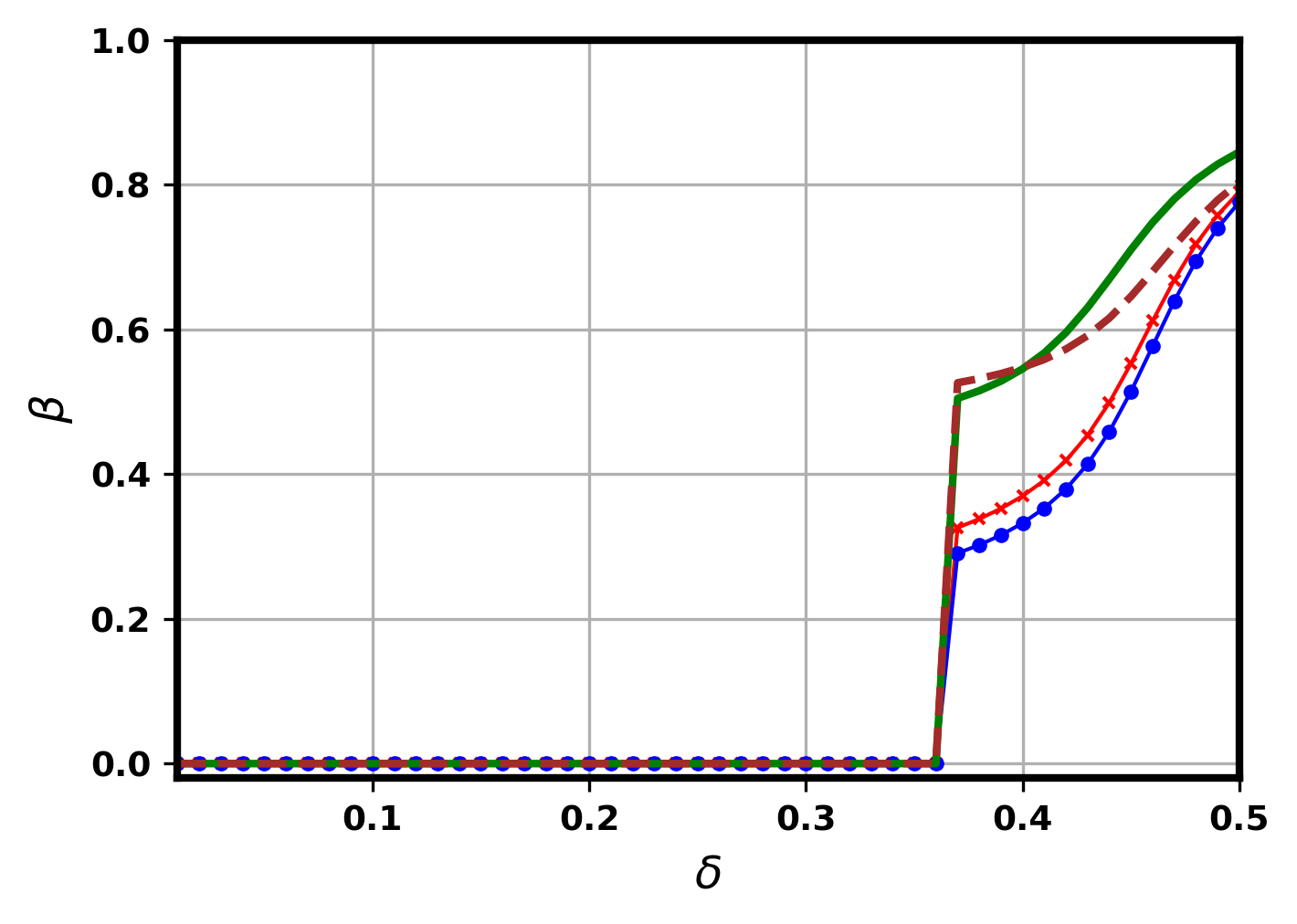}}\hfill
  \subfloat[][$(\alpha = 10, \sigma = 1)$]{\includegraphics[width=.3\textwidth]{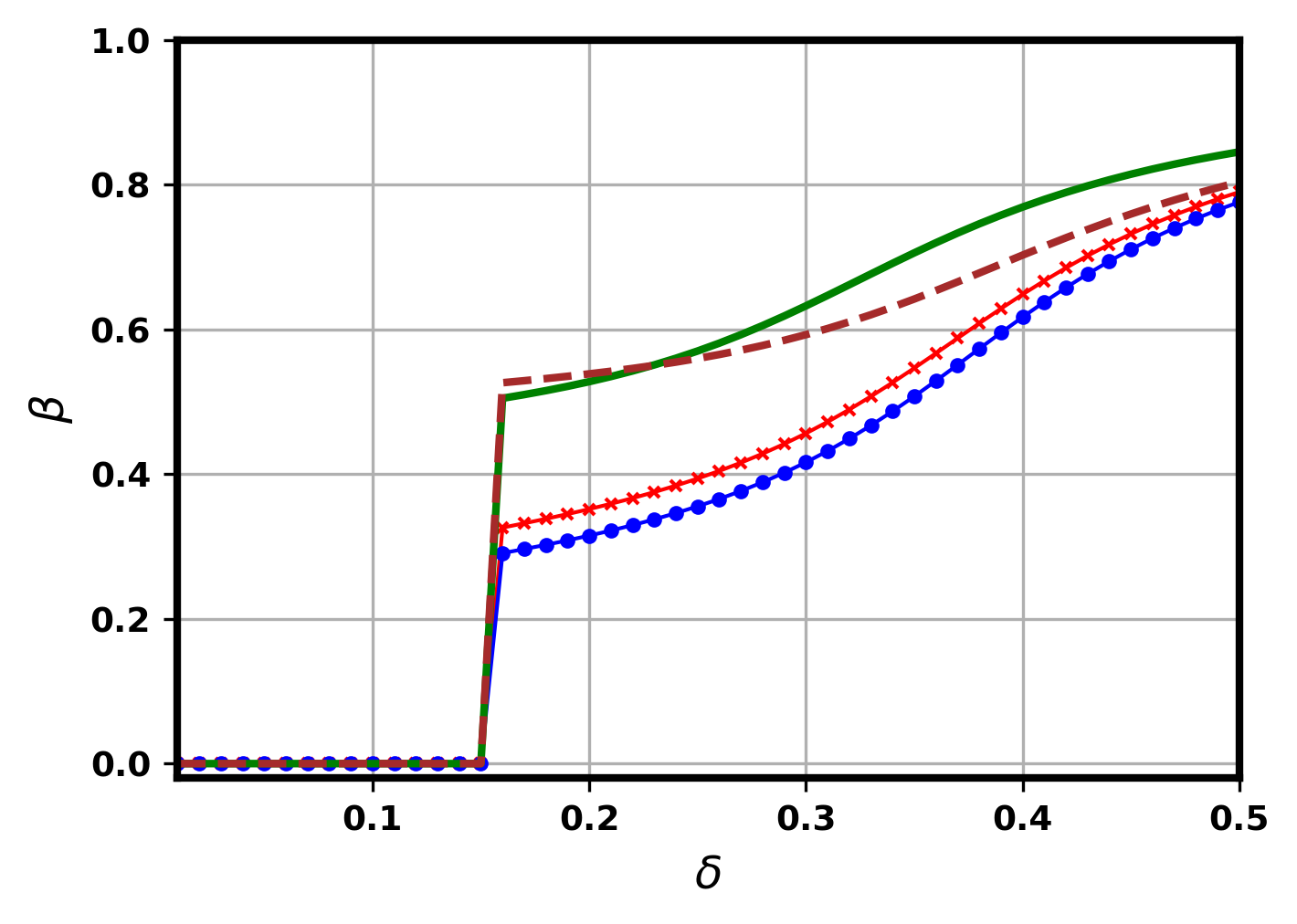}}\hfill
  \subfloat[][$(\alpha = 10, \sigma = 0.1)$]{\includegraphics[width=.3\textwidth]{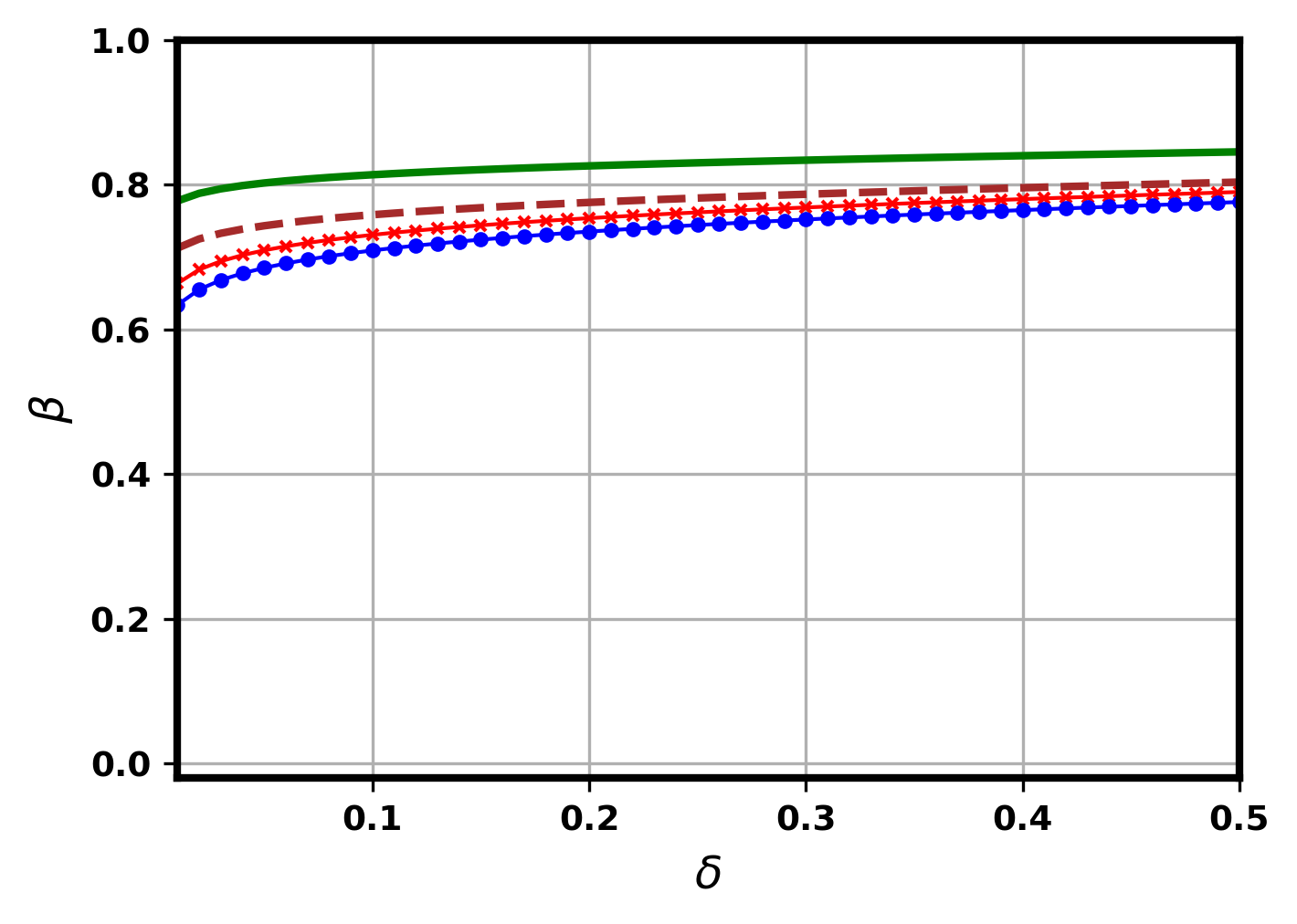}}\par
  \caption{Plots of how $\beta$ varies with $\delta$ for different parameter combinations.} 
  \label{fig:l2_unknown_classifier}
\end{figure}

\paragraph{Effect of $\alpha$.} $\alpha$ represents the amount by which the agent is shy of a positive classification outcome. In this case, we see that $\beta$ has no dependence on $\alpha$ (Figure~\ref{fig:l2_unknown_classifier}). This is an artifact of the $\ell_2$-norm cost function --- the agent's optimal effort profile is proportional to $\alpha$ in each feature and therefore $\beta$ remains unaffected. However, note that $\alpha$ does affect the cost incurred by the agent, the farther she is from the classification boundary (higher $\alpha$), the higher is the cost incurred.

\section{Discussion}\label{sec:discussion}
In this paper, we adopt a causal perspective to the problem of strategic classification. The principal deploys a linear classifier which classifies agents ``positive" or ``negative" based on a set of features embedded on a \textit{causal graph}. Since agents are strategic, they are expected to invest effort ``cleverly" to modify their features in the hopes of a positive classification outcome while incurring the minimum possible cost. 
Therefore, understanding how agents respond when they have different levels of access to information about the deployed classifier and the causal graph, is significant to the principal from the perspective of classifier design. The principal's goal is the following: design a classifier which incentivizes agents to invest in \textit{desirable} effort profiles. 

The main contributions of our paper are two-fold: i) We characterize the design space of \textit{desirable} classifiers for the broad class of weighted $\ell_p$-norm agent cost functions under complete information, while clearly demonstrating computational challenges of finding such classifiers. We also identify special settings and relaxations which can render the design problem computationally tractable. ii) We try to understand strategic agent behavior in response to classifiers under incomplete information settings. In particular, we show that under totally incomplete information (uncertainty over both the classifier and the causal graph), finding the agent's best response is computationally difficult. However it becomes tractable under partially incomplete information where there is uncertainty over either the classifier or the causal graph and we provide insights about the structural properties of the agent's best response under this setting. Finally, we use these results to gain some useful insights (through numerical experiments) into the question of how to design \textit{desirable} classifiers, even under incomplete information. 

There are many avenues of future work. Our model of uncertainty involves agents having Gaussian priors over the classifier or the edge weights of the causal graph or both, under the assumption that the causal graph topology is always fully known. In real life, there may be other forms of uncertainty --- for example, when there are a large number of features, it may be unreasonable to assume that agents have complete knowledge about the causal relationships between features. It may also be interesting to explore if there are other kinds of information structures which are more interpretable --- for example, instead of priors independently on the classifier and the causal graph, agents have access to an ordering (or partial ordering) on features in terms of their relative importance. This information structure subsumes necessary information from both the classifier and the causal graph but is easier to understand and hence, might be easier to respond to. Our work also has interesting extensions in the domain of fairness. Different populations may have different levels of uncertainty in their priors which might lead to markedly different abilities of each of those groups to respond to the principal's classifier --- the downstream fairness in classification of such information asymmetry may be worth exploring.

\bibliographystyle{plainnat}
\bibliography{mybib.bib}

\newpage
\appendix
\newpage
\section{Additional Details for Experimental Section~\ref{sec:experiments}}\label{sec:app_addn_exp}
\subsection*{Supplementary Tables}
\begin{table}[h!]
\centering
\begin{tabular}{|c|| c|c|c|c || c|c|c|c||}
\hline 
Features & Alcohol & Diet & Activity & Smoking & DM & HPL & HPT & Obesity
\\
\hline 
\hline 
Alcohol & 0 & 0 & 0 & 0 & 0.10 & 0.14 & 0.62 & 0.64 \\
\hline 
Diet & 0 & 0 & 0 & 0 & 0.84 & 0.84 & 0.84 & 0.86 \\
\hline 
Activity & 0 & 0 & 0 & 0 & 0.82 & 0.82 & 0.82 & 0.82 \\
\hline 
Smoking & 0 & 0 & 0 & 0 & 0.52 & 0.34 & 0.86 & 0 \\
\hline
DM & 0 & 0 & 0 & 0 & 0 & 0 & 0 & 0 \\
\hline
HPL & 0 & 0 & 0 & 0 & 0 & 0 & 0 & 0 \\
\hline
HPT & 0 & 0 & 0 & 0 & 0 & 0 & 0 & 0 \\
\hline
Obesity & 0 & 0 & 0 & 0 & 0 & 0 & 0 & 0 \\
\hline
\end{tabular}
\caption{Adjacency matrix $A$ which captures the edge weights of the causal graph in Figure~\ref{fig:causal_graph}}\label{table:weights}
\end{table}

\section{Proofs for the Complete Information Setting}

\subsection{Proof of Proposition~\ref{prop:y_pos}}\label{app:y_pos}
Let $\effopt$ be the optimal effort profile for the agent, i.e., the profile that corresponds to the solution of \ref{opt:agent_fullinfo}. First, note that for any feature $f \in \cF$, $(\C h_0)_f = 0$ implies $\effopt_f = 0$. Indeed, if feature $f$ has no contribution towards the classification decision, then no effort should be expended on $f$ in the optimal effort profile. 

Now, we can focus on features for which $(\C h_0)_f \neq 0$. When $(\C h_0)_f < 0$, we will show that $\effopt_f \leq 0$. Suppose that $\effopt_f > 0$. In this case, we can construct a new effort profile $\eff'$ as follows: $\eff'_f = 0$ and $\eff'_g = \effopt_g$ for all $g \in \cF, g \neq f$. It is easy to see that $\eff'$ is still feasible but $\textsf{Cost}(\eff') < \textsf{Cost}(\effopt)$ which contradicts the fact that $\effopt$ is the optimal solution. Therefore, $\effopt_f \leq 0$. Similarly, we can show that when $(\C h_0)_f > 0$, $\effopt_f \geq 0$. 

The above discussion implies that whenever $(\C h_0)_f \neq 0$, then $(\C h_0)_f \effopt_f \geq 0$ - therefore, without loss of generality, it suffices to assume $(\C h_0)_f > 0$ and only search over the space $\eff \geq 0$ (because the optimal solution $\effopt \geq 0$). This concludes the proof.

\subsection{Proof of Lemma~\ref{lem:linear_onefeature}}\label{app:linear_onefeature}

The optimization problem in \eqref{opt:linear} (which we will call the primal problem $P$) is a linear program whose feasible region is given by the following polyhedron $Q = \left\{\eff \in \R^{n+k}: (\C h_0)^{\top}\eff \geq \alpha, \eff \geq 0\right\}$. Our first goal is to argue that the optimal solution is a corner point of $Q$ which requires us to prove the following: i) firstly, $Q$ has at least one corner point, and ii) the optimal solution is bounded which would imply that it must be at a corner point of $Q$. Note that the polyhedron $Q$ has no line (because it is a subset of the positive orthant) and therefore, it must have at least one corner point\footnote{for more details on the polyhedral theory related to linear optimization, refer to \cite{bertsimas}}. We can now write the dual problem ($D$) as follows:
\begin{align*}
    \max_{\pi} \quad &\alpha \pi \quad s.t. \quad \quad (D)\\
    & \pi \cdot (\C h_0) \leq c    \quad (\eff)\\
    & \pi \geq 0. 
\end{align*}
We know that the dual problem ($D$) is feasible ($\pi = 0$ is feasible to $D$) which implies that the optimal solution to ($P$) cannot be unbounded. Hence, we conclude that there must exist a corner point optimal solution to problem ($P$). Now, note that all corner points of $Q$ are obtained by the intersection of the hyperplane $(\C h_0)^{\top}\eff \geq \alpha$ with the positive axes. So any corner point of $Q$ must be of the form where exactly one entry corresponding to some feature $f$ is positive (i.e., takes value $\frac{\alpha}{(\C h_0)_f}$) and all other entries are zero. This implies that there exists an optimal effort profile where the agent needs to modify exactly one feature, proving the first part of the lemma. 

For proving the second part, we will use the complementary slackness conditions on the dual constraints. We already know that there exists an optimal primal solution where there is some feature $\fstar$ with $\eff_{\fstar} = \frac{\alpha}{(\C h_0)_{\fstar}}$ and $\eff_{f'} = 0$ for all $f' \neq \fstar$. Let $\pistar$ be the optimal dual solution. Using complementary slackness, we know that $\pistar \cdot (\C h_0)_{\fstar} = c_{\fstar}$ which implies that: 
\[
        \pistar = \frac{c_{\fstar}}{(\C h_0)_{\fstar}}. 
\]
Since $\pistar$ must also be feasible to ($D$), we must have: 
\[
       \pistar \leq \frac{c_{f'}}{(\C h_0)_{f'}} \quad \forall~f' \neq \fstar,~(\C h_0)_{f'} \neq 0,  
\]
which implies that: 
\[
        \fstar \in \arg\max_{f \in \cF} \frac{(\C h_0)_f}{c_f}.  
\]
This concludes the proof of the lemma.

\subsection{Proof of Lemma~\ref{lem:l2_effort}}\label{app:l2_effort}

We solve the constrained optimization problem using Lagrange multipliers. Define the Lagrangian as follows:
\[
    \mathcal{L}(\eff, \pi) = \left(\sum_{f \in \cF}c_f (\eff_f)^p\right)^{1/p} +\pi \left(- (\C h_0)^{\top}\eff + \alpha \right),
\]
where $\pi$ is the Lagrange multiplier associated with the constraint as defined earlier. This gives us the following set of KKT conditions:
\begin{align*}
    &\nabla_{\eff} \mathcal{L}(\eff, \pi) = 0, \\
    &\pi \cdot (-(\C h_0)^{\top}\eff + \alpha ) = 0, \\
    &\pi \geq 0, \\
    &\alpha - (\C h_0)^{\top}\eff \leq 0, ~\eff \geq 0. 
\end{align*}
Since our optimization problem is convex, it suffices to find a pair $\left(\effopt, \pistar \right)$ that satisfies the KKT conditions and we can automatically conclude that $\effopt$ is optimal to the primal problem.

First, we show that the constraint $(\C h_0)^{\top}\eff \geq \alpha$ must be active at the optimal solution. We prove this by contradiction. 
Suppose, if possible that $-(\C h_0)^{\top}\effopt + \alpha < 0$. However, this means that we can obtain the optimal solution $\effopt$ by solving the primal problem as if it were unconstrained. In that case, it must be that $\effopt = 0$, but observe that $\eff = 0$ is not even feasible (and hence cannot be optimal). This implies that the constraint must hold at equality. Therefore, we can solve for $\effopt$ and $\pi^*$ by solving the following system:
\begin{align*}
    -(\C h_0)^{\top}\eff + \alpha &= 0,\\
    \nabla_e \mathcal{L}(\eff, \pi) &= 0.
\end{align*}
Now, 
\[
    \left(\nabla_{\eff} \mathcal{L}(\eff, \pi) \right)_f = \frac{\partial \mathcal{L}}{\partial \eff_f} = \frac{{c_f (\eff_f)}^{p-1}}{ \left[ \left(\sum \limits_{f \in \cF}c_f (\eff_f)^p \right)^{1/p} \right]^{p-1} } - \pi \cdot \left( \C h_0 \right)_f.
\]
We have already argued that $\effopt \neq 0$. Therefore, $\pistar > 0$. This implies that for all features $f \in \cF$, whenever $(\C h_0)_f > 0$, we must have: 
\[
        \effopt_f \propto \left(\frac{(\C h_0)_f}{c_f}\right)^{1/(p-1)}, 
\]
and when $(\C h_0)_f = 0$, the condition holds trivially. This concludes the proof of the lemma.

\subsection{Proof of Lemma~\ref{lem:comp_des_nonconvex}}\label{app:comp_des_nonconvex}

The set of $\beta$-desirable classifiers $\cH$ is given as follows: 
\begin{align*}
    \cH := \left\{h_0 \in \R^{|\cF|}: \effopt(h_0, \C) \text{ is $\beta$-desirable}, \C h_0 \geq 0 \right\}.
\end{align*}
We define the set $\cZ := \left\{ (\C h_0): h_0 \in \cH \right\}$.
Suppose that $\C$ is full row-rank. This implies that $\cH$ is convex if and only if $\cZ$ is convex\footnote{This is a standard result in linear algebra; the proof provided in Appendix~\ref{sec:app_sup} for completeness}. Therefore, in order to complete the proof, it suffices to show that the transformed set $\cZ$ is non-convex in the worst case. We now provide instances of problems where $\cZ$ is non-convex and the agents incur $\ell_p$-norm cost functions with $p = 1$ and $p > 1$. 
Recall from Theorems~\ref{thm:l1_good} and \ref{thm:l2_good} that the set $\cZ$ is given as follows: 
\[
    \cZ =\left\{ z \in \R_{\geq 0}^{|\cF|}: \max_{f \in \und} \frac{z_f}{c_f} < \max_{f \in \des}\frac{z_f}{c_f} \right\} \quad \text{($p = 1$)}
\]
\[
    \cZ = \left\{ z \in \R_{\geq 0}^{|\cF|}:  \left[ \sum_{f \in \des} \left( \frac{z_f}{c_f} \right)^{2/(p-1)} \right]^{1/2} \geq \frac{\beta}{\sqrt{1-\beta^2}}  \left[ \sum_{f \in \und} \left( \frac{z_f}{c_f} \right)^{2/(p-1)} \right]^{1/2}  \right\} \quad \text{($p > 1$)}
\]

\paragraph{Weighted $\ell_1$-norm cost function:} Consider a setting where there are $4$ features with $\des = \left\{1,2\right\}$ and $\und = \left\{3,4\right\}$. Suppose that the cost vector equals $c = \bf{1}$. In this case, 
\[
       \cZ = \left\{z \in \R_{\geq 0}^4: \max(z_3, z_4) < \max(z_1, z_2)\right\}.
\]
Now, choose $z' :=(4, 7, 3, 6)$ and $z'':= (7, 4, 3, 6)$. Both are clearly points in $\cZ$. However, for $\alpha = 0.5$, $\alpha z' + (1-\alpha)z'':= (5.5, 5.5, 3, 6) \notin \cZ$. Therefore, $\cZ$ is not a convex set.  

\paragraph{Weighted $\ell_p$-norm cost function:} Consider a setting where there are $3$ features with $\des = \left\{1,2\right\}$ and $\und = \left\{3\right\}$. Let $p = 2$, $c = \bf{1}$ and $\beta = \frac{1}{\sqrt{2}}$. Then $\cZ$ is given by:
\[
    \cZ = \left\{z \in \R_{\geq 0}^3: \sqrt{z_1^2 + z_2^2} \geq z_3 \right\}
\]
$(0, 1, 1)$ and $(1, 0, 1)$ are points in $\cZ$, but the point halfway between them, given by $(0.5, 0.5, 1)$ is clearly not in $\cZ$. Therefore, $\cZ$ is not a convex set. This concludes the proof.  

\subsection{Proof of Proposition~\ref{prop:convex_p13}}\label{app:convex_p13}

We will verify convexity separately for the cases with $\ell_1$-norm and $\ell_p$-norm ($1 < p \leq 3$) cost functions. \smallskip

\noindent\emph{The $\ell_1$-norm case:} When the cost function is a weighted $\ell_1$-norm, the set of desirable classifiers is given by 
\[
     \cH := \left\{h_0 \in \R^{|\cF|}: \C h_0 \geq 0, \max_{f \in \und}\frac{(\C h_0)_f}{c_f} < \max_{f \in \des}\frac{(\C h_0)_f}{c_f} \right\}.
\]
Now, suppose $|\des| = 1$ and there is some feature $f_d \in \des$. 
Then, we can rewrite $\cH$ as follows: 
\[
     \cH := \left\{h_0 \in \R^{|\cF|}: \C h_0 \geq 0, \max_{f \in \cF \setminus \left\{f_d\right\}} \frac{(\C h_0)_f}{c_f} - \frac{(\C h_0)_{f_d}}{c_{f_d}} < 0  \right\}. 
\]
In order to show that $\cH$ is a convex set, it suffices to show that the function $g(h_0) = \max_{f \in \cF \setminus \left\{f_d\right\}} \frac{(\C h_0)_f}{c_f} - \frac{(\C h_0)_{f_d}}{c_{f_d}}$ is a convex function. Function $g(\cdot)$ corresponds to the sum of a maximum of linear functions (which is convex) and a linear function; hence, function $g(\cdot)$ is convex.

\smallskip
\noindent\emph{The $\ell_p$-norm case with $p > 1$:} For $\ell_p$-norm cost functions with $p > 1$, set $\cH$ is given by:
\[
    \cH \triangleq \left\{ h_0 \in \R^{|\cF|}: \C h_0 \geq 0,  \left[ \sum_{f \in \des} \left( \frac{(\C h_0)_f}{c_f} \right)^{2/(p-1)} \right]^{1/2} \geq \frac{\beta}{\sqrt{1-\beta^2}}  \left[ \sum_{f \in \und} \left( \frac{(\C h_0)_f}{c_f} \right)^{2/(p-1)} \right]^{1/2}  \right\}
\]
Using the fact that $|\des| = 1$, we can rewrite $\cH$ as follows: 
\[
      \cH := \left\{h_0 \in \R^{|\cF|}: \C h_0 \geq 0, (\C h_0)_{f_d} \geq K \left[ \sum_{f \in \und} \left( \frac{(\C h_0)_f}{c_f} \right)^{2/(p-1)} \right]^{(p-1)/2}  \right\}
\]
where $K = c_{f_d} \left(\frac{\beta}{\sqrt{1-\beta^2}} \right)^{(p-1)} > 0$. 
Now in order to complete the proof, we need to show that the function $r(h_0)$ is convex, where $r(h_0)$ is given by: 
\[
    r(h_0) = K \left[ \sum_{f \in \und} \left( \frac{(\C h_0)_f}{c_f} \right)^{2/(p-1)} \right]^{(p-1)/2} - (\C h_0)_{f_d}.
\]
When $1 < p \leq 3$, we can rewrite $r(h_0)$ as follows: 
\[
    r(h_0) = K \|Bh_0 \|_{2/(p-1)} - (\C h_0)_{f_d},
\]
where $B \in \R^{(|\cF|-1) \times (|\cF|-1)}$. Note that $K \| Bh_0 \|_{2/(p-1)}$ is a convex function in $h_0$ since this is a $q$-norm for $q = \frac{2}{p-1} \geq 1$. This makes $r(h_0)$ a convex function in $h_0$ (sum of a convex function and a linear function is convex) and concludes the proof.

\section{Proofs for the Incomplete Information Setting}

\subsection{Proof of Lemma~\ref{lem:partial_incomp_convex}}
\label{app:partial_incomp_convex}

Since the cost functions defined in Eq.~\eqref{eq:cost} are convex, in order to complete the proof, it suffices to show that the feasible space of the optimization problem in~\eqref{opt:incomp_gaussian}, is convex. We have already argued that under incomplete information models $(1)$ and $(2)$, $\C h$ is a multivariate Gaussian, i.e, $\C h \sim \mathcal{N}(\mu_{\C h}, \Sigma_{\C h})$ for some $\mu_{\C h} \in \R^{|\cF|}$ and $\Sigma_{\C h} \in \R^{|\cF|\times |\cF|}$. This implies, 
\[
       (\C h)^{\top}\eff \sim \mathcal{N}\left(\mu_{\C h}^{\top}\eff, \eff^{\top}\Sigma_{\C h} \eff \right). 
\]
This allows us to rewrite the LHS of the probability constraint as follows: 
\begin{align*}
    \bP\left[(\C h)^{\top}\eff \geq \alpha \right] &= \bP\left[ \frac{(\C h)^{\top}\eff - \mu_{\C h}^{\top}\eff}{\sqrt{\eff^{\top}\Sigma_{\C h} \eff}} \geq \frac{\alpha - \mu_{\C h}^{\top}\eff}{\sqrt{\eff^{\top}\Sigma_{\C h} \eff}} \right] \\
    &= \bP\left[Z \geq \frac{\alpha - \mu_{\C h}^{\top}\eff}{\sqrt{\eff^{\top}\Sigma_{\C h} \eff}} \right] \quad \text{(where $Z \sim \mathcal{N}(0, 1)$)} \\
    &= \Phi^c \left( \frac{\alpha - \mu_{\C h}^{\top}\eff}{\sqrt{\eff^{\top}\Sigma_{\C h} \eff}} \right). 
\end{align*}
Therefore, 
\begin{align*}
    \bP\left[(\C h)^{\top}\eff \geq \alpha \right] \geq 1-\delta &\iff \Phi^c \left( \frac{\alpha - \mu_{\C h}^{\top}\eff}{\sqrt{\eff^{\top}\Sigma_{\C h} \eff}} \right) \geq 1-\delta \\
    &\iff \Phi\left( \frac{\alpha - \mu_{\C h}^{\top}\eff}{\sqrt{\eff^{\top}\Sigma_{\C h} \eff}} \right) \leq \delta \\
    &\iff \frac{\alpha - \mu_{\C h}^{\top}\eff}{\sqrt{\eff^{\top}\Sigma_{\C h} \eff}} \leq p_{\delta}   \quad \text{(where $p_{\delta} = \Phi^{-1}(\delta)$)} \\
    &\iff \alpha - \mu_{\C h}^{\top}\eff - p_{\delta}\cdot \sqrt{\eff^{\top}\Sigma_{\C h} \eff} \leq 0.
\end{align*}
When $\delta = \frac{1}{2}$, $p_{\delta} = 0$ and the above constraint reduces to a polyhedral constraint, making the problem trivially convex. On the other hand, note that when $\delta < \frac{1}{2}$, $p_{\delta} < 0$. Now, since $\Sigma_{\C h}$ is a covariance matrix, it is always symmetric and positive semidefinite and therefore, $\Sigma_{\C h}^{1/2}$ exists (it is also symmetric and positive semidefinite!). In that case, we can express $\sqrt{\eff^{\top}\Sigma_{\C h} \eff}$ as follows: 
\[
    \sqrt{\eff^{\top}\Sigma_{\C h} \eff} = \sqrt{ \eff^{\top}\Sigma_{\C h}^{1/2}\Sigma_{\C h}^{1/2}\eff } = \sqrt{(\Sigma_{\C h}^{1/2}\eff)^{\top}(\Sigma_{\C h}^{1/2}\eff)} = \sqrt{||\Sigma_{\C h}^{1/2}\eff ||_2^2} = ||\Sigma_{\C h}^{1/2}\eff||_2. 
\]
Now, $||\Sigma^{1/2}\eff||_2$ is a convex function in $\eff$ (because all norms are convex functions). Similarly, $-p_{\delta}\cdot ||\Sigma_{\C h}^{1/2}\eff||_2$ is also a convex function because $-p_{\delta} > 0$. The term $\alpha -\mu_{\C h}^{\top}\eff$ is affine in $\eff$ and therefore, convex by default. Putting everything together, we conclude that $\alpha - \mu_{\C h}^{\top}\eff - p_{\delta}\cdot \sqrt{\eff^{\top}\Sigma_{\C h} \eff}$ is a convex function in $\eff$ which makes the constraint: 
\[
     \alpha - \mu_{\C h}^{\top}\eff - p_{\delta}\cdot \sqrt{\eff^{\top}\Sigma_{\C h} \eff} \leq 0
\]
a convex constraint. This concludes the proof of the proposition. 

\subsection{Proof of Proposition~\ref{prop:full_uncert_nonconvex}}\label{app:full_uncert_nonconvex}

In order to complete this proof, it suffices to provide a counter-example where the program in \eqref{opt:agent_br_incomp} is non-convex. Consider the simplest possible setting where there can be uncertainty in both the classifier and the causal graph. Suppose, there is only one feature, i.e., $|\cF| = 1$. Let $\omega \sim \mathcal{N}(0, 1)$ be the random variable that captures the uncertainty in the contribution of the feature (encodes uncertainty in the causal graph) and $h \sim \mathcal{N}(0, 1)$ be the random variable that captures the uncertainty in the classifier weight on the feature. Note that $\omega \perp h$. We will show that the feasible space given by:
\[
       \bP\left[ (\omega h) \eff \geq \alpha \right] \geq 1-\delta
\]
is non-convex, which is equivalent to showing that the function $f(\eff)$ given by:
\[
     f(\eff) = \bP\left[ (\omega h) \eff \geq \alpha \right]
\]
is not concave. Below in Figure~\ref{fig:non_convex_counter}, we plot $f(\eff)$ as a function of one-dimensional effort $\eff$. Since there is no closed-form expression for the distribution of the product of two independent standard normal random variables, we obtain empirical estimates for the probability at each $\eff$ using Monte-Carlo simulations. Clearly, $f(\eff)$ is not concave. 
\begin{figure}[!ht]
    \centering
    \includegraphics[width=0.5\linewidth]{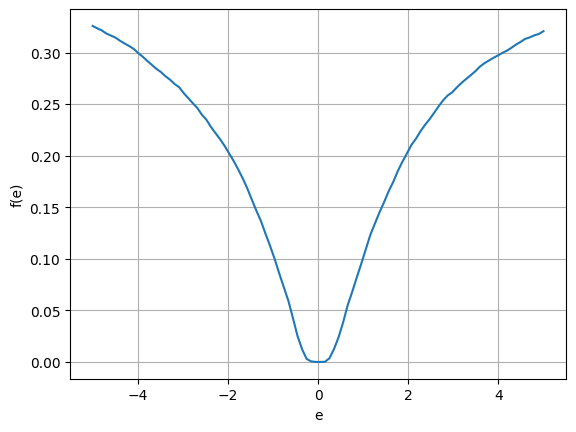}
    \caption{Plot of $f(\eff)$ with $\alpha = 1$}
    \label{fig:non_convex_counter}
\end{figure}
This concludes the proof of the proposition. 

\subsection{Proof of Lemma~\ref{lem:delta_charac}}\label{app:delta_charac}

Recall that the feasible space of the agent's optimization problem in the partially incomplete information case is given by: 
\[
     \alpha - \mu_{\C h}^{\top}\eff - p_{\delta}\cdot \|\Sigma_{\C h}^{1/2}\eff \|_2 \leq 0.
\]
This feasible space is empty at a given $\delta$ if we have: 
\[
    \alpha - \mu_{\C h}^{\top}\eff - p_{\delta}\cdot \|\Sigma_{\C h}^{1/2}\eff \|_2 > 0 \quad \forall~\eff\iff \min_{\eff} \quad g(\eff) = \alpha - \mu_{\C h}^{\top}\eff - p_{\delta}\cdot \|\Sigma_{\C h}^{1/2}\eff \|_2 > 0. 
\]
Now consider the convex unconstrained optimization problem: $\min_{\eff}~g(\eff)$. Then, 
\[
   \nabla g(\eff) = -\mu_{\C h} - p_{\delta}\cdot \frac{\Sigma_{\C h}\eff}{\| \Sigma_{\C h}^{1/2}\eff \|_2}. 
\]
Now there are $2$ cases: 
\noindent\emph{$\nabla g(\eff) = 0$ has a solution $\hat \eff$:} Clearly, $\hat \eff \neq 0$ because the gradient is not defined at $\eff = 0$. In that case, $\hat \eff$ satisfies: 
    \[
          - p_{\delta}\cdot \frac{\Sigma_{\C h}\hat \eff}{\| \Sigma_{\C h}^{1/2}\hat \eff \|_2} = \mu_{\C h}.  
    \]
    Now, $\hat \eff$ must be a global minimizer of $g$ because $g(\cdot)$ is a convex function. We will show that $g(\hat \eff) = \alpha$: 
    \begin{align*}
        g(\hat \eff) &= \alpha - \mu_{\C h}^{\top}\hat \eff - p_{\delta}\cdot \| \Sigma_{\C h}^{1/2}\hat \eff \|_2 \\
        &= \alpha + p_{\delta}\cdot \frac{\hat \eff^{\top}\Sigma_{\C h}\hat \eff }{\| \Sigma_{\C h}^{1/2}\hat \eff \|_2} - p_{\delta}\cdot \| \Sigma_{\C h}^{1/2}\hat \eff \|_2 \quad \text{(using the condition from $\nabla g(\hat \eff) = 0$)}\\
        &= \alpha + p_{\delta}\cdot \| \Sigma_{\C h}^{1/2}\hat \eff \|_2 - p_{\delta}\cdot \| \Sigma_{\C h}^{1/2}\hat \eff \|_2 \\
        &= \alpha. 
    \end{align*}
    Since $\alpha > 0$, the problem is always infeasible for this particular value of $p_{\delta}$. Since $\Sigma_{\C h}$ is positive definite, $\Sigma_{\C h}^{-1/2}$ exists, therefore we have: 
    \[
         p_{\delta} = - \| \Sigma_{\C h}^{-1/2} \mu_{\C h} \|_2 \iff \delta = \Phi^{-1}\left( - \| \Sigma_{\C h}^{-1/2} \mu_{\C h} \|_2 \right). 
    \]

\noindent\emph{$\nabla g(\eff) = 0$ has no solution:} This means that either the unique optimal solution is at the point where the gradient does not exist, i.e, $\eff = 0$, or the solution is unbounded. The first subcase clearly leads to infeasibility as $g(0) = \alpha > 0$ while the second sub-case leads to a non-empty feasible region for Problem \eqref{opt:incomp_gaussian}. We will now try to derive conditions on $\delta$ which lead to each subcase. 

    Suppose that the unique optimal solution is $\eff = 0$. This means that for any direction $d$, $g(0 + d) > g(0)$ or equivalently, 
    \begin{align*}
         \alpha - \mu_{\C h}^{\top}d - p_{\delta}\cdot \| \Sigma_{\C h}^{1/2}d \|_2 > \alpha \quad \forall~d 
         \iff -p_{\delta} \cdot \| \Sigma_{\C h}^{1/2}d \|_2 > \mu_{\C h}^{\top}d \quad \forall~ d.
    \end{align*}
    Note that $\mu_{\C h}^{\top}d = \mu_{\C h}^{\top}\Sigma_{\C h}^{-1/2}\Sigma_{\C h}^{1/2}d = (\Sigma_{\C h}^{-1/2}\mu_{\C h})^{\top}(\Sigma_{\C h}^{1/2}d) \leq \| \Sigma_{\C h}^{-1/2}\mu_{\C h} \|_2 \cdot \|\Sigma_{\C h}^{1/2}d \|_2$ where the last inequality follows from the Cauchy-Schwartz inequality. In fact, for $d^* = \Sigma_{\C h}^{-1}\mu_{\C h}$ (which exists since $\Sigma_{\C h}$ is positive definite and hence, invertible), we have equality. But since $-p_{\delta} \cdot \| \Sigma_{\C h}^{1/2}d \|_2 > \mu_{\C h}^{\top}d$ for all directions $d$, it must hold for $d^*$ as well, which implies: 
    \[
         \| \Sigma_{\C h}^{-1/2}\mu_{\C h} \|_2 \cdot \|\Sigma_{\C h}^{1/2}d^* \|_2 < -p_{\delta} \cdot \|\Sigma_{\C h}^{1/2}d^* \|_2,
    \]
    which means that $-p_{\delta} >  \| \Sigma_{\C h}^{-1/2}\mu_{\C h} \|_2$ or equivalently, $\delta < \Phi^{-1}\left( - \| \Sigma_{\C h}^{-1/2} \mu_{\C h} \|_2 \right)$. 

    Similarly, if the solution is unbounded, there must exist a direction $d'$ at $0$ such that:
    \[
        \alpha - \mu_{\C h}^{\top}d' - p_{\delta}\cdot \| \Sigma_{\C h}^{1/2}d' \|_2 < \alpha, 
    \]
    or equivalently, $-p_{\delta} \cdot \| \Sigma_{\C h}^{1/2}d' \|_2 < \mu_{\C h}^{\top}d'$. Using a similar argument as above, we can show that this can happen only when: 
    \[
         \delta > \Phi^{-1}\left( - \| \Sigma_{\C h}^{-1/2} \mu_{\C h} \|_2 \right).
    \]
\noindent
This concludes the proof of the lemma. 

\subsection{Proof of Lemma~\ref{lem:l1_incomp}}\label{app:l1_incomp}
In order to complete the proof, it suffices to construct an instance of the problem where the optimal effort profile is not a corner point. Consider a setting where $|\cF| = 2$, $\alpha > 0$ and $\delta < \frac{1}{2}$.  Suppose, the features are identical in all respects, i.e., $(\mu_{\C h})_1 = (\mu_{\C h})_2 = \bar \mu > 0$, $\Sigma_{\C h} = \begin{bmatrix}\sigma^2 & 0\\0 & \sigma^2\end{bmatrix}$ and $c_1 = c_2 = c$. Additionally, suppose that $\bar \mu > -p_{\delta}\sigma$. We first make the following observations: 
\begin{itemize}
    \item If $\effopt$ is not a corner point, it must be symmetric, i.e., $\effopt_1 = \effopt_2$. 
    \item Since $\bar \mu > 0$ and $\delta < \frac{1}{2}$, it must be that $\effopt \geq 0$ (otherwise, we have infeasibility).  
\end{itemize}
Now, there are only two possible corner point solutions: either of the form $(\eff, 0)$ or $(0, \eff)$. In order for either of them to be optimal, the constraint must be active at that point. Solving, we obtain: 
\[
      \eff = \frac{\alpha}{\bar \mu + p_{\delta}\sigma}; \quad \text{and} \quad \textsf{Cost} = \frac{c\alpha}{\bar \mu + p_{\delta}\sigma}. 
\]
However, we will now construct a non-corner point solution $(\eff', \eff')$ where the constraint is active and which produces a strictly better objective value. Solving, we obtain:
\[
    \eff' = \frac{\alpha}{2\left( \bar \mu + \frac{p_{\delta}}{\sqrt{2}}\cdot \sigma\right)}; \quad \text{and} \quad \textsf{Cost} = \frac{c\alpha}{\left( \bar\mu + \frac{p_{\delta}}{\sqrt{2}}\cdot \sigma\right)}, 
\]
which is strictly smaller than the earlier cost (since $p_{\delta} < 0$). This concludes the proof. 

\subsection{Proof of Theorem~\ref{thm:incomp_l2}}\label{app:incomp_l2}

We will use the KKT conditions to obtain the agent's optimal effort profile $\effopt$.
The Lagrangian $\mathcal{L}(\cdot, \cdot)$ for the above problem is given by:
\begin{align*}
    \mathcal{L}(\eff, \lambda) = ||\eff||_2 + \lambda \left(\alpha - \mu_{\C h}^{\top}\eff - p_{\delta} \cdot ||\Sigma_{\C h}^{1/2}\eff||_2\right), 
\end{align*}
where $\lambda$ is the Lagrange multiplier. 
We can now write the KKT conditions as follows: 
\begin{align*}  
    & \frac{\eff}{||\eff||_2} + \lambda \cdot \left( -p_{\delta}\cdot \frac{\Sigma_{\C h} \eff}{||\Sigma_{\C h}^{1/2}\eff||_2} - \mu_{\C h} \right) = 0,\\
    & p_{\delta} \cdot ||\Sigma_{\C h}^{1/2}\eff||_2 + \mu_{\C h}^{T}\eff = \alpha, \\
    & \lambda > 0. 
\end{align*}
Since we have a convex program, it is sufficient to find a pair $(\effopt, \lambda^*)$ satisfying the KKT conditions and we can immediately conclude that $\effopt$ is an optimal solution to our original problem. 
Using the first equality above, we infer that the optimal effort $\effopt$ must be of the following form: 
\[
        \effopt = \lambda^* \left(k_1 I + k_2 \Sigma_{\C h} \right)^{-1}\mu,
\]
where $k_1 = \frac{1}{||\effopt||_2} > 0$ and $k_2 = \frac{-\lambda^* p_{\delta}}{||\Sigma_{\C h}^{1/2}\effopt||_2} > 0$ (so, the inverse exists). In order to obtain the exact expression for $\effopt$, we need to use the other equality condition and solve simultaneously for $k_1$, $k_2$ and $\lambda^*$. This concludes the proof of the lemma. 

\subsection{Proof of Proposition~\ref{prop:bipartite}}\label{app:bipartite}
Since $\cG$ is a bipartite graph, the set of nodes (in this case, same as set of features) $|\cF|$ can be partitioned into two sets $\cF_{out}$ and $\cF_{in}$ such that $\cF_{in} \cup \cF_{out} = \cF$, $\cF_{in} \cap \cF_{out} = \emptyset$ and all arcs in $\cA$ are directed from $\cF_{out}$ towards $\cF_{in}$. 

Recall that $\Sigma_{\C h}$ is the covariance matrix of $\C h$ where $\C \sim \Pi_{\C}$ and $h \sim \Pi_h$. However, when there is uncertainty only over the edge weights of $\cG$, it is clear that $\Sigma_{\C h} = Cov(\C h_0)$. Therefore, in order to show that $\Sigma_{\C h}$ is a diagonal matrix, it suffices to show that: 
\[
     \forall~f_1, f_2 \in \cF, f_1 \neq f_2, \quad (\C h_0)_{f_1} \perp (\C h_0)_{f_2},
\]
i.e., $(\C h_0)_{f_1}$ and $(\C h_0)_{f_2}$ are independent random variables. Firstly, observe that for any feature $f \in \cF_{in}$, we must have: 
\[
        (\C h_0)_f = 0.
\]
This is because feature $f$ has no outgoing edges (since $f \in \cF_{in}$) and therefore, $\C_{f,.} = \bf{0}^{\top}$ which implies $(\C h_0)_f = \C_{f,.}h_0 = 0$. This automatically implies that the covariance of $(\C h_0)_f$ with any other random variable is also zero. Therefore, we only need to prove that $Cov\left( (\C h_0)_{f_1}, (\C h_0)_{f_2} \right) = 0$ when $f_1, f_2$ both are in $\cF_{out}$. Note that: 
\[
     (\C h_0)_{f_1} = \sum_{f \in \cF} \C_{f_1, f} h_{0,f} \quad \text{and} \quad (\C h_0)_{f_2} = \sum_{f \in \cF} \C_{f_2, f} h_{0,f}.
\]
Therefore, 
\begin{align*}
    Cov\left( (\C h_0)_{f_1}, (\C h_0)_{f_2} \right) &= Cov\left( \sum_{f \in \cF} \C_{f_1, f} h_{0,f}, \sum_{f \in \cF} \C_{f_2, f} h_{0,f} \right) \\
    &= \sum_{f \in \cF} \sum_{f' \in \cF} (h_{0,f}\cdot h_{0,f'}) \cdot Cov(\C_{f_1,f}, \C_{f_2,f'})
\end{align*}
We now argue case by case: 
\begin{itemize}
    \item $f, f' \in \cF_{out}$: In this case, $Cov(\C_{f_1,f}, \C_{f_2,f'}) = 0$ because there can be no edges from either $f_1$ or $f_2$ to $f$ or $f'$ since all of them are nodes in $\cF_{out}$. 
    \item $f \in \cF_{out}, f' \in \cF_{in}$: In this case, $\C_{f_1,f} = 0$ by the same argument as above. Therefore, the covariance must be $0$.
    \item $f' \in \cF_{out}, f \in \cF_{in}$: In this case, $\C_{f_2,f'} = 0$ which makes the covariance $0$.
    \item $f \in \cF_{in}, f' \in \cF_{in}$: Finally, if both $f$ and $f'$ are in $\cF_{in}$, there can be edges from $f_1$ and $f_2$ towards $f$ and $f'$. But those edges are disjoint and therefore, independent which makes the covariance term $0$.
\end{itemize}
This concludes the proof.

\section{Supplementary Proofs}\label{sec:app_sup}

\subsection{Proof of Observation~\ref{obs:comoute_contri}}
The key step to complete the proof is to show that $A_{ij}^k$ captures the influence exerted by feature $i$ on feature $j$ through a directed path on the graph that is exactly $k$ hops long. We will prove by induction. 

\paragraph{Base case ($k = 0$):} When $k = 0$, there exists no directed path from feature $i$ to feature $j$ unless $i = j$. Therefore, all off-diagonal entries are $0$. The only entries appear on the diagonal because feature $i$ affects itself with a unit positive multiplier. This gives us the identity matrix in $|\cF|$ dimensions which is exactly given by $A^0$. 

\paragraph{General case:} Suppose that the induction hypothesis holds for some $k > 1$. We will now show that it also holds for $k+1$. Note that: 
\[
       A_{ij}^{k+1} = \sum_{n=1}^{|\cF|} A_{in}^{k}\cdot A_{nj}.
\]
Since the induction hypothesis is true, $A_{in}^k$ captures the influence exerted by feature $i$ on feature $n$ through a directed path exactly $k$ hops long. $A_{nj}$ represents the direct influence exerted by feature $n$ on feature $j$ (in exactly $1$ hop). Therefore, the product measures the influence of feature $i$ on feature $j$ exerted on a directed path $k+1$ hops long. The sum over all features in $\cF$ captures all such directed paths from $i$ to $j$. Thus, our induction hypothesis is also true for $k+1$. 

Finally, to compute $\C$, we need to sum the influences of directed paths of all lengths starting at node $i$ and ending in node $j$. Since $\cG$ is a directed acyclic graph with $|\cF|$ nodes, the length of the maximum directed path from $i$ to $j$ is at most $|\cF|-1$ hops long or conservatively $|\cF|$ hops long (note that if there are no directed paths of length $k$ from $i$ to $j$, $A_{ij}^k = 0$. So, it does not hurt to be conservative). This leads to the final expression of $\C$: 
\[
      \C = \sum_{k=0}^{|\cF|} A^k.
\]
To conclude the proof, we need to argue about the time complexity of computing $\C$, given matrix $A$. Multiplying $2$ matrices of size $|\cF| \times |\cF|$ takes $O(|\cF|^3)$ time and we need to execute $O(|\cF|)$ such matrix multiplication steps to compute the different powers of $A$. Therefore, the overall time complexity is polynomial in $|\cF|$.   

\subsection{Proof of Supporting Result in Lemma~\ref{lem:comp_des_nonconvex}}
We made the following observation in our proof of Lemma~\ref{lem:comp_des_nonconvex}:
\begin{blob}
Let $X \in \R^n$ and $M \in \R^{n \times n}$. Define set $Y$ as follows: 
\[
        Y := \{y:~\exists~x \in X \quad \text{s.t.}\quad y = Mx\}
\]
When $M$ is full row-rank, set $X$ is convex if and only if set $Y$ is convex.
\end{blob}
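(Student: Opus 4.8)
The plan is to prove the equivalence by treating the two implications separately; the whole content is that a square matrix of full row rank is invertible, hence injective as a linear map, and injectivity is exactly what the nontrivial direction requires.

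First I would dispatch the \emph{only if} direction ($X$ convex $\Rightarrow$ $Y$ convex), which in fact holds for an arbitrary matrix $M$ and uses nothing but linearity. Given $y_1, y_2 \in Y$, choose $x_1, x_2 \in X$ with $y_i = M x_i$; for any $\lambda \in [0,1]$, convexity of $X$ gives $\lambda x_1 + (1-\lambda) x_2 \in X$, and linearity gives $M(\lambda x_1 + (1-\lambda) x_2) = \lambda y_1 + (1-\lambda) y_2$, so this point lies in $Y$. Hence $Y$ is convex.

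For the \emph{if} direction ($Y$ convex $\Rightarrow$ $X$ convex), I would take $x_1, x_2 \in X$ and $\lambda \in [0,1]$. Then $M x_1, M x_2 \in Y$, so convexity of $Y$ gives $M(\lambda x_1 + (1-\lambda) x_2) = \lambda M x_1 + (1-\lambda) M x_2 \in Y$, and by the definition of $Y$ there is some $x' \in X$ with $M x' = M(\lambda x_1 + (1-\lambda) x_2)$. This is precisely where full row rank enters: a square matrix of full row rank is invertible, hence injective, so $x' = \lambda x_1 + (1-\lambda) x_2$, which shows the convex combination lies in $X$. Thus $X$ is convex.

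The main obstacle — really the only point that needs care — is recognizing that the backward implication genuinely fails without injectivity of $M$: if $M$ had a nontrivial kernel, the image of a convex combination of points of $X$ could land in $Y$ via some unrelated preimage, so $X$ need not be convex. I would close by remarking that in the application the result is invoked with $M = \C = \sum_{k=0}^{|\cF|} A^k$, which in a topological ordering of the DAG is upper triangular with all diagonal entries equal to $1$, hence invertible, so the full-row-rank hypothesis holds automatically there.
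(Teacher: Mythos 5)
Your proof is correct and follows essentially the same route as the paper's: linearity of $M$ gives the forward direction, and the backward direction hinges on invertibility of the square full-row-rank matrix $M$ (you phrase it via injectivity, the paper via $M^{-1}$ and uniqueness of preimages, which is the same idea). Your closing remark that $\C=\sum_{k=0}^{|\cF|}A^k$ is automatically invertible for a DAG is a correct and pleasant bonus, though not required by the statement.
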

\noindent
We provide a formal proof here. We need to show both directions. 

($\implies$) Suppose, set $X$ is convex. We need to show that set $Y$ is convex. Let $y_1, y_2 \in Y$ such that $y_1 \neq y_2$. Pick any $\lambda \in [0,1]$. Then there must exist $x_1, x_2 \in X$ such that $y_1 = Mx_1$ and $y_2 = Mx_2$. Clearly $x_1 \neq x_2$. Since $X$ is a convex set, $\lambda x_1 + (1-\lambda)x_2 \in X$. This implies, 
\begin{align*}
    \lambda y_1 + (1-\lambda)y_2 &= \lambda Mx_1 + (1-\lambda)Mx_2\\
    &= M \left(\lambda x_1 + (1-\lambda)x_2 \right) \in Y. 
\end{align*}

($\impliedby$) For the other direction, we assume that set $Y$ is convex and we need to show that set $X$ is convex. Pick any two elements $x_1, x_2 \in X, x_1 \neq x_2$ and any $\lambda \in [0,1]$. Let $y_1 = Mx_1$ and $y_2 = Mx_2$. Clearly, $y_1, y_2 \in Y$ (by definition). Note that $y_1 \neq y_2$ (otherwise, we would have $Mx_1 = Mx_2$ which implies that $x_1 - x_2 \in \text{Nullspace}(M)$. But $\text{Nullspace}(M) = \emptyset$ as $M$ is full row-rank). Additionally, $\lambda y_1 + (1-\lambda)y_2 \in Y$ since $Y$ is a convex set. This implies, 
\begin{align*}
    \lambda x_1 + (1-\lambda)x_2 &= \lambda M^{-1}y_1 + (1-\lambda)M^{-1}y_2 \quad \text{($M^{-1}$ exists because $rank(M) = n$)}\\
    &= M^{-1} \left( \lambda y_1 + (1-\lambda)y_2 \right) \in X. 
\end{align*}
The last part follows from noting that $\lambda y_1 + (1-\lambda)y_2 \in Y$ and since $M$ is full row-rank, the pre-image of $\lambda y_1 + (1-\lambda)y_2$ must be unique. This concludes both directions of the proof.

\end{document}